\newtheorem{theorem}{Theorem}
\newtheorem{corollary}{Corollary}
\def\Tr{\mbox{Tr}}
\newcommand{\del}[1]{{\iffalse #1 \fi}}
\newcommand{\delete}[1]{{\color{red}\sout{#1}}}
\def\draft{1}
\begin{document}
\title{Classical algorithms for estimating expectation values in linear-optical circuits}

\author{Youngrong Lim}
\affiliation{Department of Physics, Chungbuk National University, Cheongju, Chungbuk 28644, Korea}
\affiliation{School of Computational Sciences, Korea Institute for Advanced Study, Seoul 02455, Korea}
\author{Changhun Oh}
\email{changhun0218@gmail.com}
\affiliation{Department of Physics, Korea Advanced Institute of Science and Technology, Daejeon 34141, Korea}

\begin{abstract}
We present a classical algorithm for approximating the expectation values of observables in linear-optical circuits with arbitrary product input states, achieving additive-error accuracy.
This result indicates that current applications of photonic systems aimed at demonstrating practical quantum supremacy through expectation value estimation, such as photonic variational algorithms, may face challenges in attaining the computational advantage.
It also implies the output probabilities of boson sampling with arbitrary product input states can be efficiently approximated by our method, resulting that boson sampling becomes efficiently simulable when its output probability distribution is polynomially sparse.
We also develop an efficient classical algorithm for estimating transition amplitudes of arbitrary product states in linear-optical circuits. This provides additive-error approximation algorithms for matrix functions associated with linear-optical circuits, such as the (loop-)hafnian, which are of independent interest.
As an application, it solves the generalized molecular vibronic spectra problem~(Oh et al., 2024), previously suggested as a candidate for practical quantum advantage. Finally, we extend our framework to near-Clifford circuits, enabling classical approximation of their expectation values.
\end{abstract}
\maketitle

\textit{Introduction.---}
Linear-optical system has been considered as a promising platform for universal quantum computing~\cite{knill2001scheme} and also for demonstrating the quantum computational advantage in the near future since the seminal work on boson sampling~\cite{aaronson2011computational}. 
The boson sampling proposal~\cite{aaronson2011computational, hamilton2017gaussian, deshpande2022quantum} motivated various experiments to demonstrate the quantum advantage~\cite{zhong2020quantum,zhong2021phase,madsen2022quantum,deng2023gaussian,young2024atomic} while numerous classical algorithms have also been developed to challenge them, leading to significant advancements in understanding the computational complexity of linear-optical circuits~\cite{neville2017classical,clifford2018classical,oszmaniec2018classical,garcia2019simulating,qi2020regimes,oh2021classical,quesada2022quadratic,bulmer2022boundary,oh2022classical,oh2023classical,liu2023simulating,oh2024classical, oh2025recent, oh2025classical}.

Recently, the focus has shifted from demonstrating hardness through sampling to identifying {\it practically motivated} tasks that reflect a {\it practical} form of quantum advantage.
A common viewpoint is that near-term quantum devices aim to estimate expectation values of observables with controlled additive accuracy, a task believed to be classically hard~\cite{bravyi2021classical, daley2022practical,kim2023evidence,trivedi2024quantum,gonthier2022measurements,huang2020predicting,google2025observation}.
Within linear-optics, several proposals adopt this estimation-based framework~\cite{huh2015boson,bromley2020applications,chabaud2021quantum,yin2024experimental,hoch2025quantum}, yet no provable advantage has been established.
A notable case is molecular vibronic spectra~(MVS): while Gaussian boson sampling enables an efficient photonic procedure~\cite{huh2015boson,hamilton2017gaussian,deshpande2022quantum, shen2018quantum,paesani2019generation,wang2020efficient}, recent work shows that a classical algorithm achieves comparable performance~\cite{oh2024mvs}, because finite sampling enforces an additive-error tolerance, ruling out an advantage for MVS in this setting.

The essential issue, therefore, is precision: quantum devices estimate expectation values from finite samples, making {\it additive-error} approximations intrinsic to measurement~(see SM Sec.~S2~\cite{SM}). 
This immediately opens room for classical counterparts operating under the same additive tolerance. 
Accordingly, a rigorous assessment of practical quantum advantage requires identifying the classical limit of efficient estimability for the same observables within an additive-error tolerance. 
Although additive-error estimators for expectation values have been analyzed in other architectures~\cite{bravyi2021classical,beguvsic2023simulating,kim2023evidence,tindall2024efficient,bravyi2024classical,angrisani2024classically}, to the best of our knowledge no such provable algorithm existed for linear-optical circuits.

\begin{figure*}[t]
\includegraphics[width=\textwidth]{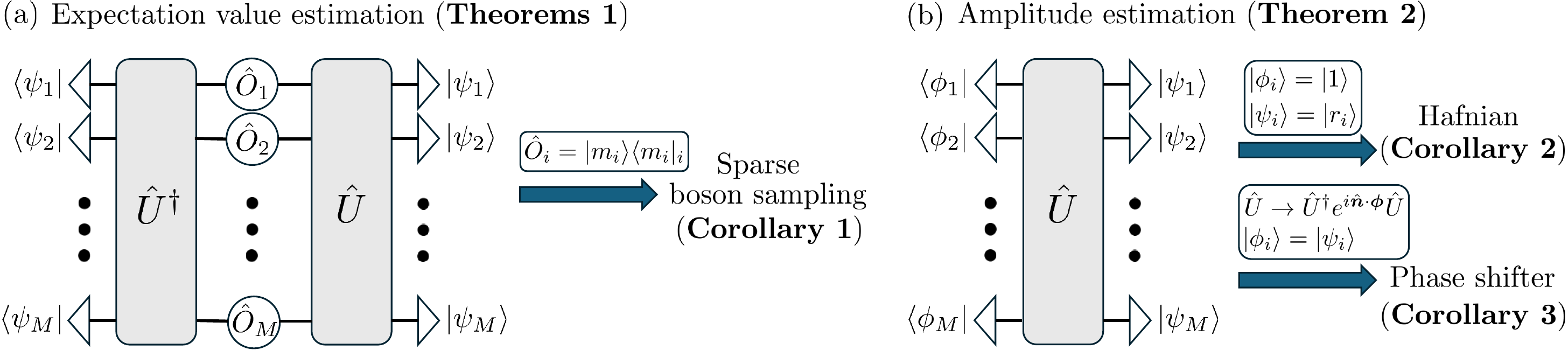}
\caption{Summary of classical algorithms introduced in the main text. (a) A schematic of the general expectation value of an $M$-mode linear-optical circuit (Theorem~\ref{th:exp}). If $\hat{O}_i$ are photon-number projectors $|m_i\rangle \langle m_i|_i$, the expectation values represent marginal output probabilities of a boson sampling circuit, resulting in an efficient simulation of a sparse boson sampling (Corollary~\ref{co:sparse}). (b) A transition amplitude of a linear-optical circuit~(Theorem~\ref{th:ggurvits}). When $|\phi_i \rangle$'s are single-photon Fock states and $|\psi_i\rangle$'s are squeezed vacuum states, the amplitude is proportional to the hafnian of a general complex symmetric matrix (Corollary~\ref{co:haf}). A transition amplitude is reduced to a phase shifter expectation value, Eq.~\eqref{eq:fourier}, with $|\phi_i\rangle=|\psi_i\rangle$ and $\hat{U} \rightarrow \hat{U}^\dagger e^{i\hat{\bm{n}}\cdot \bm{\phi}}\hat{U}$ (Corollary~\ref{co:LCBS}).}
\label{fig:scheme}
\end{figure*}

In this work, we present a classical algorithm for approximating expectation values in linear-optical circuits within additive-error.
When the observable ${\hat O=\hat O_A\otimes\mathbb{1}_B}$ has polynomially bounded Hilbert–Schmidt norm, $\|\hat O_A\|_2^2=O(\mathrm{poly}(M))$, the algorithm is efficient for arbitrary product input states.
This dequantizes a broad class of estimation tasks relevant to NISQ photonic computing that align with the framework of practical quantum advantage, including photonic variational quantum algorithms~(VQAs) where each term of the target Hamiltonian is a local observable~\cite{peruzzo2014variational,pappalardo2024photonic,hoch2024variational,facelli2024exact,baldazzi2025four,maring2024versatile}.
The same framework efficiently approximates (marginal) output probabilities of boson-sampling circuits with arbitrary product inputs.
Hence, any boson sampling is classically simulable if its output distribution is polynomially sparse~\cite{schwarz2013simulating}, generalizing earlier restricted results~\cite{roga2020classical,kolarovszki2023simulating,lim2023approximating}.
Consequently, in the polynomially sparse regime, peaks of inverse-polynomial probability can be classically identified, tempering peak-based proposals of verifiable quantum advantage based on linear optics~\cite{aaronson2022much,aaronson2024verifiable,kushilevitz1991learning,schwarz2013simulating}.


We further develop an efficient quantum-optics-inspired algorithm for transition amplitudes in linear-optical circuits with arbitrary product inputs.
While the single-photon Fock case involves permanents, classically estimable via Gurvits' algorithm~\cite{gurvits2005complexity,aaronson2011computational}, our analysis follows a distinct route with comparable guarantees and extends to quantum-inspired estimators for other matrix functions, including the (loop-)hafnian~\cite{barvinok2016combinatorics,hamilton2017gaussian,quesada2019franck}.
It also provides a complete classical solution to the generalized MVS problem~\cite{oh2024mvs} and to efficient estimation of binned output probabilities used for boson-sampling verification~\cite{drummond2022simulating,singh2023proof,seron2024efficient,anguita2025experimental}.

Finally, we apply our method to near-Clifford circuits~\cite{bravyi2016improved, bravyi2019simulation} and show that we can efficiently estimate the expectation values of observables and any (marginal) output probabilities for these circuits within an additive-error.

\textit{Approximation of expectation values.---}
Let $| \psi \rangle=\otimes_{i=1}^M |\psi_i \rangle$ be a product input state in an $M$-mode system, and $\hat{U}$ be a quantum circuit applied to the input state.
For the output state of the circuit, $\hat{U}|\psi\rangle$, we consider the expectation value of a product operator $\hat{O}=\otimes_{i=1}^M\hat{O}_i$ (see Fig.~\ref{fig:scheme}(a)):
\begin{align}\label{eq:amp}
    \langle \psi | \hat{U}^\dagger \hat{O} \hat{U} | \psi \rangle.
\end{align}
Throughout this work, input states $|\psi\rangle$ and operators $\hat{O}$ are assumed in a product form unless stated otherwise, which is practically relevant in various applications.
When an operator acts nontrivially only on a subsystem $A$ and trivially on the rest of the system $B$, we write the operator as $\hat{O}=\hat{O}_A\otimes \hat{\mathbb{1}}_B=\bigotimes_{i\in A}\hat{O}_{i}\otimes \bigotimes_{i\in B}\hat{\mathbb{1}}_{i}$ and define the reduced density matrix $\hat{\rho}_A\equiv \Tr_B[\hat{U}|\psi\rangle\langle \psi|\hat{U}^\dagger]$.

We first focus on a linear-optical circuit $\hat{U}$.
Our first main result is a classical algorithm that approximates the expectation values of operators in a linear-optical circuit:
\begin{theorem}[Expectation-value approximation in linear-optical circuit]\label{th:exp}
    Consider an $M$-mode linear-optical circuit $\hat{U}$ and an operator $\hat{O}$.
    The expectation values $\langle \psi | \hat{U}^\dagger \hat{O} \hat{U} | \psi \rangle$ can be approximated within additive-error $\epsilon$ with probability $1-\delta$ in running time $O(M^2 \lVert\hat{O}_{A}\rVert^2_2\Tr[\hat{\rho}_{A}^2] \log(1/\delta)/\epsilon^2)$, where $\|\hat{O}_A\|_2^2=\Tr[\hat{O}_A^\dagger\hat{O}_A]$.
\end{theorem}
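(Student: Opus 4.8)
The plan is to rewrite the target as a phase‑space integral against the characteristic function of $\hat O^{(k)}$ restricted to its support, and then to estimate that integral by importance‑sampled Monte Carlo; the variance of the estimator will be exactly the product of squared Hilbert--Schmidt norms appearing in the statement. First I would pass to the support of the observable. Write $A_k$ for the set of modes on which $\hat O^{(k)}$ acts nontrivially, $M_k\equiv|A_k|$, and $\hat O^{(k)}=\hat O^{(k)}_{A_k}\otimes\hat{\mathbb{1}}$; tracing out the complement, $\langle\psi|\hat U^\dagger\hat O^{(k)}\hat U|\psi\rangle=\Tr_{A_k}[\hat O^{(k)}_{A_k}\hat\rho_{A_k}]$, where $\hat\rho_{A_k}$ is the reduced density operator on $A_k$ of the post‑circuit state $\hat U|\psi\rangle$ (the object denoted $\hat\rho_{A_k}$ in the statement). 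Applying the displacement completeness relation~\eqref{eq:dis_expand} together with the attendant Hilbert--Schmidt overlap formula on the $M_k$‑mode subsystem $A_k$ turns this into $\frac{1}{\pi^{M_k}}\int d^{2M_k}\beta\,\chi_{\hat O^{(k)}_{A_k}}(\bm{\beta})\,\chi^*_{\hat\rho_{A_k}}(\bm{\beta})$. The structural point is that the circuit acts only by an isometric change of the phase‑space argument: letting $\iota_{A_k}:\mathbb{C}^{M_k}\hookrightarrow\mathbb{C}^{M}$ pad with zeros, one has $\hat D_{A_k}(\bm{\beta})\otimes\hat{\mathbb{1}}=\hat D(\iota_{A_k}\bm{\beta})$, so by~\eqref{eq:dis}, $\chi_{\hat\rho_{A_k}}(\bm{\beta})=\langle\psi|\hat U^\dagger\hat D(\iota_{A_k}\bm{\beta})\hat U|\psi\rangle=\langle\psi|\hat D(U^\dagger\iota_{A_k}\bm{\beta})|\psi\rangle=\chi_{|\psi\rangle\langle\psi|}(U^\dagger\iota_{A_k}\bm{\beta})$. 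Because $|\psi\rangle$ is a product state, its characteristic function factorizes over modes, so the last expression is evaluated in $O(M)$ time once the vector $\bm{\alpha}=U^\dagger\iota_{A_k}\bm{\beta}\in\mathbb{C}^{M}$ has been formed, a single $M\times M_k$ matrix--vector product costing $O(M^2)$.

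Next I would build the estimator. Take the sampling density $q(\bm{\beta})=|\chi_{\hat O^{(k)}_{A_k}}(\bm{\beta})|^2/(\pi^{M_k}\|\hat O^{(k)}_{A_k}\|_2^2)$, which is normalized because the overlap formula applied to $\hat A=\hat B=\hat O^{(k)}_{A_k}$ gives $\frac{1}{\pi^{M_k}}\int d^{2M_k}\beta\,|\chi_{\hat O^{(k)}_{A_k}}(\bm{\beta})|^2=\Tr[(\hat O^{(k)}_{A_k})^2]=\|\hat O^{(k)}_{A_k}\|_2^2$. Then $X_k\equiv\|\hat O^{(k)}_{A_k}\|_2^2\,\chi^*_{\hat\rho_{A_k}}(\bm{\beta})/\chi^*_{\hat O^{(k)}_{A_k}}(\bm{\beta})$ with $\bm{\beta}\sim q$ is unbiased for the target (since $|\chi|^2/\chi^*=\chi$), and the same identity applied now to $\hat\rho_{A_k}$ yields the crucial second‑moment bound
\begin{equation}
\mathbb{E}\big[|X_k|^2\big]\le\|\hat O^{(k)}_{A_k}\|_2^2\cdot\frac{1}{\pi^{M_k}}\int d^{2M_k}\beta\,|\chi_{\hat\rho_{A_k}}(\bm{\beta})|^2=\|\hat O^{(k)}_{A_k}\|_2^2\,\|\hat\rho_{A_k}\|_2^2 .
\end{equation}
Taking the real part (the target is real) and applying a standard bounded‑variance concentration argument --- partition $n$ i.i.d.\ samples into $O(\log(K/\delta))$ batches, average within each batch, output the median --- gives additive error $\epsilon$ with failure probability $\delta/K$ using $n=O(\|\hat O^{(k)}_{A_k}\|_2^2\|\hat\rho_{A_k}\|_2^2\log(K/\delta)/\epsilon^2)$ samples; a union bound over the $K$ observables makes every estimate good with probability $1-\delta$. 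Since each sample costs $O(M^2)$ (form $\bm{\alpha}$, evaluate the two characteristic functions, draw from $q$), multiplying through and bounding $n$ by its maximum over $k$ gives the stated running time.

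The main obstacle, and the step carrying the real content, is the second‑moment bound above: it rests on (i) the transformation rule~\eqref{eq:dis}, which lets the circuit act purely by an isometric change of phase‑space variables and so preserves the product structure of $\chi_{|\psi\rangle\langle\psi|}$, and (ii) the $L^2$ isometry between operators and characteristic functions on the subsystem $A_k$, which converts the importance‑sampling variance into the clean product $\|\hat O^{(k)}_{A_k}\|_2^2\|\hat\rho_{A_k}\|_2^2$ with no cross terms. A secondary technical point is that one must be able to sample from $q\propto|\chi_{\hat O^{(k)}_{A_k}}|^2$ and to evaluate $\chi_{\hat O^{(k)}_{A_k}}$ itself; this is immediate when $\hat O^{(k)}$ acts on $O(1)$ modes, or more generally whenever its characteristic function is efficiently describable (e.g.\ a Gaussian times a polynomial, as for Fock‑diagonal observables), and in that regime the per‑sample cost is dominated by the $O(M^2)$ multiplication $U^\dagger\iota_{A_k}\bm{\beta}$.
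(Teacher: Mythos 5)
Your proposal is correct and follows essentially the same route as the paper's proof: the same displacement-operator expansion on the support $A_k$, the same importance-sampling density $q\propto|\chi_{\hat O_{A_k}}|^2$ normalized by the Hilbert--Schmidt isometry, the same random variable with second moment $\|\hat O_{A_k}\|_2^2\|\hat\rho_{A_k}\|_2^2$, the evaluation of $\chi_{\hat\rho_{A_k}}$ via the $O(M^2)$ map $\bm{\beta}\mapsto U^\dagger\iota_{A_k}\bm{\beta}$ and the product structure of $|\psi\rangle$, and median-of-means plus a union bound over the $K$ observables. Your closing remark on when $q$ is efficiently samplable is a useful clarification of a hypothesis the paper leaves implicit (it assumes $\hat O_{A}$ factorizes modewise), but it does not change the argument.
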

\begin{proof}[Proof sketch]
    (See SM Sec.~S3~\cite{SM} for the full proof.)
    Assume that an operator $\hat{O}$ acts nontrivially on an $l$-mode subsystem $A$.
    Using the displacement operator expansion of the operator, $\hat{O}_A=\pi^{-l} \int d^{2l}\alpha \chi_{\hat{O}_A}(\bm{\alpha})\hat{D}^\dagger(\bm{\alpha})$~\cite{cahill1969density, ferraro2005gaussian, serafini2017quantum, SM}, where $\hat{D}(\bm{\alpha})$ is a displacement operator and $\chi_{\hat{Q}}(\bm{\alpha})\equiv \Tr[\hat{Q}\hat{D}(\bm{\alpha})]$ is the characteristic function of the operator~$\hat{Q}$, we rewrite the expectation value as
    \begin{align}
        \langle \psi|\hat{U}^\dagger \hat{O}\hat{U}|\psi\rangle 
        &=\int d^{2l}\alpha \frac{|\chi_{\hat{O}_A}(\bm{\alpha})|^2}{\pi^l\|\hat{O}_A\|_2^2}\frac{\|\hat{O}_A\|_2^2\chi_{\hat{\rho}_A}^*(\bm{\alpha})}{\chi^*_{\hat{O}_A}(\bm{\alpha})},
    \end{align}
    where $q(\bm{\alpha})\equiv |\chi_{\hat{O}_A}(\bm{\alpha})|^2/(\pi^l\|\hat{O}_A\|_2^2)$ is normalized to~1 and thus a proper probability distribution of~$\bm{\alpha}$.
    Thus, the expression implies that by sampling $\bm{\alpha}$ from $q(\bm{\alpha})$ and averaging the random variable $X(\bm{\alpha})\equiv \|\hat{O}_A\|_2^2\chi_{\hat{\rho}_A}^*(\bm{\alpha})/\chi^*_{\hat{O}_A}(\bm{\alpha})$ over the samples, we obtain an estimate of the expectation value. Thanks to the tensor-product structure of the state $\hat{\rho}_A$ and the operator $\hat{O}_A$, this process can be performed efficiently~\cite{mari2012positive,rahimi2016sufficient,footnote1}.
    
    We then show that the median-of-means estimator~\cite{jerrum1986random} yields an estimate with additive-error $\epsilon$ with probability $1-\delta$ when $O(\|\hat{O}_A\|_2^2\Tr[\hat{\rho}_{A}^2]\log\delta^{-1}/\epsilon^2)$ number of samples is given, by showing that the variance of the random variable $X(\bm{\alpha})$ is upper-bounded by $\|\hat{O}_A\|_2^2\Tr[\hat{\rho}_{A}^2]$.
    Note that the median-of-means estimator for complex random variables can be constructed by combining the real and imaginary parts estimators~\cite{SM}.
    Since it takes $O(M^2)$ time for computing  random variable $X(\bm{\alpha})$, the total running time is $O(M^2\lVert\hat{O}_{A}\rVert^2_2\Tr[\hat{\rho}_{A}^2] \log(1/\delta)/\epsilon^2)$.
\end{proof}
Therefore, we achieve an efficient additive estimation of the expectation value in polynomial time when $\lVert\hat{O}_A\rVert^2_2=O(\text{poly}(M))$.
Such a condition holds whenever $\hat{O}_A$ has $O(\text{poly}(M))$ rank with polynomially large singular values. We present two important classes of observables that satisfy the condition.

A first important class arises when $\hat{O}_A$ is \emph{local} observables, directly relevant to photonic VQAs implemented by linear-optical circuits~\cite{peruzzo2014variational,pappalardo2024photonic,hoch2024variational,facelli2024exact,baldazzi2025four,shang2024boson,maring2024versatile}.
In these schemes, the physically relevant target Hamiltonian decomposes into few-mode local terms,
\begin{align}
    \hat{H} = \sum_i \alpha_i \hat{H}_i + \sum_{i<j} \beta_{ij} \hat{H}_{ij} + \cdots,
\end{align}
and each $\hat{H}_i,\hat{H}_{ij},\dots $ acts on a small number of modes with bounded Hilbert–Schmidt norm as $O(1)$.
Our algorithm efficiently estimates the contribution of each local term, and the total energy follows by summation.
The same bounded-norm observables also appear as the primary measurement targets in other NISQ paradigms, such as analog and digital quantum simulators~\cite{ebadi2021quantum,bernien2025coarsening,zhang2022fermi,deb2022rabi,rahman2022digital,li2024ising,smith2019simulating}, where physical quantities, such as magnetizations, correlations, or energy densities, are expressed as few-body expectation values.
Hence, the condition $\|\hat{O}_A\|_2^2=O(\mathrm{poly}(M))$ precisely characterizes the physically relevant observables not only in photonic VQAs but broadly across NISQ devices, delineating a classically simulable regime for expectation-value estimation in linear optics~(See SM Sec.~S2 for more details~\cite{SM}.).

A second important class consists of observables where $\hat{O}_A$ is a rank-one projector of product form. 
This case is directly tied to projective measurements on the output of a linear-optical circuit.
A notable example is a photon-number projector in boson sampling~\cite{aaronson2011computational}.
In this case, Theorem~\ref{th:exp} implies that the (marginal) output probabilities of boson sampling with {\it arbitrary} input states can be efficiently approximated.
To the best of our knowledge, such a result has only been known for the Fock state input~\cite{aaronson2011computational, singh2023proof, oh2024mvs}.
Remarkably, this finding also enables an efficient simulation of boson sampling with \emph{arbitrary} product input if polynomially many outcomes dominate the output probabilities~\cite{kushilevitz1991learning, schwarz2013simulating,pashayan2020estimation}. (Note that sparsity of the output distribution can be determined. See SM Sec.~S9 for more details~\cite{SM}.)
\begin{corollary}[Sparse boson sampling]\label{co:sparse}
    Boson sampling with an arbitrary product input state can be efficiently simulated if its output probability distribution is polynomially sparse. 
\end{corollary}

In addition, note that quantum sampling with a peaked output probability distribution, i.e., one that has an outcome with an inverse-polynomially large probability, has drawn attention for a verifiable quantum advantage of sampling~\cite{aaronson2022much, aaronson2024verifiable} because the peak can serve as a verification. 
However, our result, together with the algorithm in Refs.~\cite{kushilevitz1991learning, schwarz2013simulating}, indicates that the peak of the output probability distribution in boson sampling with any product input can be efficiently identified using classical computers.
We emphasize that Corollary~\ref{co:sparse} is valid for a general case regardless of phase space negativity~\cite{pashayan2015estimating,lim2023approximating} or stellar rank~\cite{chabaud2023resources} of the input state, resources for quantum advantages. 

\textit{Approximation of transition amplitudes.---}
We now present another classical algorithm approximating a transition amplitude $\langle \phi|\hat{U}|\psi\rangle$ (See Fig.~\ref{fig:scheme}~(b)).
Note that when $|\phi\rangle$ and $|\psi\rangle$ are single-photon Fock states, the amplitude becomes a permanent~\cite{aaronson2011computational}, which can be efficiently approximated by Gurvits' algorithm~\cite{gurvits2005complexity}.
In contrast, our algorithm works for arbitrary product inputs.
\begin{theorem}[Transition amplitude approximation in linear-optical circuit]\label{th:ggurvits}
    The transition amplitude $\langle \phi|\hat{U}|\psi\rangle$ of a linear-optical circuit $\hat{U}$ with arbitrary states $|\psi\rangle$ and $|\phi\rangle$ can be approximated within additive-error $\epsilon$ with probability $1 - \delta$ in running time $O(M^2 \log \delta^{-1}/\epsilon^2)$.
\end{theorem}
\begin{proof}[Proof Sketch]
    We rewrite the amplitude by exploiting the completeness relation $\hat{\mathbb{1}}_{1:M}=\int d^{2M}\alpha|\bm{\alpha}\rangle \langle \bm{\alpha}|/\pi^M$:
    \begin{align}
        \langle \phi|\hat{U}|\psi\rangle
        &=\int d^{2M}\alpha \frac{\langle \phi|\hat{U}|\bm{\alpha}\rangle}{\langle\bm{\alpha}|\psi\rangle^*}\frac{|\langle \bm{\alpha}|\psi\rangle|^2}{\pi^M},
    \end{align}
    where $|\bm{\alpha}\rangle$ represents a coherent state.
    Thus, if we sample $\bm{\alpha}$ from $|\langle \bm{\alpha}|\psi\rangle|^2/\pi^M$, which is a proper probability distribution (Husimi Q-distribution of the state $|\psi \rangle$~\cite{husimi1940some}) and easy to sample from, and average $\langle \phi|\hat{U}|\bm{\alpha}\rangle/\langle \bm{\alpha}|\psi\rangle^*$ over the samples, we obtain an estimate of the amplitude.
    We prove that this procedure is efficient and gives an estimate with $\epsilon$ additive-error with probability $1-\delta$ in running time $O(M^2\log\delta^{-1}/\epsilon^2)$ in SM Sec.~S4~\cite{SM}.
\end{proof}

Note that the original Gurvits' algorithm applies only when $|\psi\rangle=|\bm{m}\rangle$, $|\phi\rangle=|\bm{n}\rangle$ with $\bm{n},\bm{m}\in \{0,1\}^M$, and its variants~\cite{aaronson2012generalizing, oh2024mvs} apply when $\bm{m}\in \{0,1\}^M$ or $\bm{n}=\bm{m}$, corresponding to approximating the permanent of matrices with repeated rows and columns in a certain pattern.
Notably, our algorithm not only extends to computing the permanent of matrices with repeated rows and columns in a general pattern, but also introduces a new approach to approximating the permanent using a quantum-optics-inspired method.

Our algorithm can also approximate the hafnian~\cite{barvinok2016combinatorics}, a generalization of the permanent, by using the fact that the hafnian of any $M \times M$ complex symmetric matrix $R$ ($M$ is even) with the largest singular value $\lambda_\text{max}<1$ can be expressed as a transition amplitude of an $M$-mode Gaussian boson sampling circuit~\cite{hamilton2017gaussian,quesada2019franck}:
\begin{align}\label{eq:haf}
    \text{haf}(R)=\frac{1}{Z^{1/2}}\langle \bm{1} | \hat{U} | \bm{r}\rangle,
\end{align}
where $|\bm{r}\rangle =\otimes_{i=1}^M |r_i\rangle$ denotes a product of squeezed vacuum states $| r_i \rangle$ of squeezing parameter $r_i$, $|\bm{1}\rangle=\otimes_{i=1}^M |1 \rangle$ is $M$ single-photon states, and $Z\equiv \prod_{i=1}^M\cosh{r_i}=\prod_{i=1}^M(1-\lambda_i^{ 2})^{-\frac{1}{2}}$ with $R$'s singular values $\lambda_i$.
Until now, efficient additive approximation algorithms have been known only for restricted cases~\cite{lim2023approximating,oh2024mvs}. Notably, our generalized Gurvits' algorithm leads to an additive approximation algorithm for the hafnian without any constraints, which can be seen by substituting $|\psi\rangle=|\bm{r}\rangle$, $|\phi\rangle=|\bm{1}\rangle$ in Theorem~\ref{th:ggurvits}:
\begin{corollary}[Hafnian approximation]\label{co:haf}
    The hafnian of an $M \times M$ complex symmetric matrix $R$ can be approximated within additive-error $\epsilon \norm{R}^{M/2}$ with probability $1-\delta$ in $O(M^3\log \delta^{-1}/\epsilon^2)$ time.
\end{corollary}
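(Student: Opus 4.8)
The plan is to realize the hafnian as a linear‑optical transition amplitude to which Theorem~2 applies, through the Gaussian‑boson‑sampling dictionary between symmetric matrices and squeezed states. Throughout I take $\norm{R}$ to be the Frobenius norm, so that $\norm{R}^{2}=\sum_i\sigma_i^{2}$ where $\sigma_1,\dots,\sigma_M$ are the singular values of $R$, and I assume $M$ even (else $\text{Haf}(R)=0$). First I would invoke the Autonne--Takagi factorization $R=W\Sigma W^{T}$ with $W$ unitary and $\Sigma=\mathrm{diag}(\sigma_1,\dots,\sigma_M)$, $\sigma_i\ge0$; this is an $O(M^{3})$‑time preprocessing step. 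Then I would rescale: set $\Lambda:=\norm{R}(1+1/M)$ and $\tilde R:=R/\Lambda$, so that $\tilde\sigma_i:=\sigma_i/\Lambda<1$ for every $i$ (since $\sigma_i\le\norm{R}$) and $\sum_i\tilde\sigma_i^{2}=(1+1/M)^{-2}$.

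Next I would build the amplitude. Let $|\psi\rangle=\bigotimes_{i=1}^{M}\hat{S}(r_i)|0\rangle$ be the product of single‑mode squeezed vacua with $\tanh r_i=\tilde\sigma_i$ (legitimate because $\tilde\sigma_i<1$), and let $|\bm{1}\rangle=|1,\dots,1\rangle$. Using $\hat{W}\hat{a}_i^{\dagger}\hat{W}^{\dagger}=\sum_j W_{ji}\hat{a}_j^{\dagger}$, the relation $\hat{S}(r)|0\rangle\propto\exp(\tfrac12\tanh r\,\hat{a}^{\dagger 2})|0\rangle$, $\hat{W}|0\rangle=|0\rangle$, and the generating‑function identity $\langle\bm{1}|\exp(\tfrac12\sum_{jk}B_{jk}\hat{a}_j^{\dagger}\hat{a}_k^{\dagger})|0\rangle=\text{Haf}(B)$ (only the off‑diagonal entries of $B$ contribute, since a doubly occupied mode is incompatible with $\langle\bm{1}|$), I would obtain
\[
\langle\bm{1}|\hat{W}|\psi\rangle=\Big(\prod_{i=1}^{M}(1-\tilde\sigma_i^{2})^{1/4}\Big)\,\text{Haf}\big(W\,\mathrm{diag}(\tilde\sigma_i)\,W^{T}\big)=\Big(\prod_{i=1}^{M}(1-\tilde\sigma_i^{2})^{1/4}\Big)\,\text{Haf}(\tilde R).
\]
Since $|\psi\rangle$ and $|\bm{1}\rangle$ are product states, Theorem~2 applies directly: its estimator has variance at most $1$, and the median‑of‑means estimate of $\langle\bm{1}|\hat{W}|\psi\rangle$ achieves additive error $\epsilon_a$ with probability $1-\delta$ in time $O(M^{2}\log\delta^{-1}/\epsilon_a^{2})$.

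It then remains to propagate the rescalings. From $\text{Haf}(R)=\Lambda^{M/2}\text{Haf}(\tilde R)=\Lambda^{M/2}\big(\prod_i(1-\tilde\sigma_i^{2})^{1/4}\big)^{-1}\langle\bm{1}|\hat{W}|\psi\rangle$, an additive error $\epsilon_a$ on the amplitude yields an additive error $\Lambda^{M/2}\big(\prod_i(1-\tilde\sigma_i^{2})^{1/4}\big)^{-1}\epsilon_a$ on $\text{Haf}(R)$. The decisive estimate is that the prefactor stays polynomially bounded: by $\prod_i(1-x_i)\ge 1-\sum_i x_i$ with $x_i=\tilde\sigma_i^{2}$,
\[
\prod_{i=1}^{M}(1-\tilde\sigma_i^{2})^{1/4}\ \ge\ \big(1-(1+1/M)^{-2}\big)^{1/4}\ \ge\ M^{-1/4}\qquad(M\ge 2),
\]
while $\Lambda^{M/2}=\norm{R}^{M/2}(1+1/M)^{M/2}\le e^{1/2}\norm{R}^{M/2}$. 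Hence the error on $\text{Haf}(R)$ is at most $e^{1/2}M^{1/4}\norm{R}^{M/2}\epsilon_a$, so choosing $\epsilon_a=\epsilon/(e^{1/2}M^{1/4})$ produces exactly the target error $\epsilon\norm{R}^{M/2}$; the sampling cost is $O(M^{2}\cdot M^{1/2}\log\delta^{-1}/\epsilon^{2})$, and adding the $O(M^{3})$ Takagi step gives the claimed $O(M^{3}\log\delta^{-1}/\epsilon^{2})$.

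I expect the heart of the argument to be precisely this prefactor bound. The amplitude carries a product of $M$ sub‑unit factors, and it is only the Frobenius‑norm rescaling — which controls $\sum_i\tilde\sigma_i^{2}$ rather than merely $\max_i\tilde\sigma_i$ — that keeps $\prod_i(1-\tilde\sigma_i^{2})^{1/4}$ at $1/\mathrm{poly}(M)$; with a spectral‑norm rescaling the prefactor can be as small as $2^{-\Theta(M)}$ (attained when all singular values coincide), and the method would need exponentially many samples. The remaining pieces — the hafnian generating‑function identity, the cost and existence of the Autonne--Takagi form, and the bookkeeping of sign/phase conventions in $\hat{S}(r)$ and in $R=W\Sigma W^{T}$ — are standard and I would treat them briefly.
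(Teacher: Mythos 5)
Your argument is correct and delivers the stated complexity, but it takes a genuinely different route from the paper's, and the difference is substantive. The paper also encodes $R$ into the amplitude $\langle \bm{1}|\hat{U}|\bm{r}\rangle$ via Takagi's decomposition and Theorem~2, but it rescales by the \emph{spectral} norm, $R'=R/(a\lambda_\text{max})$ with $a>1$, reading the corollary's $\norm{R}$ as $\lambda_\text{max}$. Its error analysis rests on writing $\text{haf}(R')=Z^{-1/2}\langle \bm{1}|\hat{U}|\bm{r}\rangle$ with $Z=\prod_i\cosh r_i$, so that the factor converting amplitude error into hafnian error is $(a\lambda_\text{max})^{M/2}Z^{-1/2}\leq (a\lambda_\text{max})^{M/2}$, and $a\to 1$ then gives $\epsilon\lambda_\text{max}^{M/2}$. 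Your derivation places the normalization on the other side, $\langle\bm{1}|\hat{W}|\psi\rangle=Z^{-1/2}\,\text{haf}(\tilde R)$, i.e.\ $\text{haf}(\tilde R)=Z^{1/2}\langle\bm{1}|\hat{W}|\psi\rangle$ --- and this is the standard direction, since the factor $(\cosh r)^{-1/2}$ normalizes $e^{\frac{1}{2}\tanh r\,\hat{a}^{\dagger 2}}|0\rangle$ \emph{inside} the state (equivalently, $p(\bm{1})=|\text{haf}|^2/\sqrt{\det\sigma_Q}$ with $\sqrt{\det\sigma_Q}=Z$). With this sign of the exponent, the conversion factor becomes $(a\lambda_\text{max})^{M/2}Z^{1/2}$, and under spectral rescaling with $a\to1$ the factor $Z^{1/2}=\prod_i(1-\tilde\sigma_i^2)^{-1/4}$ can be $2^{\Theta(M)}$ (all singular values equal) --- exactly the failure mode you flag. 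Your Frobenius rescaling, which forces $\sum_i\tilde\sigma_i^2<1$ and hence $Z^{1/2}\leq M^{1/4}$, is precisely what keeps the prefactor polynomial; that bound, absorbed into $\epsilon_a=\epsilon/(e^{1/2}M^{1/4})$, is the genuine added content of your proof, and the bookkeeping ($\prod_i(1-x_i)\geq 1-\sum_i x_i$, $(1+1/M)^{M/2}\leq e^{1/2}$, $O(M^3)$ Takagi preprocessing) all checks out.

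The one caveat is which norm the corollary asserts. You prove the statement with $\norm{R}$ the Frobenius norm, which is a strictly weaker guarantee than the spectral-norm reading the paper's proof adopts (since $\norm{R}_F\geq\lambda_\text{max}$, your additive-error bar is larger). If the spectral-norm version is the intended target, your argument does not reach it --- and, once the direction of the $Z^{\pm 1/2}$ factor is corrected, optimizing over $a$ in the paper's spectral rescaling still leaves a residual $c^{M/2}$ loss in the worst case --- so you should state explicitly that your result is the Frobenius-norm variant.
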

A detailed proof is provided in SM Sec.~S5~\cite{SM}. By a similar method, the loop hafnian of an arbitrary complex symmetric matrix~\cite{quesada2019franck} can also be efficiently approximated.

It is worth emphasizing that Theorem~\ref{th:ggurvits} can also be used to approximate the overlap of two quantum states generated by linear-optical circuits.
Such a quantity has attracted attention for machine learning applications of boson sampling, such as kernel method \cite{chabaud2021quantum, yin2024experimental, hoch2025quantum}.
More specifically, boson samplers are used to estimate the overlap of two quantum states corresponding to two different data vectors $\bm{x},\bm{y}$ as $|\langle \psi|\hat{U}^\dagger(\bm{x})\hat{U}(\bm{y})|\psi\rangle|^2$, where $\hat{U}(\bm{x})$ corresponds to the linear-optical circuits whose configuration depend on the data vector $\bm{x}$.
Thus, our classical method can replace boson samplers for this purpose.

\textit{Phase shifter.---}
We now consider a problem of approximating the expectation value of a phase shifter $e^{i\hat{\bm{n}}\cdot \bm{\phi}}$ of a linear-optical circuit~\cite{ivanov2020complexity}
\begin{equation}\label{eq:fourier}
    \langle \psi|\hat{U}^{\dagger}e^{i\hat{\bm{n}}\cdot \bm{\phi}}\hat{U}|\psi\rangle,
\end{equation}
where $\hat{\bm{n}}\equiv(\hat{n}_1,\dots,\hat{n}_M)$ and $\bm{\phi}\in\mathbb{R}^M$ represents the phase vector.
The expectation value of a phase shifter has various applications such as the MVS problem~\cite{huh2015boson, oh2024mvs} and approximating binned probability, proposed for verifying boson sampling experiments~\cite{drummond2022simulating,singh2023proof,seron2024efficient, anguita2025experimental}.

In particular, the MVS problem~\cite{huh2015boson} has been considered as a potential application of boson sampling, which is essentially equivalent to computing grouped output probabilities $p(\bm{m})$ of boson sampling~(See SM~Sec.~S4~\cite{SM} for more details):
\begin{align}\label{eq:group}
    G(\Omega)\equiv \sum_{\bm{m}=\bm{0}}^{\infty}p(\bm{m})\delta(\Omega-\bm{\omega}\cdot \bm{m}),
\end{align}
where $\bm{\omega} \in \mathbb{Z}^M_{\geq0}$ and $\Omega=\{0,\dots,\Omega_\text{max}\}$.
In Ref.~\cite{oh2024mvs}, however, the authors observe that the Fourier components of Eq.~\eqref{eq:group} are expressed as the expectation value of a phase shifter as Eq.~\eqref{eq:fourier}. Therefore, if Eq.~\eqref{eq:fourier} is additively approximated, the grouped probabilities Eq.~\eqref{eq:group} are reproduced within an additive-error through the inverse Fourier transform~\cite{SM}.


While Ref.~\cite{oh2024mvs} proves that this is the case when $|\psi\rangle$ is a product Gaussian state or a Fock state, leading to dequantization, for more general input states, like a product squeezed Fock state, a quantum advantage may still exist since no classical algorithm is known yet.
This possibility is particularly important because such an extension has recently been experimentally implemented~\cite{wang2020efficient}.
Notably, by setting $|\phi \rangle=|\psi\rangle$ and $\hat{U}\to \hat{U}^{\dagger}e^{i\hat{\bm{n}}\cdot \bm{\phi}}\hat{U}$ in Theorem~\ref{th:ggurvits}, our classical algorithm can approximate Eq.~\eqref{eq:fourier} for any product input state:
\begin{corollary}[Phase-shifter expectation value]\label{co:LCBS}
    Phase-shifter expectation value of a linear-optical circuit can be efficiently approximated for an arbitrary product input state.
\end{corollary}
Therefore, our algorithm enables efficient approximation of the grouped probabilities in Eq.~\eqref{eq:group} for arbitrary product input states and thus solves a generalized MVS problem for arbitrary product input states, an open problem posed in Ref.~\cite{oh2024mvs}.

\textit{Extending to qubit circuits.---}
Finally, we show that our classical algorithm in Theorem~\ref{th:exp} can be extended to qubit system, especially near-Clifford circuits $\hat{U}_\text{NC}$~\cite{bravyi2016improved, bravyi2019simulation}, i.e., Clifford circuit $\hat{U}_\text{C}$+logarithmic number of $T$ gates, with arbitrary product input states. It is known that both exact and approximate classical simulations of such circuits are computationally hard, assuming the non-collapse of the polynomial hierarchy~\cite{jozsa2013classical}.
Consequently, to investigate routes toward practical quantum advantage, our method provides a direct additive-error estimation scheme for the expectation value of an observable, $\langle \psi | \hat{U}^\dagger_{\text{NC}}\hat{O}\hat{U}_{\text{NC}} |\psi\rangle$ that does not rely on sampling from the output distribution, thereby bypassing the main bottleneck that underlies the known sampling hardness results.

Crucial properties used in Theorem~\ref{th:exp} are that a displacement operator is transformed into another displacement operator through linear-optical circuits and that any operator can be expanded in terms of displacement operators.
Similar properties hold for Clifford circuits where any operators can be decomposed using Pauli operators and a Pauli operator transforms another Pauli operator under Clifford circuits.
Thus, we obtain a similar result:
\begin{theorem}[Expectation value approximation in near-Clifford circuit]\label{th:nc}
    Consider an $n$-qubit near-Clifford circuit $\hat{U}_{\text{NC}}$ with depth $O(\text{poly}(n))$ and an operators $\hat{O}$.
    The expectation values $\langle \psi | \hat{U}^\dagger_{\text{NC}}\hat{O}\hat{U}_{\text{NC}} |\psi\rangle$ can be approximated within additive-error $\epsilon$ with probability $1-\delta$ in running time $O(\text{poly}(n)\lVert\hat{O}_{A}\rVert^2_2\Tr[\hat{\rho}_{A}^2] \log(1/\delta)/\epsilon^2)$.
\end{theorem}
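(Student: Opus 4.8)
The plan is to transcribe the proof of Theorem~1 into the qubit setting, replacing the displacement‑operator basis by the Pauli basis on $M$ qubits and replacing the covariance‑preserving action of a Gaussian unitary on displacements by the (nearly) Pauli‑preserving action of a near‑Clifford circuit. Let $A_k$ be the support of $\hat{O}^{(k)}$ with $l_k=|A_k|$. First I would expand the local observable in the Pauli basis, $\hat{O}^{(k)}_{A_k}=2^{-l_k}\sum_{\bm{P}}\chi_{\hat{O}^{(k)}_{A_k}}(\bm{P})\,\hat{P}$, where $\bm{P}$ ranges over the $4^{l_k}$ Pauli strings on $A_k$ and $\chi_{\hat{O}}(\bm{P})\equiv\Tr[\hat{P}\hat{O}]$ is the Pauli characteristic function. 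Using $\langle\psi|\hat{U}_{\text{NC}}^\dagger(\hat{P}\otimes\hat{\mathbb{1}}_B)\hat{U}_{\text{NC}}|\psi\rangle=\Tr[\hat{P}\hat{\rho}_{A_k}]\equiv\chi_{\hat{\rho}_{A_k}}(\bm{P})$ with $\hat{\rho}_{A_k}\equiv\Tr_B[\hat{U}_{\text{NC}}|\psi\rangle\langle\psi|\hat{U}_{\text{NC}}^\dagger]$, the expectation value becomes $\sum_{\bm{P}}q(\bm{P})X(\bm{P})$ with $q(\bm{P})\equiv|\chi_{\hat{O}_{A_k}}(\bm{P})|^2/(2^{l_k}\lVert\hat{O}_{A_k}\rVert_2^2)$ and $X(\bm{P})\equiv\lVert\hat{O}_{A_k}\rVert_2^2\,\chi^*_{\hat{\rho}_{A_k}}(\bm{P})/\chi^*_{\hat{O}_{A_k}}(\bm{P})$ (terms with $\chi_{\hat{O}_{A_k}}(\bm{P})=0$ have $q(\bm{P})=0$ and are never sampled). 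Parseval for the Pauli basis, $\sum_{\bm{P}}|\chi_{\hat{O}_{A_k}}(\bm{P})|^2=2^{l_k}\Tr[\hat{O}_{A_k}^\dagger\hat{O}_{A_k}]=2^{l_k}\lVert\hat{O}_{A_k}\rVert_2^2$, shows $q$ is a probability distribution; it factorizes across qubits when $\hat{O}_{A_k}$ is a product observable and is supported on $4^{l_k}=O(1)$ strings for local observables, so drawing a sample costs $O(\text{poly}(M))$.

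The nontrivial ingredient — and the step I expect to be the main obstacle — is computing $X(\bm{P})$, which requires evaluating $\chi_{\hat{\rho}_{A_k}}(\bm{P})=\langle\psi|\hat{U}_{\text{NC}}^\dagger\hat{P}'\hat{U}_{\text{NC}}|\psi\rangle$ for the $M$‑qubit Pauli $\hat{P}'=\hat{P}\otimes\hat{\mathbb{1}}_B$. Here I would use the near‑Clifford structure: write $\hat{U}_{\text{NC}}=\hat{C}_t\hat{V}_t\cdots\hat{C}_1\hat{V}_1\hat{C}_0$ with each $\hat{C}_j$ a Clifford layer of depth $O(\text{poly}(M))$ and each $\hat{V}_j$ one of the $t$ non‑Clifford gates, and propagate $\hat{P}'$ inward from the outermost layer. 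Each Clifford conjugation maps a signed Pauli to a signed Pauli and is computed in $O(\text{poly}(M))$ time via the symplectic (Gottesman--Knill) representation, while each non‑Clifford conjugation expands a Pauli into a linear combination of at most a constant number of Paulis; hence $\hat{U}_{\text{NC}}^\dagger\hat{P}'\hat{U}_{\text{NC}}=\sum_j c_j\hat{Q}_j$ with at most $2^{O(t)}$ Pauli terms. Since the input is a product state $|\psi\rangle=\bigotimes_{i=1}^M|\psi_i\rangle$, each $\langle\psi|\hat{Q}_j|\psi\rangle=\prod_{i=1}^M\langle\psi_i|(\hat{Q}_j)_i|\psi_i\rangle$ costs $O(M)$, so $\chi_{\hat{\rho}_{A_k}}(\bm{P})$, and thus $X(\bm{P})$, is evaluated in $O(\text{poly}(M))$ time under the near‑Clifford assumption $t=O(\log M)$ (so that $2^{O(t)}=\mathrm{poly}(M)$ is absorbed into the $\text{poly}(M)$ prefactor). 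The care needed is purely in tracking the signs/phases produced by Clifford conjugation and keeping the Pauli list polynomially short; the key structural fact, in contrast to a generic circuit, is that a near‑Clifford circuit maps a Pauli to a polynomially long Pauli sum whose terms are Heisenberg‑evolved stabilizer operators.

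Finally I would bound the variance exactly as in Theorem~1: applying Parseval again, $\text{Var}(X)\le\mathbb{E}_{\bm{P}}[|X(\bm{P})|^2]=\lVert\hat{O}_{A_k}\rVert_2^2\sum_{\bm{P}}|\chi_{\hat{\rho}_{A_k}}(\bm{P})|^2/2^{l_k}=\lVert\hat{O}_{A_k}\rVert_2^2\Tr[\hat{\rho}_{A_k}^2]=\lVert\hat{O}_{A_k}\rVert_2^2\lVert\hat{\rho}_{A_k}\rVert_2^2$. The median‑of‑means estimator with $N=O(\lVert\hat{O}_{A_k}\rVert_2^2\lVert\hat{\rho}_{A_k}\rVert_2^2\log\delta^{-1}/\epsilon^2)$ samples then achieves additive error $\epsilon$ with probability $1-\delta$ for a single observable, and a union bound over the $K$ observables upgrades $\log\delta^{-1}$ to $\log(K/\delta)$ and the prefactor to $\max_k$. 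Multiplying by the $O(\text{poly}(M))$ cost per sample yields the stated running time $O(\text{poly}(M)\max_k\lVert\hat{O}_{A_k}^{(k)}\rVert_2^2\lVert\hat{\rho}_{A_k}\rVert_2^2\log(K/\delta)/\epsilon^2)$.
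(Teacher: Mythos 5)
Your proposal is correct and follows essentially the same route as the paper's proof: Pauli-basis importance sampling with $q(\bm{a})\propto|\chi_{\hat{O}_A}(\bm{a})|^2$, Heisenberg propagation of the sampled Pauli through the circuit with at most constant-factor branching at each of the $t=O(\log M)$ non-Clifford gates (so the Pauli sum stays polynomially long), factorization of $\langle\psi|\hat{P}_{\bm{a}'}|\psi\rangle$ over the product input state, the Parseval variance bound $\|\hat{O}_A\|_2^2\|\hat{\rho}_A\|_2^2$, and median-of-means plus a union bound over the $K$ observables. The only cosmetic difference is that the paper first treats the purely Clifford case and then appends the explicit $T$-gate conjugation rules, whereas you interleave the Clifford layers and non-Clifford gates from the start.
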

\begin{proof}[Proof Sketch]
    The proof is similar to that of Theorem~\ref{th:exp} and provided in SM~Sec.~S6~\cite{SM}.
    Let us first focus on Clifford circuits.
    Using the Pauli operator expansion, $\hat{O}_A=2^{-l}\sum_{\bm{a}\in \{0,1,2,3\}^{l}}\chi_{\hat{O}_A}(\bm{a})\hat{P}_{\bm{a}}$, where $\chi_{\hat{Q}}(\bm{a})\equiv \Tr[\hat{Q}\hat{P}_{\bm{a}}]$ and $\hat{P}_{\bm{a}}\in \{I,X,Y,Z\}^l$ is a Pauli operator, the expectation value of $\hat{O}=\hat{O}_{A}\otimes \hat{\mathbb{1}}_{B}$ can be written as 
    \begin{align}
        \langle \psi | \hat{U}^\dagger_\text{C} \hat{O}\hat{U}_\text{C} | \psi 
        =\sum_{\bm{a} \in \{ 0,1,2,3\}^l}\frac{|\chi_{\hat{O}_A}(\bm{a})|^2}{2^l\|\hat{O}_A\|_2^2}\frac{\|\hat{O}_A\|_2^2\chi_{\hat{\rho}_A}(\bm{a})}{\chi^*_{\hat{O}_A}(\bm{a})}.
    \end{align} 
    Thus, by sampling $\bm{a}$ from $q(\bm{a})\equiv |\chi_{\hat{O}_A}(\bm{a})|^2/(2^l\|\hat{O}_A\|_2^2)$ and averaging the random variable $X(\bm{a})\equiv \|\hat{O}_A\|_2^2\chi_{\hat{\rho}_A}(\bm{a})/\chi^*_{\hat{O}_A}(\bm{a})$ over the samples, we obtain an estimate of the expectation value.
    By showing that the variance of the random variable $X(\bm{a})$ is bounded by $\lVert\hat{O}_{A}\rVert^2_2\Tr[\hat{\rho}_{A}^2]$, the median-of-means estimator gives the desired error presented in the theorem.

    Furthermore, our method is still valid for the additional logarithmic number of $T$-gates because a single Pauli operator is transformed into the sum of at most two Pauli operators after passing through each $T$-gate, that is, $\hat{T}^\dagger \hat{P}_0\hat{T}=\hat{P}_0,~\hat{T}^\dagger \hat{P}_1\hat{T}=(\hat{P}_1+\hat{P}_2)/\sqrt{2},~\hat{T}^\dagger \hat{P}_2\hat{T}=(\hat{P}_2-\hat{P}_1)/\sqrt{2},~\hat{T}^\dagger \hat{P}_3\hat{T}=\hat{P}_3$. If there are $t$ of $T$-gates in a near-Clifford circuit $\hat{U}_\text{NC}$, we have at most $n2^t$ Pauli operators to track the transformations.
Therefore, if $t=O(\log n)$, we can handle all the relevant Pauli operators in $O(\text{poly}(n))$ time. A detailed proof is given in SM Sec.~S7~\cite{SM}.
\end{proof}

Theorem~\ref{th:nc} implies that the expectation value of observables $\hat{O}$ with $\lVert \hat{O} \rVert_2^2=O(\text{poly}(n))$ of a near-Clifford circuit with an arbitrary product input state can be efficiently approximated. 
Especially for rank-1 projective measurements, this result implies that the magic resources of both the input state and measurement do not increase the complexity of additive-error-approximation.
Note that our result far extends the previous results: the expectation values of a near-Clifford circuit with stabilizer input or for Pauli observables~\cite{bravyi2016improved, bravyi2019simulation}, and output probabilities in the computational basis~\cite{aaronson2004improved,bravyi2019simulation,pashayan2020estimation}. (See SM Sec. S10 for more details~\cite{SM}.)

\textit{Discussion.---}
We presented a classical algorithm for estimating expectation values of observables in linear-optical circuits.
Using this, we showed that one can efficiently approximate (marginal) output probabilities of boson sampling with arbitrary product input states, which implies that the simulation of sparse boson sampling with arbitrary product input states is easy.
We also provided an efficient classical algorithm for the general transition amplitude in a linear-optical circuit, which allows us to approximate the hafnian of any complex symmetric matrix. 
Furthermore, using this classical algorithm, we solved an open problem in Ref.~\cite{oh2024mvs}, namely, a generalized MVS problem for arbitrary product input states. 

A natural open question is to extend our method to more complicated optical circuits with nonlinear effects such as nonlinear gates or postselection. Since such cases ultimately enable universal quantum computation~\cite{lloyd1999quantum, knill2001scheme}, we do not expect our method to work for the most general cases. 
Nonetheless, it suggests a phase transition from easiness to hardness when introducing nonlinear effects; thus, an interesting open question is to determine the exact transition point.

Also, we have focused on computing grouped probabilities of boson sampling with a certain pattern in the MVS problem, which is defined by the linear relation $\Omega-\bm{\omega}\cdot\bm{m}$ in Eq.~\eqref{eq:group}.
However, our method does not straightforwardly generalize to a more complicated pattern of groups, which still opens a possibility of quantum advantage.
Thus, the complexity of approximating general grouped boson sampling output probabilities is another open question.

\begin{acknowledgments}
\textit{Acknowledgments.---}
This work was supported by the National Research Foundation of Korea Grants (No. RS-2024-00431768 and No. RS-2025-00515456) funded by the Korean government (Ministry of Science and ICT~(MSIT)), the Institute of Information \& Communications Technology Planning \& Evaluation (IITP) Grants funded by the Korea government (MSIT) (No. IITP-2025-RS-2025-02283189 and IITP-2025-RS-2025-02263264), and by Creation of the quantum information science R\&D ecosystem (based on human resources) through the National Research Foundation of Korea (NRF) funded by the Korean government (Ministry of Science and ICT (MSIT)) (RS-2023-00256050).
\end{acknowledgments}

\bibliography{reference}

\begin{thebibliography}{46}%
\makeatletter
\providecommand \@ifxundefined [1]{%
 \@ifx{#1\undefined}
}%
\providecommand \@ifnum [1]{%
 \ifnum #1\expandafter \@firstoftwo
 \else \expandafter \@secondoftwo
 \fi
}%
\providecommand \@ifx [1]{%
 \ifx #1\expandafter \@firstoftwo
 \else \expandafter \@secondoftwo
 \fi
}%
\providecommand \natexlab [1]{#1}%
\providecommand \enquote  [1]{``#1''}%
\providecommand \bibnamefont  [1]{#1}%
\providecommand \bibfnamefont [1]{#1}%
\providecommand \citenamefont [1]{#1}%
\providecommand \href@noop [0]{\@secondoftwo}%
\providecommand \href [0]{\begingroup \@sanitize@url \@href}%
\providecommand \@href[1]{\@@startlink{#1}\@@href}%
\providecommand \@@href[1]{\endgroup#1\@@endlink}%
\providecommand \@sanitize@url [0]{\catcode `\\12\catcode `\$12\catcode `\&12\catcode `\#12\catcode `\^12\catcode `\_12\catcode `\%12\relax}%
\providecommand \@@startlink[1]{}%
\providecommand \@@endlink[0]{}%
\providecommand \url  [0]{\begingroup\@sanitize@url \@url }%
\providecommand \@url [1]{\endgroup\@href {#1}{\urlprefix }}%
\providecommand \urlprefix  [0]{URL }%
\providecommand \Eprint [0]{\href }%
\providecommand \doibase [0]{https://doi.org/}%
\providecommand \selectlanguage [0]{\@gobble}%
\providecommand \bibinfo  [0]{\@secondoftwo}%
\providecommand \bibfield  [0]{\@secondoftwo}%
\providecommand \translation [1]{[#1]}%
\providecommand \BibitemOpen [0]{}%
\providecommand \bibitemStop [0]{}%
\providecommand \bibitemNoStop [0]{.\EOS\space}%
\providecommand \EOS [0]{\spacefactor3000\relax}%
\providecommand \BibitemShut  [1]{\csname bibitem#1\endcsname}%
\let\auto@bib@innerbib\@empty
\bibitem [{\citenamefont {Cahill}\ and\ \citenamefont {Glauber}(1969)}]{cahill1969density}%
  \BibitemOpen
  \bibfield  {author} {\bibinfo {author} {\bibfnamefont {K.~E.}\ \bibnamefont {Cahill}}\ and\ \bibinfo {author} {\bibfnamefont {R.~J.}\ \bibnamefont {Glauber}},\ }\bibfield  {title} {\bibinfo {title} {Density operators and quasiprobability distributions},\ }\href@noop {} {\bibfield  {journal} {\bibinfo  {journal} {Physical Review}\ }\textbf {\bibinfo {volume} {177}},\ \bibinfo {pages} {1882} (\bibinfo {year} {1969})}\BibitemShut {NoStop}%
\bibitem [{\citenamefont {Ferraro}\ \emph {et~al.}(2005)\citenamefont {Ferraro}, \citenamefont {Olivares},\ and\ \citenamefont {Paris}}]{ferraro2005gaussian}%
  \BibitemOpen
  \bibfield  {author} {\bibinfo {author} {\bibfnamefont {A.}~\bibnamefont {Ferraro}}, \bibinfo {author} {\bibfnamefont {S.}~\bibnamefont {Olivares}},\ and\ \bibinfo {author} {\bibfnamefont {M.~G.}\ \bibnamefont {Paris}},\ }\bibfield  {title} {\bibinfo {title} {Gaussian states in continuous variable quantum information},\ }\href@noop {} {\bibfield  {journal} {\bibinfo  {journal} {arXiv preprint quant-ph/0503237}\ } (\bibinfo {year} {2005})}\BibitemShut {NoStop}%
\bibitem [{\citenamefont {Serafini}(2017)}]{serafini2017quantum}%
  \BibitemOpen
  \bibfield  {author} {\bibinfo {author} {\bibfnamefont {A.}~\bibnamefont {Serafini}},\ }\href@noop {} {\emph {\bibinfo {title} {Quantum continuous variables: a primer of theoretical methods}}}\ (\bibinfo  {publisher} {CRC press},\ \bibinfo {year} {2017})\BibitemShut {NoStop}%
\bibitem [{\citenamefont {Aaronson}\ and\ \citenamefont {Arkhipov}(2011)}]{aaronson2011computational}%
  \BibitemOpen
  \bibfield  {author} {\bibinfo {author} {\bibfnamefont {S.}~\bibnamefont {Aaronson}}\ and\ \bibinfo {author} {\bibfnamefont {A.}~\bibnamefont {Arkhipov}},\ }\bibfield  {title} {\bibinfo {title} {The computational complexity of linear optics},\ }in\ \href@noop {} {\emph {\bibinfo {booktitle} {Proceedings of the forty-third annual ACM symposium on Theory of computing}}}\ (\bibinfo {year} {2011})\ pp.\ \bibinfo {pages} {333--342}\BibitemShut {NoStop}%
\bibitem [{\citenamefont {Gurvits}(2005)}]{gurvits2005complexity}%
  \BibitemOpen
  \bibfield  {author} {\bibinfo {author} {\bibfnamefont {L.}~\bibnamefont {Gurvits}},\ }\bibfield  {title} {\bibinfo {title} {On the complexity of mixed discriminants and related problems},\ }in\ \href@noop {} {\emph {\bibinfo {booktitle} {Mathematical Foundations of Computer Science 2005: 30th International Symposium, MFCS 2005, Gdansk, Poland, August 29--September 2, 2005. Proceedings 30}}}\ (\bibinfo {organization} {Springer},\ \bibinfo {year} {2005})\ pp.\ \bibinfo {pages} {447--458}\BibitemShut {NoStop}%
\bibitem [{\citenamefont {Aaronson}\ and\ \citenamefont {Hance}(2012)}]{aaronson2012generalizing}%
  \BibitemOpen
  \bibfield  {author} {\bibinfo {author} {\bibfnamefont {S.}~\bibnamefont {Aaronson}}\ and\ \bibinfo {author} {\bibfnamefont {T.}~\bibnamefont {Hance}},\ }\bibfield  {title} {\bibinfo {title} {Generalizing and derandomizing gurvits's approximation algorithm for the permanent},\ }\href@noop {} {\bibfield  {journal} {\bibinfo  {journal} {arXiv preprint arXiv:1212.0025}\ } (\bibinfo {year} {2012})}\BibitemShut {NoStop}%
\bibitem [{\citenamefont {Bouland}\ \emph {et~al.}(2019)\citenamefont {Bouland}, \citenamefont {Fefferman}, \citenamefont {Nirkhe},\ and\ \citenamefont {Vazirani}}]{bouland2019complexity}%
  \BibitemOpen
  \bibfield  {author} {\bibinfo {author} {\bibfnamefont {A.}~\bibnamefont {Bouland}}, \bibinfo {author} {\bibfnamefont {B.}~\bibnamefont {Fefferman}}, \bibinfo {author} {\bibfnamefont {C.}~\bibnamefont {Nirkhe}},\ and\ \bibinfo {author} {\bibfnamefont {U.}~\bibnamefont {Vazirani}},\ }\bibfield  {title} {\bibinfo {title} {On the complexity and verification of quantum random circuit sampling},\ }\href@noop {} {\bibfield  {journal} {\bibinfo  {journal} {Nature Physics}\ }\textbf {\bibinfo {volume} {15}},\ \bibinfo {pages} {159} (\bibinfo {year} {2019})}\BibitemShut {NoStop}%
\bibitem [{\citenamefont {Peruzzo}\ \emph {et~al.}(2014)\citenamefont {Peruzzo}, \citenamefont {McClean}, \citenamefont {Shadbolt}, \citenamefont {Yung}, \citenamefont {Zhou}, \citenamefont {Love}, \citenamefont {Aspuru-Guzik},\ and\ \citenamefont {O’brien}}]{peruzzo2014variational}%
  \BibitemOpen
  \bibfield  {author} {\bibinfo {author} {\bibfnamefont {A.}~\bibnamefont {Peruzzo}}, \bibinfo {author} {\bibfnamefont {J.}~\bibnamefont {McClean}}, \bibinfo {author} {\bibfnamefont {P.}~\bibnamefont {Shadbolt}}, \bibinfo {author} {\bibfnamefont {M.-H.}\ \bibnamefont {Yung}}, \bibinfo {author} {\bibfnamefont {X.-Q.}\ \bibnamefont {Zhou}}, \bibinfo {author} {\bibfnamefont {P.~J.}\ \bibnamefont {Love}}, \bibinfo {author} {\bibfnamefont {A.}~\bibnamefont {Aspuru-Guzik}},\ and\ \bibinfo {author} {\bibfnamefont {J.~L.}\ \bibnamefont {O’brien}},\ }\bibfield  {title} {\bibinfo {title} {A variational eigenvalue solver on a photonic quantum processor},\ }\href@noop {} {\bibfield  {journal} {\bibinfo  {journal} {Nature communications}\ }\textbf {\bibinfo {volume} {5}},\ \bibinfo {pages} {4213} (\bibinfo {year} {2014})}\BibitemShut {NoStop}%
\bibitem [{\citenamefont {Pappalardo}\ \emph {et~al.}(2024)\citenamefont {Pappalardo}, \citenamefont {Emeriau}, \citenamefont {de~Felice}, \citenamefont {Ventura}, \citenamefont {Jaunin}, \citenamefont {Yeung}, \citenamefont {Coecke},\ and\ \citenamefont {Mansfield}}]{pappalardo2024photonic}%
  \BibitemOpen
  \bibfield  {author} {\bibinfo {author} {\bibfnamefont {A.}~\bibnamefont {Pappalardo}}, \bibinfo {author} {\bibfnamefont {P.-E.}\ \bibnamefont {Emeriau}}, \bibinfo {author} {\bibfnamefont {G.}~\bibnamefont {de~Felice}}, \bibinfo {author} {\bibfnamefont {B.}~\bibnamefont {Ventura}}, \bibinfo {author} {\bibfnamefont {H.}~\bibnamefont {Jaunin}}, \bibinfo {author} {\bibfnamefont {R.}~\bibnamefont {Yeung}}, \bibinfo {author} {\bibfnamefont {B.}~\bibnamefont {Coecke}},\ and\ \bibinfo {author} {\bibfnamefont {S.}~\bibnamefont {Mansfield}},\ }\bibfield  {title} {\bibinfo {title} {A photonic parameter-shift rule: enabling gradient computation for photonic quantum computers},\ }\href@noop {} {\bibfield  {journal} {\bibinfo  {journal} {arXiv preprint arXiv:2410.02726}\ } (\bibinfo {year} {2024})}\BibitemShut {NoStop}%
\bibitem [{\citenamefont {Hoch}\ \emph {et~al.}(2024)\citenamefont {Hoch}, \citenamefont {Rodari}, \citenamefont {Giordani}, \citenamefont {Perret}, \citenamefont {Spagnolo}, \citenamefont {Carvacho}, \citenamefont {Pentangelo}, \citenamefont {Piacentini}, \citenamefont {Crespi}, \citenamefont {Ceccarelli} \emph {et~al.}}]{hoch2024variational}%
  \BibitemOpen
  \bibfield  {author} {\bibinfo {author} {\bibfnamefont {F.}~\bibnamefont {Hoch}}, \bibinfo {author} {\bibfnamefont {G.}~\bibnamefont {Rodari}}, \bibinfo {author} {\bibfnamefont {T.}~\bibnamefont {Giordani}}, \bibinfo {author} {\bibfnamefont {P.}~\bibnamefont {Perret}}, \bibinfo {author} {\bibfnamefont {N.}~\bibnamefont {Spagnolo}}, \bibinfo {author} {\bibfnamefont {G.}~\bibnamefont {Carvacho}}, \bibinfo {author} {\bibfnamefont {C.}~\bibnamefont {Pentangelo}}, \bibinfo {author} {\bibfnamefont {S.}~\bibnamefont {Piacentini}}, \bibinfo {author} {\bibfnamefont {A.}~\bibnamefont {Crespi}}, \bibinfo {author} {\bibfnamefont {F.}~\bibnamefont {Ceccarelli}}, \emph {et~al.},\ }\bibfield  {title} {\bibinfo {title} {Variational approach to photonic quantum circuits via the parameter shift rule},\ }\href@noop {} {\bibfield  {journal} {\bibinfo  {journal} {arXiv preprint arXiv:2410.06966}\ } (\bibinfo {year} {2024})}\BibitemShut {NoStop}%
\bibitem [{\citenamefont {Facelli}\ \emph {et~al.}(2024)\citenamefont {Facelli}, \citenamefont {Roberts}, \citenamefont {Wallner}, \citenamefont {Makarovskiy}, \citenamefont {Holmes},\ and\ \citenamefont {Clements}}]{facelli2024exact}%
  \BibitemOpen
  \bibfield  {author} {\bibinfo {author} {\bibfnamefont {G.}~\bibnamefont {Facelli}}, \bibinfo {author} {\bibfnamefont {D.~D.}\ \bibnamefont {Roberts}}, \bibinfo {author} {\bibfnamefont {H.}~\bibnamefont {Wallner}}, \bibinfo {author} {\bibfnamefont {A.}~\bibnamefont {Makarovskiy}}, \bibinfo {author} {\bibfnamefont {Z.}~\bibnamefont {Holmes}},\ and\ \bibinfo {author} {\bibfnamefont {W.~R.}\ \bibnamefont {Clements}},\ }\bibfield  {title} {\bibinfo {title} {Exact gradients for linear optics with single photons},\ }\href@noop {} {\bibfield  {journal} {\bibinfo  {journal} {arXiv preprint arXiv:2409.16369}\ } (\bibinfo {year} {2024})}\BibitemShut {NoStop}%
\bibitem [{\citenamefont {Baldazzi}\ \emph {et~al.}(2025)\citenamefont {Baldazzi}, \citenamefont {Pappalardo}, \citenamefont {Facelli},\ and\ \citenamefont {Vallone}}]{baldazzi2025four}%
  \BibitemOpen
  \bibfield  {author} {\bibinfo {author} {\bibfnamefont {G.}~\bibnamefont {Baldazzi}}, \bibinfo {author} {\bibfnamefont {F.}~\bibnamefont {Pappalardo}}, \bibinfo {author} {\bibfnamefont {M.}~\bibnamefont {Facelli}},\ and\ \bibinfo {author} {\bibfnamefont {G.}~\bibnamefont {Vallone}},\ }\bibfield  {title} {\bibinfo {title} {Four-qubit variational quantum algorithms in silicon photonics with integrated entangled photon sources},\ }\href@noop {} {\bibfield  {journal} {\bibinfo  {journal} {npj Quantum Information}\ }\textbf {\bibinfo {volume} {11}},\ \bibinfo {pages} {107} (\bibinfo {year} {2025})}\BibitemShut {NoStop}%
\bibitem [{\citenamefont {Shang}\ \emph {et~al.}(2024)\citenamefont {Shang}, \citenamefont {Zhong}, \citenamefont {Zhang}, \citenamefont {Yu}, \citenamefont {Yuan}, \citenamefont {Lu}, \citenamefont {Pan},\ and\ \citenamefont {Chen}}]{shang2024boson}%
  \BibitemOpen
  \bibfield  {author} {\bibinfo {author} {\bibfnamefont {Z.-X.}\ \bibnamefont {Shang}}, \bibinfo {author} {\bibfnamefont {H.-S.}\ \bibnamefont {Zhong}}, \bibinfo {author} {\bibfnamefont {Y.-K.}\ \bibnamefont {Zhang}}, \bibinfo {author} {\bibfnamefont {C.-C.}\ \bibnamefont {Yu}}, \bibinfo {author} {\bibfnamefont {X.}~\bibnamefont {Yuan}}, \bibinfo {author} {\bibfnamefont {C.-Y.}\ \bibnamefont {Lu}}, \bibinfo {author} {\bibfnamefont {J.-W.}\ \bibnamefont {Pan}},\ and\ \bibinfo {author} {\bibfnamefont {M.-C.}\ \bibnamefont {Chen}},\ }\bibfield  {title} {\bibinfo {title} {Boson sampling enhanced quantum chemistry},\ }\href@noop {} {\bibfield  {journal} {\bibinfo  {journal} {arXiv preprint arXiv:2403.16698}\ } (\bibinfo {year} {2024})}\BibitemShut {NoStop}%
\bibitem [{\citenamefont {Maring}\ \emph {et~al.}(2024)\citenamefont {Maring}, \citenamefont {Fyrillas}, \citenamefont {Pont}, \citenamefont {Ivanov}, \citenamefont {Stepanov}, \citenamefont {Margaria}, \citenamefont {Hease}, \citenamefont {Pishchagin}, \citenamefont {Lema{\^\i}tre}, \citenamefont {Sagnes} \emph {et~al.}}]{maring2024versatile}%
  \BibitemOpen
  \bibfield  {author} {\bibinfo {author} {\bibfnamefont {N.}~\bibnamefont {Maring}}, \bibinfo {author} {\bibfnamefont {A.}~\bibnamefont {Fyrillas}}, \bibinfo {author} {\bibfnamefont {M.}~\bibnamefont {Pont}}, \bibinfo {author} {\bibfnamefont {E.}~\bibnamefont {Ivanov}}, \bibinfo {author} {\bibfnamefont {P.}~\bibnamefont {Stepanov}}, \bibinfo {author} {\bibfnamefont {N.}~\bibnamefont {Margaria}}, \bibinfo {author} {\bibfnamefont {W.}~\bibnamefont {Hease}}, \bibinfo {author} {\bibfnamefont {A.}~\bibnamefont {Pishchagin}}, \bibinfo {author} {\bibfnamefont {A.}~\bibnamefont {Lema{\^\i}tre}}, \bibinfo {author} {\bibfnamefont {I.}~\bibnamefont {Sagnes}}, \emph {et~al.},\ }\bibfield  {title} {\bibinfo {title} {A versatile single-photon-based quantum computing platform},\ }\href@noop {} {\bibfield  {journal} {\bibinfo  {journal} {Nature Photonics}\ }\textbf {\bibinfo {volume} {18}},\ \bibinfo {pages} {603} (\bibinfo {year} {2024})}\BibitemShut {NoStop}%
\bibitem [{\citenamefont {Ebadi}\ \emph {et~al.}(2021)\citenamefont {Ebadi}, \citenamefont {Wang}, \citenamefont {Levine},\ and\ \citenamefont {et~al.}}]{ebadi2021quantum}%
  \BibitemOpen
  \bibfield  {author} {\bibinfo {author} {\bibfnamefont {S.}~\bibnamefont {Ebadi}}, \bibinfo {author} {\bibfnamefont {T.~T.}\ \bibnamefont {Wang}}, \bibinfo {author} {\bibfnamefont {H.}~\bibnamefont {Levine}},\ and\ \bibinfo {author} {\bibnamefont {et~al.}},\ }\bibfield  {title} {\bibinfo {title} {Quantum phases of matter on a 256-atom programmable quantum simulator},\ }\href@noop {} {\bibfield  {journal} {\bibinfo  {journal} {Nature}\ }\textbf {\bibinfo {volume} {595}},\ \bibinfo {pages} {227} (\bibinfo {year} {2021})}\BibitemShut {NoStop}%
\bibitem [{\citenamefont {Bernien}\ \emph {et~al.}(2025)\citenamefont {Bernien}, \citenamefont {Ebadi}, \citenamefont {Wang},\ and\ \citenamefont {et~al.}}]{bernien2025coarsening}%
  \BibitemOpen
  \bibfield  {author} {\bibinfo {author} {\bibfnamefont {H.}~\bibnamefont {Bernien}}, \bibinfo {author} {\bibfnamefont {S.}~\bibnamefont {Ebadi}}, \bibinfo {author} {\bibfnamefont {T.~T.}\ \bibnamefont {Wang}},\ and\ \bibinfo {author} {\bibnamefont {et~al.}},\ }\bibfield  {title} {\bibinfo {title} {Quantum coarsening and collective dynamics on a programmable quantum simulator},\ }\href@noop {} {\bibfield  {journal} {\bibinfo  {journal} {Nature}\ }\textbf {\bibinfo {volume} {638}},\ \bibinfo {pages} {86} (\bibinfo {year} {2025})}\BibitemShut {NoStop}%
\bibitem [{\citenamefont {Zhang}\ \emph {et~al.}(2022)\citenamefont {Zhang}, \citenamefont {Kokail}, \citenamefont {Blatt}, \citenamefont {Roos},\ and\ \citenamefont {et~al.}}]{zhang2022fermi}%
  \BibitemOpen
  \bibfield  {author} {\bibinfo {author} {\bibfnamefont {Y.}~\bibnamefont {Zhang}}, \bibinfo {author} {\bibfnamefont {C.}~\bibnamefont {Kokail}}, \bibinfo {author} {\bibfnamefont {R.}~\bibnamefont {Blatt}}, \bibinfo {author} {\bibfnamefont {C.~F.}\ \bibnamefont {Roos}},\ and\ \bibinfo {author} {\bibnamefont {et~al.}},\ }\bibfield  {title} {\bibinfo {title} {Observing ground-state properties of the fermi--hubbard model using a scalable algorithm on a quantum computer},\ }\href@noop {} {\bibfield  {journal} {\bibinfo  {journal} {Nature Communications}\ }\textbf {\bibinfo {volume} {13}},\ \bibinfo {pages} {5743} (\bibinfo {year} {2022})}\BibitemShut {NoStop}%
\bibitem [{\citenamefont {Deb}\ \emph {et~al.}(2022)\citenamefont {Deb}, \citenamefont {Rodriguez}, \citenamefont {Pagano},\ and\ \citenamefont {Monroe}}]{deb2022rabi}%
  \BibitemOpen
  \bibfield  {author} {\bibinfo {author} {\bibfnamefont {N.}~\bibnamefont {Deb}}, \bibinfo {author} {\bibfnamefont {A.}~\bibnamefont {Rodriguez}}, \bibinfo {author} {\bibfnamefont {G.}~\bibnamefont {Pagano}},\ and\ \bibinfo {author} {\bibfnamefont {C.}~\bibnamefont {Monroe}},\ }\bibfield  {title} {\bibinfo {title} {Experimental realization of the rabi--hubbard model with trapped ions},\ }\href@noop {} {\bibfield  {journal} {\bibinfo  {journal} {Physical Review Letters}\ }\textbf {\bibinfo {volume} {128}},\ \bibinfo {pages} {160504} (\bibinfo {year} {2022})}\BibitemShut {NoStop}%
\bibitem [{\citenamefont {Rahman}\ \emph {et~al.}(2022)\citenamefont {Rahman}, \citenamefont {Zhang}, \citenamefont {Gu},\ and\ \citenamefont {et~al.}}]{rahman2022digital}%
  \BibitemOpen
  \bibfield  {author} {\bibinfo {author} {\bibfnamefont {M.~I.}\ \bibnamefont {Rahman}}, \bibinfo {author} {\bibfnamefont {Y.}~\bibnamefont {Zhang}}, \bibinfo {author} {\bibfnamefont {X.}~\bibnamefont {Gu}},\ and\ \bibinfo {author} {\bibnamefont {et~al.}},\ }\bibfield  {title} {\bibinfo {title} {Digital quantum simulation of floquet symmetry-protected topological phases},\ }\href@noop {} {\bibfield  {journal} {\bibinfo  {journal} {Nature}\ }\textbf {\bibinfo {volume} {607}},\ \bibinfo {pages} {468} (\bibinfo {year} {2022})}\BibitemShut {NoStop}%
\bibitem [{\citenamefont {Li}\ \emph {et~al.}(2024)\citenamefont {Li}, \citenamefont {Xu}, \citenamefont {Song},\ and\ \citenamefont {et~al.}}]{li2024ising}%
  \BibitemOpen
  \bibfield  {author} {\bibinfo {author} {\bibfnamefont {H.}~\bibnamefont {Li}}, \bibinfo {author} {\bibfnamefont {K.}~\bibnamefont {Xu}}, \bibinfo {author} {\bibfnamefont {C.}~\bibnamefont {Song}},\ and\ \bibinfo {author} {\bibnamefont {et~al.}},\ }\bibfield  {title} {\bibinfo {title} {Ising meson spectroscopy on a noisy digital quantum simulator},\ }\href@noop {} {\bibfield  {journal} {\bibinfo  {journal} {Nature Communications}\ }\textbf {\bibinfo {volume} {15}},\ \bibinfo {pages} {5901} (\bibinfo {year} {2024})}\BibitemShut {NoStop}%
\bibitem [{\citenamefont {Smith}\ \emph {et~al.}(2019)\citenamefont {Smith}, \citenamefont {Kim}, \citenamefont {Pollmann},\ and\ \citenamefont {Knolle}}]{smith2019simulating}%
  \BibitemOpen
  \bibfield  {author} {\bibinfo {author} {\bibfnamefont {A.}~\bibnamefont {Smith}}, \bibinfo {author} {\bibfnamefont {I.}~\bibnamefont {Kim}}, \bibinfo {author} {\bibfnamefont {F.}~\bibnamefont {Pollmann}},\ and\ \bibinfo {author} {\bibfnamefont {J.}~\bibnamefont {Knolle}},\ }\bibfield  {title} {\bibinfo {title} {Simulating quantum many-body dynamics on a current digital quantum computer},\ }\href@noop {} {\bibfield  {journal} {\bibinfo  {journal} {npj Quantum Information}\ }\textbf {\bibinfo {volume} {5}},\ \bibinfo {pages} {106} (\bibinfo {year} {2019})}\BibitemShut {NoStop}%
\bibitem [{\citenamefont {Huh}\ \emph {et~al.}(2015)\citenamefont {Huh}, \citenamefont {Guerreschi}, \citenamefont {Peropadre}, \citenamefont {McClean},\ and\ \citenamefont {Aspuru-Guzik}}]{huh2015boson}%
  \BibitemOpen
  \bibfield  {author} {\bibinfo {author} {\bibfnamefont {J.}~\bibnamefont {Huh}}, \bibinfo {author} {\bibfnamefont {G.~G.}\ \bibnamefont {Guerreschi}}, \bibinfo {author} {\bibfnamefont {B.}~\bibnamefont {Peropadre}}, \bibinfo {author} {\bibfnamefont {J.~R.}\ \bibnamefont {McClean}},\ and\ \bibinfo {author} {\bibfnamefont {A.}~\bibnamefont {Aspuru-Guzik}},\ }\bibfield  {title} {\bibinfo {title} {Boson sampling for molecular vibronic spectra},\ }\href@noop {} {\bibfield  {journal} {\bibinfo  {journal} {Nature Photonics}\ }\textbf {\bibinfo {volume} {9}},\ \bibinfo {pages} {615} (\bibinfo {year} {2015})}\BibitemShut {NoStop}%
\bibitem [{\citenamefont {Oh}\ \emph {et~al.}(2024)\citenamefont {Oh}, \citenamefont {Lim}, \citenamefont {Wong}, \citenamefont {Fefferman},\ and\ \citenamefont {Jiang}}]{oh2024mvs}%
  \BibitemOpen
  \bibfield  {author} {\bibinfo {author} {\bibfnamefont {C.}~\bibnamefont {Oh}}, \bibinfo {author} {\bibfnamefont {Y.}~\bibnamefont {Lim}}, \bibinfo {author} {\bibfnamefont {Y.}~\bibnamefont {Wong}}, \bibinfo {author} {\bibfnamefont {B.}~\bibnamefont {Fefferman}},\ and\ \bibinfo {author} {\bibfnamefont {L.}~\bibnamefont {Jiang}},\ }\bibfield  {title} {\bibinfo {title} {Quantum-inspired classical algorithms for molecular vibronic spectra},\ }\href@noop {} {\bibfield  {journal} {\bibinfo  {journal} {Nature Physics}\ }\textbf {\bibinfo {volume} {20}},\ \bibinfo {pages} {225} (\bibinfo {year} {2024})}\BibitemShut {NoStop}%
\bibitem [{\citenamefont {Drummond}\ \emph {et~al.}(2022)\citenamefont {Drummond}, \citenamefont {Opanchuk}, \citenamefont {Dellios},\ and\ \citenamefont {Reid}}]{drummond2022simulating}%
  \BibitemOpen
  \bibfield  {author} {\bibinfo {author} {\bibfnamefont {P.~D.}\ \bibnamefont {Drummond}}, \bibinfo {author} {\bibfnamefont {B.}~\bibnamefont {Opanchuk}}, \bibinfo {author} {\bibfnamefont {A.}~\bibnamefont {Dellios}},\ and\ \bibinfo {author} {\bibfnamefont {M.~D.}\ \bibnamefont {Reid}},\ }\bibfield  {title} {\bibinfo {title} {Simulating complex networks in phase space: Gaussian boson sampling},\ }\href@noop {} {\bibfield  {journal} {\bibinfo  {journal} {Physical Review A}\ }\textbf {\bibinfo {volume} {105}},\ \bibinfo {pages} {012427} (\bibinfo {year} {2022})}\BibitemShut {NoStop}%
\bibitem [{\citenamefont {Singh}\ \emph {et~al.}(2025)\citenamefont {Singh}, \citenamefont {Muraleedharan}, \citenamefont {Fu}, \citenamefont {Cheng}, \citenamefont {Newton}, \citenamefont {Rohde},\ and\ \citenamefont {Brennen}}]{singh2023proof}%
  \BibitemOpen
  \bibfield  {author} {\bibinfo {author} {\bibfnamefont {D.}~\bibnamefont {Singh}}, \bibinfo {author} {\bibfnamefont {G.}~\bibnamefont {Muraleedharan}}, \bibinfo {author} {\bibfnamefont {B.}~\bibnamefont {Fu}}, \bibinfo {author} {\bibfnamefont {C.-M.}\ \bibnamefont {Cheng}}, \bibinfo {author} {\bibfnamefont {N.~R.}\ \bibnamefont {Newton}}, \bibinfo {author} {\bibfnamefont {P.~P.}\ \bibnamefont {Rohde}},\ and\ \bibinfo {author} {\bibfnamefont {G.~K.}\ \bibnamefont {Brennen}},\ }\bibfield  {title} {\bibinfo {title} {Proof-of-work consensus by quantum sampling},\ }\href@noop {} {\bibfield  {journal} {\bibinfo  {journal} {Quantum Science and Technology}\ }\textbf {\bibinfo {volume} {10}},\ \bibinfo {pages} {025020} (\bibinfo {year} {2025})}\BibitemShut {NoStop}%
\bibitem [{\citenamefont {Seron}\ \emph {et~al.}(2024)\citenamefont {Seron}, \citenamefont {Novo}, \citenamefont {Arkhipov},\ and\ \citenamefont {Cerf}}]{seron2024efficient}%
  \BibitemOpen
  \bibfield  {author} {\bibinfo {author} {\bibfnamefont {B.}~\bibnamefont {Seron}}, \bibinfo {author} {\bibfnamefont {L.}~\bibnamefont {Novo}}, \bibinfo {author} {\bibfnamefont {A.}~\bibnamefont {Arkhipov}},\ and\ \bibinfo {author} {\bibfnamefont {N.~J.}\ \bibnamefont {Cerf}},\ }\bibfield  {title} {\bibinfo {title} {Efficient validation of boson sampling from binned photon-number distributions},\ }\href@noop {} {\bibfield  {journal} {\bibinfo  {journal} {Quantum}\ }\textbf {\bibinfo {volume} {8}},\ \bibinfo {pages} {1479} (\bibinfo {year} {2024})}\BibitemShut {NoStop}%
\bibitem [{\citenamefont {Anguita}\ \emph {et~al.}(2025)\citenamefont {Anguita}, \citenamefont {Camillini}, \citenamefont {Marzban}, \citenamefont {Robbio}, \citenamefont {Seron}, \citenamefont {Novo},\ and\ \citenamefont {Renema}}]{anguita2025experimental}%
  \BibitemOpen
  \bibfield  {author} {\bibinfo {author} {\bibfnamefont {M.~C.}\ \bibnamefont {Anguita}}, \bibinfo {author} {\bibfnamefont {A.}~\bibnamefont {Camillini}}, \bibinfo {author} {\bibfnamefont {S.}~\bibnamefont {Marzban}}, \bibinfo {author} {\bibfnamefont {M.}~\bibnamefont {Robbio}}, \bibinfo {author} {\bibfnamefont {B.}~\bibnamefont {Seron}}, \bibinfo {author} {\bibfnamefont {L.}~\bibnamefont {Novo}},\ and\ \bibinfo {author} {\bibfnamefont {J.~J.}\ \bibnamefont {Renema}},\ }\bibfield  {title} {\bibinfo {title} {Experimental validation of boson sampling using detector binning},\ }\href@noop {} {\bibfield  {journal} {\bibinfo  {journal} {arXiv preprint arXiv:2502.05093}\ } (\bibinfo {year} {2025})}\BibitemShut {NoStop}%
\bibitem [{\citenamefont {Chabaud}\ \emph {et~al.}(2021)\citenamefont {Chabaud}, \citenamefont {Markham},\ and\ \citenamefont {Sohbi}}]{chabaud2021quantum}%
  \BibitemOpen
  \bibfield  {author} {\bibinfo {author} {\bibfnamefont {U.}~\bibnamefont {Chabaud}}, \bibinfo {author} {\bibfnamefont {D.}~\bibnamefont {Markham}},\ and\ \bibinfo {author} {\bibfnamefont {A.}~\bibnamefont {Sohbi}},\ }\bibfield  {title} {\bibinfo {title} {Quantum machine learning with adaptive linear optics},\ }\href@noop {} {\bibfield  {journal} {\bibinfo  {journal} {Quantum}\ }\textbf {\bibinfo {volume} {5}},\ \bibinfo {pages} {496} (\bibinfo {year} {2021})}\BibitemShut {NoStop}%
\bibitem [{\citenamefont {Yin}\ \emph {et~al.}(2024)\citenamefont {Yin}, \citenamefont {Agresti}, \citenamefont {de~Felice}, \citenamefont {Brown}, \citenamefont {Toumi}, \citenamefont {Pentangelo}, \citenamefont {Piacentini}, \citenamefont {Crespi}, \citenamefont {Ceccarelli}, \citenamefont {Osellame} \emph {et~al.}}]{yin2024experimental}%
  \BibitemOpen
  \bibfield  {author} {\bibinfo {author} {\bibfnamefont {Z.}~\bibnamefont {Yin}}, \bibinfo {author} {\bibfnamefont {I.}~\bibnamefont {Agresti}}, \bibinfo {author} {\bibfnamefont {G.}~\bibnamefont {de~Felice}}, \bibinfo {author} {\bibfnamefont {D.}~\bibnamefont {Brown}}, \bibinfo {author} {\bibfnamefont {A.}~\bibnamefont {Toumi}}, \bibinfo {author} {\bibfnamefont {C.}~\bibnamefont {Pentangelo}}, \bibinfo {author} {\bibfnamefont {S.}~\bibnamefont {Piacentini}}, \bibinfo {author} {\bibfnamefont {A.}~\bibnamefont {Crespi}}, \bibinfo {author} {\bibfnamefont {F.}~\bibnamefont {Ceccarelli}}, \bibinfo {author} {\bibfnamefont {R.}~\bibnamefont {Osellame}}, \emph {et~al.},\ }\bibfield  {title} {\bibinfo {title} {Experimental quantum-enhanced kernels on a photonic processor},\ }\href@noop {} {\bibfield  {journal} {\bibinfo  {journal} {arXiv preprint arXiv:2407.20364}\ } (\bibinfo {year} {2024})}\BibitemShut {NoStop}%
\bibitem [{\citenamefont {Hoch}\ \emph {et~al.}(2025)\citenamefont {Hoch}, \citenamefont {Caruccio}, \citenamefont {Rodari}, \citenamefont {Francalanci}, \citenamefont {Suprano}, \citenamefont {Giordani}, \citenamefont {Carvacho}, \citenamefont {Spagnolo}, \citenamefont {Koudia}, \citenamefont {Proietti} \emph {et~al.}}]{hoch2025quantum}%
  \BibitemOpen
  \bibfield  {author} {\bibinfo {author} {\bibfnamefont {F.}~\bibnamefont {Hoch}}, \bibinfo {author} {\bibfnamefont {E.}~\bibnamefont {Caruccio}}, \bibinfo {author} {\bibfnamefont {G.}~\bibnamefont {Rodari}}, \bibinfo {author} {\bibfnamefont {T.}~\bibnamefont {Francalanci}}, \bibinfo {author} {\bibfnamefont {A.}~\bibnamefont {Suprano}}, \bibinfo {author} {\bibfnamefont {T.}~\bibnamefont {Giordani}}, \bibinfo {author} {\bibfnamefont {G.}~\bibnamefont {Carvacho}}, \bibinfo {author} {\bibfnamefont {N.}~\bibnamefont {Spagnolo}}, \bibinfo {author} {\bibfnamefont {S.}~\bibnamefont {Koudia}}, \bibinfo {author} {\bibfnamefont {M.}~\bibnamefont {Proietti}}, \emph {et~al.},\ }\bibfield  {title} {\bibinfo {title} {Quantum machine learning with adaptive boson sampling via post-selection},\ }\href@noop {} {\bibfield  {journal} {\bibinfo  {journal} {Nature Communications}\ }\textbf {\bibinfo {volume} {16}},\ \bibinfo {pages} {902} (\bibinfo {year} {2025})}\BibitemShut {NoStop}%
\bibitem [{\citenamefont {Jerrum}\ \emph {et~al.}(1986)\citenamefont {Jerrum}, \citenamefont {Valiant},\ and\ \citenamefont {Vazirani}}]{jerrum1986random}%
  \BibitemOpen
  \bibfield  {author} {\bibinfo {author} {\bibfnamefont {M.~R.}\ \bibnamefont {Jerrum}}, \bibinfo {author} {\bibfnamefont {L.~G.}\ \bibnamefont {Valiant}},\ and\ \bibinfo {author} {\bibfnamefont {V.~V.}\ \bibnamefont {Vazirani}},\ }\bibfield  {title} {\bibinfo {title} {Random generation of combinatorial structures from a uniform distribution},\ }\href@noop {} {\bibfield  {journal} {\bibinfo  {journal} {Theoretical computer science}\ }\textbf {\bibinfo {volume} {43}},\ \bibinfo {pages} {169} (\bibinfo {year} {1986})}\BibitemShut {NoStop}%
\bibitem [{\citenamefont {Husimi}(1940)}]{husimi1940some}%
  \BibitemOpen
  \bibfield  {author} {\bibinfo {author} {\bibfnamefont {K.}~\bibnamefont {Husimi}},\ }\bibfield  {title} {\bibinfo {title} {Some formal properties of the density matrix},\ }\href@noop {} {\bibfield  {journal} {\bibinfo  {journal} {Proceedings of the Physico-Mathematical Society of Japan. 3rd Series}\ }\textbf {\bibinfo {volume} {22}},\ \bibinfo {pages} {264} (\bibinfo {year} {1940})}\BibitemShut {NoStop}%
\bibitem [{\citenamefont {Barvinok}(2016)}]{barvinok2016combinatorics}%
  \BibitemOpen
  \bibfield  {author} {\bibinfo {author} {\bibfnamefont {A.}~\bibnamefont {Barvinok}},\ }\href@noop {} {\emph {\bibinfo {title} {Combinatorics and complexity of partition functions}}},\ Vol.~\bibinfo {volume} {30}\ (\bibinfo  {publisher} {Springer},\ \bibinfo {year} {2016})\BibitemShut {NoStop}%
\bibitem [{\citenamefont {Hamilton}\ \emph {et~al.}(2017)\citenamefont {Hamilton}, \citenamefont {Kruse}, \citenamefont {Sansoni}, \citenamefont {Barkhofen}, \citenamefont {Silberhorn},\ and\ \citenamefont {Jex}}]{hamilton2017gaussian}%
  \BibitemOpen
  \bibfield  {author} {\bibinfo {author} {\bibfnamefont {C.~S.}\ \bibnamefont {Hamilton}}, \bibinfo {author} {\bibfnamefont {R.}~\bibnamefont {Kruse}}, \bibinfo {author} {\bibfnamefont {L.}~\bibnamefont {Sansoni}}, \bibinfo {author} {\bibfnamefont {S.}~\bibnamefont {Barkhofen}}, \bibinfo {author} {\bibfnamefont {C.}~\bibnamefont {Silberhorn}},\ and\ \bibinfo {author} {\bibfnamefont {I.}~\bibnamefont {Jex}},\ }\bibfield  {title} {\bibinfo {title} {Gaussian boson sampling},\ }\href@noop {} {\bibfield  {journal} {\bibinfo  {journal} {Physical review letters}\ }\textbf {\bibinfo {volume} {119}},\ \bibinfo {pages} {170501} (\bibinfo {year} {2017})}\BibitemShut {NoStop}%
\bibitem [{\citenamefont {Quesada}(2019)}]{quesada2019franck}%
  \BibitemOpen
  \bibfield  {author} {\bibinfo {author} {\bibfnamefont {N.}~\bibnamefont {Quesada}},\ }\bibfield  {title} {\bibinfo {title} {Franck-condon factors by counting perfect matchings of graphs with loops},\ }\href@noop {} {\bibfield  {journal} {\bibinfo  {journal} {The Journal of chemical physics}\ }\textbf {\bibinfo {volume} {150}} (\bibinfo {year} {2019})}\BibitemShut {NoStop}%
\bibitem [{\citenamefont {Takagi}(1924)}]{takagi1924algebraic}%
  \BibitemOpen
  \bibfield  {author} {\bibinfo {author} {\bibfnamefont {T.}~\bibnamefont {Takagi}},\ }\bibfield  {title} {\bibinfo {title} {On an algebraic problem reluted to an analytic theorem of carath{\'e}odory and fej{\'e}r and on an allied theorem of landau},\ }in\ \href@noop {} {\emph {\bibinfo {booktitle} {Japanese journal of mathematics: transactions and abstracts}}},\ Vol.~\bibinfo {volume} {1}\ (\bibinfo {organization} {The Mathematical Society of Japan},\ \bibinfo {year} {1924})\ pp.\ \bibinfo {pages} {83--93}\BibitemShut {NoStop}%
\bibitem [{\citenamefont {Kushilevitz}\ and\ \citenamefont {Mansour}(1991)}]{kushilevitz1991learning}%
  \BibitemOpen
  \bibfield  {author} {\bibinfo {author} {\bibfnamefont {E.}~\bibnamefont {Kushilevitz}}\ and\ \bibinfo {author} {\bibfnamefont {Y.}~\bibnamefont {Mansour}},\ }\bibfield  {title} {\bibinfo {title} {Learning decision trees using the fourier spectrum},\ }in\ \href@noop {} {\emph {\bibinfo {booktitle} {Proceedings of the twenty-third annual ACM symposium on Theory of computing}}}\ (\bibinfo {year} {1991})\ pp.\ \bibinfo {pages} {455--464}\BibitemShut {NoStop}%
\bibitem [{\citenamefont {Schwarz}\ and\ \citenamefont {Nest}(2013)}]{schwarz2013simulating}%
  \BibitemOpen
  \bibfield  {author} {\bibinfo {author} {\bibfnamefont {M.}~\bibnamefont {Schwarz}}\ and\ \bibinfo {author} {\bibfnamefont {M.~V.~d.}\ \bibnamefont {Nest}},\ }\bibfield  {title} {\bibinfo {title} {Simulating quantum circuits with sparse output distributions},\ }\href@noop {} {\bibfield  {journal} {\bibinfo  {journal} {arXiv preprint arXiv:1310.6749}\ } (\bibinfo {year} {2013})}\BibitemShut {NoStop}%
\bibitem [{\citenamefont {Aaronson}(2011)}]{aaronson2011linear}%
  \BibitemOpen
  \bibfield  {author} {\bibinfo {author} {\bibfnamefont {S.}~\bibnamefont {Aaronson}},\ }\bibfield  {title} {\bibinfo {title} {A linear-optical proof that the permanent is\# p-hard},\ }\href@noop {} {\bibfield  {journal} {\bibinfo  {journal} {Proceedings of the Royal Society A: Mathematical, Physical and Engineering Sciences}\ }\textbf {\bibinfo {volume} {467}},\ \bibinfo {pages} {3393} (\bibinfo {year} {2011})}\BibitemShut {NoStop}%
\bibitem [{\citenamefont {Jozsa}\ and\ \citenamefont {Nest}(2013)}]{jozsa2013classical}%
  \BibitemOpen
  \bibfield  {author} {\bibinfo {author} {\bibfnamefont {R.}~\bibnamefont {Jozsa}}\ and\ \bibinfo {author} {\bibfnamefont {M.~V.~d.}\ \bibnamefont {Nest}},\ }\bibfield  {title} {\bibinfo {title} {Classical simulation complexity of extended clifford circuits},\ }\href@noop {} {\bibfield  {journal} {\bibinfo  {journal} {arXiv preprint arXiv:1305.6190}\ } (\bibinfo {year} {2013})}\BibitemShut {NoStop}%
\bibitem [{\citenamefont {Bremner}\ \emph {et~al.}(2011)\citenamefont {Bremner}, \citenamefont {Jozsa},\ and\ \citenamefont {Shepherd}}]{bremner2011classical}%
  \BibitemOpen
  \bibfield  {author} {\bibinfo {author} {\bibfnamefont {M.~J.}\ \bibnamefont {Bremner}}, \bibinfo {author} {\bibfnamefont {R.}~\bibnamefont {Jozsa}},\ and\ \bibinfo {author} {\bibfnamefont {D.~J.}\ \bibnamefont {Shepherd}},\ }\bibfield  {title} {\bibinfo {title} {Classical simulation of commuting quantum computations implies collapse of the polynomial hierarchy},\ }\href@noop {} {\bibfield  {journal} {\bibinfo  {journal} {Proceedings of the Royal Society A: Mathematical, Physical and Engineering Sciences}\ }\textbf {\bibinfo {volume} {467}},\ \bibinfo {pages} {459} (\bibinfo {year} {2011})}\BibitemShut {NoStop}%
\bibitem [{\citenamefont {Pashayan}\ \emph {et~al.}(2020)\citenamefont {Pashayan}, \citenamefont {Bartlett},\ and\ \citenamefont {Gross}}]{pashayan2020estimation}%
  \BibitemOpen
  \bibfield  {author} {\bibinfo {author} {\bibfnamefont {H.}~\bibnamefont {Pashayan}}, \bibinfo {author} {\bibfnamefont {S.~D.}\ \bibnamefont {Bartlett}},\ and\ \bibinfo {author} {\bibfnamefont {D.}~\bibnamefont {Gross}},\ }\bibfield  {title} {\bibinfo {title} {From estimation of quantum probabilities to simulation of quantum circuits},\ }\href@noop {} {\bibfield  {journal} {\bibinfo  {journal} {Quantum}\ }\textbf {\bibinfo {volume} {4}},\ \bibinfo {pages} {223} (\bibinfo {year} {2020})}\BibitemShut {NoStop}%
\bibitem [{\citenamefont {Bremner}\ \emph {et~al.}(2016)\citenamefont {Bremner}, \citenamefont {Montanaro},\ and\ \citenamefont {Shepherd}}]{bremner2016average}%
  \BibitemOpen
  \bibfield  {author} {\bibinfo {author} {\bibfnamefont {M.~J.}\ \bibnamefont {Bremner}}, \bibinfo {author} {\bibfnamefont {A.}~\bibnamefont {Montanaro}},\ and\ \bibinfo {author} {\bibfnamefont {D.~J.}\ \bibnamefont {Shepherd}},\ }\bibfield  {title} {\bibinfo {title} {Average-case complexity versus approximate simulation of commuting quantum computations},\ }\href@noop {} {\bibfield  {journal} {\bibinfo  {journal} {Physical review letters}\ }\textbf {\bibinfo {volume} {117}},\ \bibinfo {pages} {080501} (\bibinfo {year} {2016})}\BibitemShut {NoStop}%
\bibitem [{\citenamefont {Aaronson}\ and\ \citenamefont {Gottesman}(2004)}]{aaronson2004improved}%
  \BibitemOpen
  \bibfield  {author} {\bibinfo {author} {\bibfnamefont {S.}~\bibnamefont {Aaronson}}\ and\ \bibinfo {author} {\bibfnamefont {D.}~\bibnamefont {Gottesman}},\ }\bibfield  {title} {\bibinfo {title} {Improved simulation of stabilizer circuits},\ }\href@noop {} {\bibfield  {journal} {\bibinfo  {journal} {Physical Review A—Atomic, Molecular, and Optical Physics}\ }\textbf {\bibinfo {volume} {70}},\ \bibinfo {pages} {052328} (\bibinfo {year} {2004})}\BibitemShut {NoStop}%
\bibitem [{\citenamefont {Shepherd}(2010)}]{shepherd2010binary}%
  \BibitemOpen
  \bibfield  {author} {\bibinfo {author} {\bibfnamefont {D.}~\bibnamefont {Shepherd}},\ }\bibfield  {title} {\bibinfo {title} {Binary matroids and quantum probability distributions},\ }\href@noop {} {\bibfield  {journal} {\bibinfo  {journal} {arXiv preprint arXiv:1005.1744}\ } (\bibinfo {year} {2010})}\BibitemShut {NoStop}%
\bibitem [{\citenamefont {Bravyi}\ \emph {et~al.}(2019)\citenamefont {Bravyi}, \citenamefont {Browne}, \citenamefont {Calpin}, \citenamefont {Campbell}, \citenamefont {Gosset},\ and\ \citenamefont {Howard}}]{bravyi2019simulation}%
  \BibitemOpen
  \bibfield  {author} {\bibinfo {author} {\bibfnamefont {S.}~\bibnamefont {Bravyi}}, \bibinfo {author} {\bibfnamefont {D.}~\bibnamefont {Browne}}, \bibinfo {author} {\bibfnamefont {P.}~\bibnamefont {Calpin}}, \bibinfo {author} {\bibfnamefont {E.}~\bibnamefont {Campbell}}, \bibinfo {author} {\bibfnamefont {D.}~\bibnamefont {Gosset}},\ and\ \bibinfo {author} {\bibfnamefont {M.}~\bibnamefont {Howard}},\ }\bibfield  {title} {\bibinfo {title} {Simulation of quantum circuits by low-rank stabilizer decompositions},\ }\href@noop {} {\bibfield  {journal} {\bibinfo  {journal} {Quantum}\ }\textbf {\bibinfo {volume} {3}},\ \bibinfo {pages} {181} (\bibinfo {year} {2019})}\BibitemShut {NoStop}%
\end{thebibliography}%


\begin{thebibliography}{90}%
\makeatletter
\providecommand \@ifxundefined [1]{%
 \@ifx{#1\undefined}
}%
\providecommand \@ifnum [1]{%
 \ifnum #1\expandafter \@firstoftwo
 \else \expandafter \@secondoftwo
 \fi
}%
\providecommand \@ifx [1]{%
 \ifx #1\expandafter \@firstoftwo
 \else \expandafter \@secondoftwo
 \fi
}%
\providecommand \natexlab [1]{#1}%
\providecommand \enquote  [1]{``#1''}%
\providecommand \bibnamefont  [1]{#1}%
\providecommand \bibfnamefont [1]{#1}%
\providecommand \citenamefont [1]{#1}%
\providecommand \href@noop [0]{\@secondoftwo}%
\providecommand \href [0]{\begingroup \@sanitize@url \@href}%
\providecommand \@href[1]{\@@startlink{#1}\@@href}%
\providecommand \@@href[1]{\endgroup#1\@@endlink}%
\providecommand \@sanitize@url [0]{\catcode `\\12\catcode `\$12\catcode `\&12\catcode `\#12\catcode `\^12\catcode `\_12\catcode `\%12\relax}%
\providecommand \@@startlink[1]{}%
\providecommand \@@endlink[0]{}%
\providecommand \url  [0]{\begingroup\@sanitize@url \@url }%
\providecommand \@url [1]{\endgroup\@href {#1}{\urlprefix }}%
\providecommand \urlprefix  [0]{URL }%
\providecommand \Eprint [0]{\href }%
\providecommand \doibase [0]{https://doi.org/}%
\providecommand \selectlanguage [0]{\@gobble}%
\providecommand \bibinfo  [0]{\@secondoftwo}%
\providecommand \bibfield  [0]{\@secondoftwo}%
\providecommand \translation [1]{[#1]}%
\providecommand \BibitemOpen [0]{}%
\providecommand \bibitemStop [0]{}%
\providecommand \bibitemNoStop [0]{.\EOS\space}%
\providecommand \EOS [0]{\spacefactor3000\relax}%
\providecommand \BibitemShut  [1]{\csname bibitem#1\endcsname}%
\let\auto@bib@innerbib\@empty
\bibitem [{\citenamefont {Knill}\ \emph {et~al.}(2001)\citenamefont {Knill}, \citenamefont {Laflamme},\ and\ \citenamefont {Milburn}}]{knill2001scheme}%
  \BibitemOpen
  \bibfield  {author} {\bibinfo {author} {\bibfnamefont {E.}~\bibnamefont {Knill}}, \bibinfo {author} {\bibfnamefont {R.}~\bibnamefont {Laflamme}},\ and\ \bibinfo {author} {\bibfnamefont {G.~J.}\ \bibnamefont {Milburn}},\ }\bibfield  {title} {\bibinfo {title} {A scheme for efficient quantum computation with linear optics},\ }\href@noop {} {\bibfield  {journal} {\bibinfo  {journal} {nature}\ }\textbf {\bibinfo {volume} {409}},\ \bibinfo {pages} {46} (\bibinfo {year} {2001})}\BibitemShut {NoStop}%
\bibitem [{\citenamefont {Aaronson}\ and\ \citenamefont {Arkhipov}(2011)}]{aaronson2011computational}%
  \BibitemOpen
  \bibfield  {author} {\bibinfo {author} {\bibfnamefont {S.}~\bibnamefont {Aaronson}}\ and\ \bibinfo {author} {\bibfnamefont {A.}~\bibnamefont {Arkhipov}},\ }\bibfield  {title} {\bibinfo {title} {The computational complexity of linear optics},\ }in\ \href@noop {} {\emph {\bibinfo {booktitle} {Proceedings of the forty-third annual ACM symposium on Theory of computing}}}\ (\bibinfo {year} {2011})\ pp.\ \bibinfo {pages} {333--342}\BibitemShut {NoStop}%
\bibitem [{\citenamefont {Hamilton}\ \emph {et~al.}(2017)\citenamefont {Hamilton}, \citenamefont {Kruse}, \citenamefont {Sansoni}, \citenamefont {Barkhofen}, \citenamefont {Silberhorn},\ and\ \citenamefont {Jex}}]{hamilton2017gaussian}%
  \BibitemOpen
  \bibfield  {author} {\bibinfo {author} {\bibfnamefont {C.~S.}\ \bibnamefont {Hamilton}}, \bibinfo {author} {\bibfnamefont {R.}~\bibnamefont {Kruse}}, \bibinfo {author} {\bibfnamefont {L.}~\bibnamefont {Sansoni}}, \bibinfo {author} {\bibfnamefont {S.}~\bibnamefont {Barkhofen}}, \bibinfo {author} {\bibfnamefont {C.}~\bibnamefont {Silberhorn}},\ and\ \bibinfo {author} {\bibfnamefont {I.}~\bibnamefont {Jex}},\ }\bibfield  {title} {\bibinfo {title} {Gaussian boson sampling},\ }\href@noop {} {\bibfield  {journal} {\bibinfo  {journal} {Physical review letters}\ }\textbf {\bibinfo {volume} {119}},\ \bibinfo {pages} {170501} (\bibinfo {year} {2017})}\BibitemShut {NoStop}%
\bibitem [{\citenamefont {Deshpande}\ \emph {et~al.}(2022)\citenamefont {Deshpande}, \citenamefont {Mehta}, \citenamefont {Vincent}, \citenamefont {Quesada}, \citenamefont {Hinsche}, \citenamefont {Ioannou}, \citenamefont {Madsen}, \citenamefont {Lavoie}, \citenamefont {Qi}, \citenamefont {Eisert} \emph {et~al.}}]{deshpande2022quantum}%
  \BibitemOpen
  \bibfield  {author} {\bibinfo {author} {\bibfnamefont {A.}~\bibnamefont {Deshpande}}, \bibinfo {author} {\bibfnamefont {A.}~\bibnamefont {Mehta}}, \bibinfo {author} {\bibfnamefont {T.}~\bibnamefont {Vincent}}, \bibinfo {author} {\bibfnamefont {N.}~\bibnamefont {Quesada}}, \bibinfo {author} {\bibfnamefont {M.}~\bibnamefont {Hinsche}}, \bibinfo {author} {\bibfnamefont {M.}~\bibnamefont {Ioannou}}, \bibinfo {author} {\bibfnamefont {L.}~\bibnamefont {Madsen}}, \bibinfo {author} {\bibfnamefont {J.}~\bibnamefont {Lavoie}}, \bibinfo {author} {\bibfnamefont {H.}~\bibnamefont {Qi}}, \bibinfo {author} {\bibfnamefont {J.}~\bibnamefont {Eisert}}, \emph {et~al.},\ }\bibfield  {title} {\bibinfo {title} {Quantum computational advantage via high-dimensional gaussian boson sampling},\ }\href@noop {} {\bibfield  {journal} {\bibinfo  {journal} {Science advances}\ }\textbf {\bibinfo {volume} {8}},\ \bibinfo {pages} {eabi7894} (\bibinfo {year} {2022})}\BibitemShut {NoStop}%
\bibitem [{\citenamefont {Zhong}\ \emph {et~al.}(2020)\citenamefont {Zhong}, \citenamefont {Wang}, \citenamefont {Deng}, \citenamefont {Chen}, \citenamefont {Peng}, \citenamefont {Luo}, \citenamefont {Qin}, \citenamefont {Wu}, \citenamefont {Ding}, \citenamefont {Hu} \emph {et~al.}}]{zhong2020quantum}%
  \BibitemOpen
  \bibfield  {author} {\bibinfo {author} {\bibfnamefont {H.-S.}\ \bibnamefont {Zhong}}, \bibinfo {author} {\bibfnamefont {H.}~\bibnamefont {Wang}}, \bibinfo {author} {\bibfnamefont {Y.-H.}\ \bibnamefont {Deng}}, \bibinfo {author} {\bibfnamefont {M.-C.}\ \bibnamefont {Chen}}, \bibinfo {author} {\bibfnamefont {L.-C.}\ \bibnamefont {Peng}}, \bibinfo {author} {\bibfnamefont {Y.-H.}\ \bibnamefont {Luo}}, \bibinfo {author} {\bibfnamefont {J.}~\bibnamefont {Qin}}, \bibinfo {author} {\bibfnamefont {D.}~\bibnamefont {Wu}}, \bibinfo {author} {\bibfnamefont {X.}~\bibnamefont {Ding}}, \bibinfo {author} {\bibfnamefont {Y.}~\bibnamefont {Hu}}, \emph {et~al.},\ }\bibfield  {title} {\bibinfo {title} {Quantum computational advantage using photons},\ }\href@noop {} {\bibfield  {journal} {\bibinfo  {journal} {Science}\ }\textbf {\bibinfo {volume} {370}},\ \bibinfo {pages} {1460} (\bibinfo {year} {2020})}\BibitemShut {NoStop}%
\bibitem [{\citenamefont {Zhong}\ \emph {et~al.}(2021)\citenamefont {Zhong}, \citenamefont {Deng}, \citenamefont {Qin}, \citenamefont {Wang}, \citenamefont {Chen}, \citenamefont {Peng}, \citenamefont {Luo}, \citenamefont {Wu}, \citenamefont {Gong}, \citenamefont {Su} \emph {et~al.}}]{zhong2021phase}%
  \BibitemOpen
  \bibfield  {author} {\bibinfo {author} {\bibfnamefont {H.-S.}\ \bibnamefont {Zhong}}, \bibinfo {author} {\bibfnamefont {Y.-H.}\ \bibnamefont {Deng}}, \bibinfo {author} {\bibfnamefont {J.}~\bibnamefont {Qin}}, \bibinfo {author} {\bibfnamefont {H.}~\bibnamefont {Wang}}, \bibinfo {author} {\bibfnamefont {M.-C.}\ \bibnamefont {Chen}}, \bibinfo {author} {\bibfnamefont {L.-C.}\ \bibnamefont {Peng}}, \bibinfo {author} {\bibfnamefont {Y.-H.}\ \bibnamefont {Luo}}, \bibinfo {author} {\bibfnamefont {D.}~\bibnamefont {Wu}}, \bibinfo {author} {\bibfnamefont {S.-Q.}\ \bibnamefont {Gong}}, \bibinfo {author} {\bibfnamefont {H.}~\bibnamefont {Su}}, \emph {et~al.},\ }\bibfield  {title} {\bibinfo {title} {Phase-programmable gaussian boson sampling using stimulated squeezed light},\ }\href@noop {} {\bibfield  {journal} {\bibinfo  {journal} {Physical review letters}\ }\textbf {\bibinfo {volume} {127}},\ \bibinfo {pages} {180502} (\bibinfo {year} {2021})}\BibitemShut {NoStop}%
\bibitem [{\citenamefont {Madsen}\ \emph {et~al.}(2022)\citenamefont {Madsen}, \citenamefont {Laudenbach}, \citenamefont {Askarani}, \citenamefont {Rortais}, \citenamefont {Vincent}, \citenamefont {Bulmer}, \citenamefont {Miatto}, \citenamefont {Neuhaus}, \citenamefont {Helt}, \citenamefont {Collins} \emph {et~al.}}]{madsen2022quantum}%
  \BibitemOpen
  \bibfield  {author} {\bibinfo {author} {\bibfnamefont {L.~S.}\ \bibnamefont {Madsen}}, \bibinfo {author} {\bibfnamefont {F.}~\bibnamefont {Laudenbach}}, \bibinfo {author} {\bibfnamefont {M.~F.}\ \bibnamefont {Askarani}}, \bibinfo {author} {\bibfnamefont {F.}~\bibnamefont {Rortais}}, \bibinfo {author} {\bibfnamefont {T.}~\bibnamefont {Vincent}}, \bibinfo {author} {\bibfnamefont {J.~F.}\ \bibnamefont {Bulmer}}, \bibinfo {author} {\bibfnamefont {F.~M.}\ \bibnamefont {Miatto}}, \bibinfo {author} {\bibfnamefont {L.}~\bibnamefont {Neuhaus}}, \bibinfo {author} {\bibfnamefont {L.~G.}\ \bibnamefont {Helt}}, \bibinfo {author} {\bibfnamefont {M.~J.}\ \bibnamefont {Collins}}, \emph {et~al.},\ }\bibfield  {title} {\bibinfo {title} {Quantum computational advantage with a programmable photonic processor},\ }\href@noop {} {\bibfield  {journal} {\bibinfo  {journal} {Nature}\ }\textbf {\bibinfo {volume} {606}},\ \bibinfo {pages} {75} (\bibinfo {year} {2022})}\BibitemShut {NoStop}%
\bibitem [{\citenamefont {Deng}\ \emph {et~al.}(2023)\citenamefont {Deng}, \citenamefont {Gu}, \citenamefont {Liu}, \citenamefont {Gong}, \citenamefont {Su}, \citenamefont {Zhang}, \citenamefont {Tang}, \citenamefont {Jia}, \citenamefont {Xu}, \citenamefont {Chen} \emph {et~al.}}]{deng2023gaussian}%
  \BibitemOpen
  \bibfield  {author} {\bibinfo {author} {\bibfnamefont {Y.-H.}\ \bibnamefont {Deng}}, \bibinfo {author} {\bibfnamefont {Y.-C.}\ \bibnamefont {Gu}}, \bibinfo {author} {\bibfnamefont {H.-L.}\ \bibnamefont {Liu}}, \bibinfo {author} {\bibfnamefont {S.-Q.}\ \bibnamefont {Gong}}, \bibinfo {author} {\bibfnamefont {H.}~\bibnamefont {Su}}, \bibinfo {author} {\bibfnamefont {Z.-J.}\ \bibnamefont {Zhang}}, \bibinfo {author} {\bibfnamefont {H.-Y.}\ \bibnamefont {Tang}}, \bibinfo {author} {\bibfnamefont {M.-H.}\ \bibnamefont {Jia}}, \bibinfo {author} {\bibfnamefont {J.-M.}\ \bibnamefont {Xu}}, \bibinfo {author} {\bibfnamefont {M.-C.}\ \bibnamefont {Chen}}, \emph {et~al.},\ }\bibfield  {title} {\bibinfo {title} {Gaussian boson sampling with pseudo-photon-number-resolving detectors and quantum computational advantage},\ }\href@noop {} {\bibfield  {journal} {\bibinfo  {journal} {Physical review letters}\ }\textbf {\bibinfo {volume} {131}},\ \bibinfo {pages} {150601} (\bibinfo {year} {2023})}\BibitemShut {NoStop}%
\bibitem [{\citenamefont {Young}\ \emph {et~al.}(2024)\citenamefont {Young}, \citenamefont {Geller}, \citenamefont {Eckner}, \citenamefont {Schine}, \citenamefont {Glancy}, \citenamefont {Knill},\ and\ \citenamefont {Kaufman}}]{young2024atomic}%
  \BibitemOpen
  \bibfield  {author} {\bibinfo {author} {\bibfnamefont {A.~W.}\ \bibnamefont {Young}}, \bibinfo {author} {\bibfnamefont {S.}~\bibnamefont {Geller}}, \bibinfo {author} {\bibfnamefont {W.~J.}\ \bibnamefont {Eckner}}, \bibinfo {author} {\bibfnamefont {N.}~\bibnamefont {Schine}}, \bibinfo {author} {\bibfnamefont {S.}~\bibnamefont {Glancy}}, \bibinfo {author} {\bibfnamefont {E.}~\bibnamefont {Knill}},\ and\ \bibinfo {author} {\bibfnamefont {A.~M.}\ \bibnamefont {Kaufman}},\ }\bibfield  {title} {\bibinfo {title} {An atomic boson sampler},\ }\href@noop {} {\bibfield  {journal} {\bibinfo  {journal} {Nature}\ }\textbf {\bibinfo {volume} {629}},\ \bibinfo {pages} {311} (\bibinfo {year} {2024})}\BibitemShut {NoStop}%
\bibitem [{\citenamefont {Neville}\ \emph {et~al.}(2017)\citenamefont {Neville}, \citenamefont {Sparrow}, \citenamefont {Clifford}, \citenamefont {Johnston}, \citenamefont {Birchall}, \citenamefont {Montanaro},\ and\ \citenamefont {Laing}}]{neville2017classical}%
  \BibitemOpen
  \bibfield  {author} {\bibinfo {author} {\bibfnamefont {A.}~\bibnamefont {Neville}}, \bibinfo {author} {\bibfnamefont {C.}~\bibnamefont {Sparrow}}, \bibinfo {author} {\bibfnamefont {R.}~\bibnamefont {Clifford}}, \bibinfo {author} {\bibfnamefont {E.}~\bibnamefont {Johnston}}, \bibinfo {author} {\bibfnamefont {P.~M.}\ \bibnamefont {Birchall}}, \bibinfo {author} {\bibfnamefont {A.}~\bibnamefont {Montanaro}},\ and\ \bibinfo {author} {\bibfnamefont {A.}~\bibnamefont {Laing}},\ }\bibfield  {title} {\bibinfo {title} {Classical boson sampling algorithms with superior performance to near-term experiments},\ }\href@noop {} {\bibfield  {journal} {\bibinfo  {journal} {Nature Physics}\ }\textbf {\bibinfo {volume} {13}},\ \bibinfo {pages} {1153} (\bibinfo {year} {2017})}\BibitemShut {NoStop}%
\bibitem [{\citenamefont {Clifford}\ and\ \citenamefont {Clifford}(2018)}]{clifford2018classical}%
  \BibitemOpen
  \bibfield  {author} {\bibinfo {author} {\bibfnamefont {P.}~\bibnamefont {Clifford}}\ and\ \bibinfo {author} {\bibfnamefont {R.}~\bibnamefont {Clifford}},\ }\bibfield  {title} {\bibinfo {title} {The classical complexity of boson sampling},\ }in\ \href@noop {} {\emph {\bibinfo {booktitle} {Proceedings of the Twenty-Ninth Annual ACM-SIAM Symposium on Discrete Algorithms}}}\ (\bibinfo {organization} {SIAM},\ \bibinfo {year} {2018})\ pp.\ \bibinfo {pages} {146--155}\BibitemShut {NoStop}%
\bibitem [{\citenamefont {Oszmaniec}\ and\ \citenamefont {Brod}(2018)}]{oszmaniec2018classical}%
  \BibitemOpen
  \bibfield  {author} {\bibinfo {author} {\bibfnamefont {M.}~\bibnamefont {Oszmaniec}}\ and\ \bibinfo {author} {\bibfnamefont {D.~J.}\ \bibnamefont {Brod}},\ }\bibfield  {title} {\bibinfo {title} {Classical simulation of photonic linear optics with lost particles},\ }\href@noop {} {\bibfield  {journal} {\bibinfo  {journal} {New Journal of Physics}\ }\textbf {\bibinfo {volume} {20}},\ \bibinfo {pages} {092002} (\bibinfo {year} {2018})}\BibitemShut {NoStop}%
\bibitem [{\citenamefont {Garc{\'\i}a-Patr{\'o}n}\ \emph {et~al.}(2019)\citenamefont {Garc{\'\i}a-Patr{\'o}n}, \citenamefont {Renema},\ and\ \citenamefont {Shchesnovich}}]{garcia2019simulating}%
  \BibitemOpen
  \bibfield  {author} {\bibinfo {author} {\bibfnamefont {R.}~\bibnamefont {Garc{\'\i}a-Patr{\'o}n}}, \bibinfo {author} {\bibfnamefont {J.~J.}\ \bibnamefont {Renema}},\ and\ \bibinfo {author} {\bibfnamefont {V.}~\bibnamefont {Shchesnovich}},\ }\bibfield  {title} {\bibinfo {title} {Simulating boson sampling in lossy architectures},\ }\href@noop {} {\bibfield  {journal} {\bibinfo  {journal} {Quantum}\ }\textbf {\bibinfo {volume} {3}},\ \bibinfo {pages} {169} (\bibinfo {year} {2019})}\BibitemShut {NoStop}%
\bibitem [{\citenamefont {Qi}\ \emph {et~al.}(2020)\citenamefont {Qi}, \citenamefont {Brod}, \citenamefont {Quesada},\ and\ \citenamefont {Garc{\'\i}a-Patr{\'o}n}}]{qi2020regimes}%
  \BibitemOpen
  \bibfield  {author} {\bibinfo {author} {\bibfnamefont {H.}~\bibnamefont {Qi}}, \bibinfo {author} {\bibfnamefont {D.~J.}\ \bibnamefont {Brod}}, \bibinfo {author} {\bibfnamefont {N.}~\bibnamefont {Quesada}},\ and\ \bibinfo {author} {\bibfnamefont {R.}~\bibnamefont {Garc{\'\i}a-Patr{\'o}n}},\ }\bibfield  {title} {\bibinfo {title} {Regimes of classical simulability for noisy gaussian boson sampling},\ }\href@noop {} {\bibfield  {journal} {\bibinfo  {journal} {Physical review letters}\ }\textbf {\bibinfo {volume} {124}},\ \bibinfo {pages} {100502} (\bibinfo {year} {2020})}\BibitemShut {NoStop}%
\bibitem [{\citenamefont {Oh}\ \emph {et~al.}(2021)\citenamefont {Oh}, \citenamefont {Noh}, \citenamefont {Fefferman},\ and\ \citenamefont {Jiang}}]{oh2021classical}%
  \BibitemOpen
  \bibfield  {author} {\bibinfo {author} {\bibfnamefont {C.}~\bibnamefont {Oh}}, \bibinfo {author} {\bibfnamefont {K.}~\bibnamefont {Noh}}, \bibinfo {author} {\bibfnamefont {B.}~\bibnamefont {Fefferman}},\ and\ \bibinfo {author} {\bibfnamefont {L.}~\bibnamefont {Jiang}},\ }\bibfield  {title} {\bibinfo {title} {Classical simulation of lossy boson sampling using matrix product operators},\ }\href@noop {} {\bibfield  {journal} {\bibinfo  {journal} {Physical Review A}\ }\textbf {\bibinfo {volume} {104}},\ \bibinfo {pages} {022407} (\bibinfo {year} {2021})}\BibitemShut {NoStop}%
\bibitem [{\citenamefont {Quesada}\ \emph {et~al.}(2022)\citenamefont {Quesada}, \citenamefont {Chadwick}, \citenamefont {Bell}, \citenamefont {Arrazola}, \citenamefont {Vincent}, \citenamefont {Qi},\ and\ \citenamefont {Garc{\'\i}a-Patr{\'o}n}}]{quesada2022quadratic}%
  \BibitemOpen
  \bibfield  {author} {\bibinfo {author} {\bibfnamefont {N.}~\bibnamefont {Quesada}}, \bibinfo {author} {\bibfnamefont {R.~S.}\ \bibnamefont {Chadwick}}, \bibinfo {author} {\bibfnamefont {B.~A.}\ \bibnamefont {Bell}}, \bibinfo {author} {\bibfnamefont {J.~M.}\ \bibnamefont {Arrazola}}, \bibinfo {author} {\bibfnamefont {T.}~\bibnamefont {Vincent}}, \bibinfo {author} {\bibfnamefont {H.}~\bibnamefont {Qi}},\ and\ \bibinfo {author} {\bibfnamefont {R.}~\bibnamefont {Garc{\'\i}a-Patr{\'o}n}},\ }\bibfield  {title} {\bibinfo {title} {Quadratic speed-up for simulating gaussian boson sampling},\ }\href@noop {} {\bibfield  {journal} {\bibinfo  {journal} {PRX Quantum}\ }\textbf {\bibinfo {volume} {3}},\ \bibinfo {pages} {010306} (\bibinfo {year} {2022})}\BibitemShut {NoStop}%
\bibitem [{\citenamefont {Bulmer}\ \emph {et~al.}(2022)\citenamefont {Bulmer}, \citenamefont {Bell}, \citenamefont {Chadwick}, \citenamefont {Jones}, \citenamefont {Moise}, \citenamefont {Rigazzi}, \citenamefont {Thorbecke}, \citenamefont {Haus}, \citenamefont {Van~Vaerenbergh}, \citenamefont {Patel} \emph {et~al.}}]{bulmer2022boundary}%
  \BibitemOpen
  \bibfield  {author} {\bibinfo {author} {\bibfnamefont {J.~F.}\ \bibnamefont {Bulmer}}, \bibinfo {author} {\bibfnamefont {B.~A.}\ \bibnamefont {Bell}}, \bibinfo {author} {\bibfnamefont {R.~S.}\ \bibnamefont {Chadwick}}, \bibinfo {author} {\bibfnamefont {A.~E.}\ \bibnamefont {Jones}}, \bibinfo {author} {\bibfnamefont {D.}~\bibnamefont {Moise}}, \bibinfo {author} {\bibfnamefont {A.}~\bibnamefont {Rigazzi}}, \bibinfo {author} {\bibfnamefont {J.}~\bibnamefont {Thorbecke}}, \bibinfo {author} {\bibfnamefont {U.-U.}\ \bibnamefont {Haus}}, \bibinfo {author} {\bibfnamefont {T.}~\bibnamefont {Van~Vaerenbergh}}, \bibinfo {author} {\bibfnamefont {R.~B.}\ \bibnamefont {Patel}}, \emph {et~al.},\ }\bibfield  {title} {\bibinfo {title} {The boundary for quantum advantage in gaussian boson sampling},\ }\href@noop {} {\bibfield  {journal} {\bibinfo  {journal} {Science advances}\ }\textbf {\bibinfo {volume} {8}},\ \bibinfo {pages} {eabl9236} (\bibinfo {year} {2022})}\BibitemShut {NoStop}%
\bibitem [{\citenamefont {Oh}\ \emph {et~al.}(2022)\citenamefont {Oh}, \citenamefont {Lim}, \citenamefont {Fefferman},\ and\ \citenamefont {Jiang}}]{oh2022classical}%
  \BibitemOpen
  \bibfield  {author} {\bibinfo {author} {\bibfnamefont {C.}~\bibnamefont {Oh}}, \bibinfo {author} {\bibfnamefont {Y.}~\bibnamefont {Lim}}, \bibinfo {author} {\bibfnamefont {B.}~\bibnamefont {Fefferman}},\ and\ \bibinfo {author} {\bibfnamefont {L.}~\bibnamefont {Jiang}},\ }\bibfield  {title} {\bibinfo {title} {Classical simulation of boson sampling based on graph structure},\ }\href@noop {} {\bibfield  {journal} {\bibinfo  {journal} {Physical Review Letters}\ }\textbf {\bibinfo {volume} {128}},\ \bibinfo {pages} {190501} (\bibinfo {year} {2022})}\BibitemShut {NoStop}%
\bibitem [{\citenamefont {Oh}\ \emph {et~al.}(2023)\citenamefont {Oh}, \citenamefont {Jiang},\ and\ \citenamefont {Fefferman}}]{oh2023classical}%
  \BibitemOpen
  \bibfield  {author} {\bibinfo {author} {\bibfnamefont {C.}~\bibnamefont {Oh}}, \bibinfo {author} {\bibfnamefont {L.}~\bibnamefont {Jiang}},\ and\ \bibinfo {author} {\bibfnamefont {B.}~\bibnamefont {Fefferman}},\ }\bibfield  {title} {\bibinfo {title} {On classical simulation algorithms for noisy boson sampling},\ }\href@noop {} {\bibfield  {journal} {\bibinfo  {journal} {arXiv preprint arXiv:2301.11532}\ } (\bibinfo {year} {2023})}\BibitemShut {NoStop}%
\bibitem [{\citenamefont {Liu}\ \emph {et~al.}(2023)\citenamefont {Liu}, \citenamefont {Oh}, \citenamefont {Liu}, \citenamefont {Jiang},\ and\ \citenamefont {Alexeev}}]{liu2023simulating}%
  \BibitemOpen
  \bibfield  {author} {\bibinfo {author} {\bibfnamefont {M.}~\bibnamefont {Liu}}, \bibinfo {author} {\bibfnamefont {C.}~\bibnamefont {Oh}}, \bibinfo {author} {\bibfnamefont {J.}~\bibnamefont {Liu}}, \bibinfo {author} {\bibfnamefont {L.}~\bibnamefont {Jiang}},\ and\ \bibinfo {author} {\bibfnamefont {Y.}~\bibnamefont {Alexeev}},\ }\bibfield  {title} {\bibinfo {title} {Simulating lossy gaussian boson sampling with matrix-product operators},\ }\href@noop {} {\bibfield  {journal} {\bibinfo  {journal} {Physical Review A}\ }\textbf {\bibinfo {volume} {108}},\ \bibinfo {pages} {052604} (\bibinfo {year} {2023})}\BibitemShut {NoStop}%
\bibitem [{\citenamefont {Oh}\ \emph {et~al.}(2024{\natexlab{a}})\citenamefont {Oh}, \citenamefont {Liu}, \citenamefont {Alexeev}, \citenamefont {Fefferman},\ and\ \citenamefont {Jiang}}]{oh2024classical}%
  \BibitemOpen
  \bibfield  {author} {\bibinfo {author} {\bibfnamefont {C.}~\bibnamefont {Oh}}, \bibinfo {author} {\bibfnamefont {M.}~\bibnamefont {Liu}}, \bibinfo {author} {\bibfnamefont {Y.}~\bibnamefont {Alexeev}}, \bibinfo {author} {\bibfnamefont {B.}~\bibnamefont {Fefferman}},\ and\ \bibinfo {author} {\bibfnamefont {L.}~\bibnamefont {Jiang}},\ }\bibfield  {title} {\bibinfo {title} {Classical algorithm for simulating experimental gaussian boson sampling},\ }\href@noop {} {\bibfield  {journal} {\bibinfo  {journal} {Nature Physics}\ }\textbf {\bibinfo {volume} {20}},\ \bibinfo {pages} {1461} (\bibinfo {year} {2024}{\natexlab{a}})}\BibitemShut {NoStop}%
\bibitem [{\citenamefont {Oh}(2025{\natexlab{a}})}]{oh2025recent}%
  \BibitemOpen
  \bibfield  {author} {\bibinfo {author} {\bibfnamefont {C.}~\bibnamefont {Oh}},\ }\bibfield  {title} {\bibinfo {title} {Recent theoretical and experimental progress on boson sampling},\ }\href@noop {} {\bibfield  {journal} {\bibinfo  {journal} {Current Optics and Photonics}\ }\textbf {\bibinfo {volume} {9}},\ \bibinfo {pages} {1} (\bibinfo {year} {2025}{\natexlab{a}})}\BibitemShut {NoStop}%
\bibitem [{\citenamefont {Oh}(2025{\natexlab{b}})}]{oh2025classical}%
  \BibitemOpen
  \bibfield  {author} {\bibinfo {author} {\bibfnamefont {C.}~\bibnamefont {Oh}},\ }\bibfield  {title} {\bibinfo {title} {Classical simulability of constant-depth linear-optical circuits with noise},\ }\href@noop {} {\bibfield  {journal} {\bibinfo  {journal} {npj Quantum Information}\ }\textbf {\bibinfo {volume} {11}},\ \bibinfo {pages} {126} (\bibinfo {year} {2025}{\natexlab{b}})}\BibitemShut {NoStop}%
\bibitem [{\citenamefont {Bravyi}\ \emph {et~al.}(2021)\citenamefont {Bravyi}, \citenamefont {Gosset},\ and\ \citenamefont {Movassagh}}]{bravyi2021classical}%
  \BibitemOpen
  \bibfield  {author} {\bibinfo {author} {\bibfnamefont {S.}~\bibnamefont {Bravyi}}, \bibinfo {author} {\bibfnamefont {D.}~\bibnamefont {Gosset}},\ and\ \bibinfo {author} {\bibfnamefont {R.}~\bibnamefont {Movassagh}},\ }\bibfield  {title} {\bibinfo {title} {Classical algorithms for quantum mean values},\ }\href@noop {} {\bibfield  {journal} {\bibinfo  {journal} {Nature Physics}\ }\textbf {\bibinfo {volume} {17}},\ \bibinfo {pages} {337} (\bibinfo {year} {2021})}\BibitemShut {NoStop}%
\bibitem [{\citenamefont {Daley}\ \emph {et~al.}(2022)\citenamefont {Daley}, \citenamefont {Bloch}, \citenamefont {Kokail}, \citenamefont {Lovas}, \citenamefont {Pichler}, \citenamefont {Verresen}, \citenamefont {Moessner},\ and\ \citenamefont {Zoller}}]{daley2022practical}%
  \BibitemOpen
  \bibfield  {author} {\bibinfo {author} {\bibfnamefont {A.~J.}\ \bibnamefont {Daley}}, \bibinfo {author} {\bibfnamefont {I.}~\bibnamefont {Bloch}}, \bibinfo {author} {\bibfnamefont {C.}~\bibnamefont {Kokail}}, \bibinfo {author} {\bibfnamefont {I.}~\bibnamefont {Lovas}}, \bibinfo {author} {\bibfnamefont {H.}~\bibnamefont {Pichler}}, \bibinfo {author} {\bibfnamefont {R.}~\bibnamefont {Verresen}}, \bibinfo {author} {\bibfnamefont {R.}~\bibnamefont {Moessner}},\ and\ \bibinfo {author} {\bibfnamefont {P.}~\bibnamefont {Zoller}},\ }\bibfield  {title} {\bibinfo {title} {Practical quantum advantage in quantum simulation},\ }\href@noop {} {\bibfield  {journal} {\bibinfo  {journal} {Nature}\ }\textbf {\bibinfo {volume} {607}},\ \bibinfo {pages} {667} (\bibinfo {year} {2022})}\BibitemShut {NoStop}%
\bibitem [{\citenamefont {Kim}\ \emph {et~al.}(2023)\citenamefont {Kim}, \citenamefont {Eddins}, \citenamefont {Anand}, \citenamefont {Wei}, \citenamefont {Van Den~Berg}, \citenamefont {Rosenblatt}, \citenamefont {Nayfeh}, \citenamefont {Wu}, \citenamefont {Zaletel}, \citenamefont {Temme} \emph {et~al.}}]{kim2023evidence}%
  \BibitemOpen
  \bibfield  {author} {\bibinfo {author} {\bibfnamefont {Y.}~\bibnamefont {Kim}}, \bibinfo {author} {\bibfnamefont {A.}~\bibnamefont {Eddins}}, \bibinfo {author} {\bibfnamefont {S.}~\bibnamefont {Anand}}, \bibinfo {author} {\bibfnamefont {K.~X.}\ \bibnamefont {Wei}}, \bibinfo {author} {\bibfnamefont {E.}~\bibnamefont {Van Den~Berg}}, \bibinfo {author} {\bibfnamefont {S.}~\bibnamefont {Rosenblatt}}, \bibinfo {author} {\bibfnamefont {H.}~\bibnamefont {Nayfeh}}, \bibinfo {author} {\bibfnamefont {Y.}~\bibnamefont {Wu}}, \bibinfo {author} {\bibfnamefont {M.}~\bibnamefont {Zaletel}}, \bibinfo {author} {\bibfnamefont {K.}~\bibnamefont {Temme}}, \emph {et~al.},\ }\bibfield  {title} {\bibinfo {title} {Evidence for the utility of quantum computing before fault tolerance},\ }\href@noop {} {\bibfield  {journal} {\bibinfo  {journal} {Nature}\ }\textbf {\bibinfo {volume} {618}},\ \bibinfo {pages} {500} (\bibinfo {year} {2023})}\BibitemShut {NoStop}%
\bibitem [{\citenamefont {Trivedi}\ \emph {et~al.}(2024)\citenamefont {Trivedi}, \citenamefont {Franco~Rubio},\ and\ \citenamefont {Cirac}}]{trivedi2024quantum}%
  \BibitemOpen
  \bibfield  {author} {\bibinfo {author} {\bibfnamefont {R.}~\bibnamefont {Trivedi}}, \bibinfo {author} {\bibfnamefont {A.}~\bibnamefont {Franco~Rubio}},\ and\ \bibinfo {author} {\bibfnamefont {J.~I.}\ \bibnamefont {Cirac}},\ }\bibfield  {title} {\bibinfo {title} {Quantum advantage and stability to errors in analogue quantum simulators},\ }\href@noop {} {\bibfield  {journal} {\bibinfo  {journal} {Nature Communications}\ }\textbf {\bibinfo {volume} {15}},\ \bibinfo {pages} {6507} (\bibinfo {year} {2024})}\BibitemShut {NoStop}%
\bibitem [{\citenamefont {Gonthier}\ \emph {et~al.}(2022)\citenamefont {Gonthier}, \citenamefont {Boixo}, \citenamefont {Gibbs}, \citenamefont {Ho}, \citenamefont {Huggins}, \citenamefont {Jiang}, \citenamefont {McArdle}, \citenamefont {Rubin}, \citenamefont {Skolik}, \citenamefont {Strano} \emph {et~al.}}]{gonthier2022measurements}%
  \BibitemOpen
  \bibfield  {author} {\bibinfo {author} {\bibfnamefont {J.~F.}\ \bibnamefont {Gonthier}}, \bibinfo {author} {\bibfnamefont {S.}~\bibnamefont {Boixo}}, \bibinfo {author} {\bibfnamefont {T.}~\bibnamefont {Gibbs}}, \bibinfo {author} {\bibfnamefont {A.}~\bibnamefont {Ho}}, \bibinfo {author} {\bibfnamefont {W.}~\bibnamefont {Huggins}}, \bibinfo {author} {\bibfnamefont {Z.}~\bibnamefont {Jiang}}, \bibinfo {author} {\bibfnamefont {S.}~\bibnamefont {McArdle}}, \bibinfo {author} {\bibfnamefont {N.~C.}\ \bibnamefont {Rubin}}, \bibinfo {author} {\bibfnamefont {A.}~\bibnamefont {Skolik}}, \bibinfo {author} {\bibfnamefont {D.}~\bibnamefont {Strano}}, \emph {et~al.},\ }\bibfield  {title} {\bibinfo {title} {Measurements as a roadblock to near-term practical quantum advantage in chemistry: Resource analysis},\ }\href@noop {} {\bibfield  {journal} {\bibinfo  {journal} {Physical Review Research}\ }\textbf {\bibinfo {volume} {4}},\ \bibinfo {pages} {033154} (\bibinfo {year} {2022})}\BibitemShut {NoStop}%
\bibitem [{\citenamefont {Huang}\ \emph {et~al.}(2020)\citenamefont {Huang}, \citenamefont {Kueng},\ and\ \citenamefont {Preskill}}]{huang2020predicting}%
  \BibitemOpen
  \bibfield  {author} {\bibinfo {author} {\bibfnamefont {H.-Y.}\ \bibnamefont {Huang}}, \bibinfo {author} {\bibfnamefont {R.}~\bibnamefont {Kueng}},\ and\ \bibinfo {author} {\bibfnamefont {J.}~\bibnamefont {Preskill}},\ }\bibfield  {title} {\bibinfo {title} {Predicting many properties of a quantum system from very few measurements},\ }\href@noop {} {\bibfield  {journal} {\bibinfo  {journal} {Nature Physics}\ }\textbf {\bibinfo {volume} {16}},\ \bibinfo {pages} {1050} (\bibinfo {year} {2020})}\BibitemShut {NoStop}%
\bibitem [{\citenamefont {AI}\ \emph {et~al.}(2025)\citenamefont {AI}, \citenamefont {Babbush}, \citenamefont {Arute}, \citenamefont {Arya}, \citenamefont {Atalaya}, \citenamefont {Bardin}, \citenamefont {Bengtsson}, \citenamefont {Bilmes}, \citenamefont {Broughton}, \citenamefont {Buckley} \emph {et~al.}}]{google2025observation}%
  \BibitemOpen
  \bibfield  {author} {\bibinfo {author} {\bibfnamefont {G.~Q.}\ \bibnamefont {AI}}, \bibinfo {author} {\bibfnamefont {R.}~\bibnamefont {Babbush}}, \bibinfo {author} {\bibfnamefont {F.}~\bibnamefont {Arute}}, \bibinfo {author} {\bibfnamefont {K.}~\bibnamefont {Arya}}, \bibinfo {author} {\bibfnamefont {J.}~\bibnamefont {Atalaya}}, \bibinfo {author} {\bibfnamefont {J.~C.}\ \bibnamefont {Bardin}}, \bibinfo {author} {\bibfnamefont {A.}~\bibnamefont {Bengtsson}}, \bibinfo {author} {\bibfnamefont {A.}~\bibnamefont {Bilmes}}, \bibinfo {author} {\bibfnamefont {M.}~\bibnamefont {Broughton}}, \bibinfo {author} {\bibfnamefont {B.}~\bibnamefont {Buckley}}, \emph {et~al.},\ }\bibfield  {title} {\bibinfo {title} {Observation of out-of-time-ordered correlators and the emergence of classical chaos on a quantum processor},\ }\href@noop {} {\bibfield  {journal} {\bibinfo  {journal} {Nature}\ }\textbf {\bibinfo {volume} {646}},\ \bibinfo {pages} {825} (\bibinfo {year} {2025})}\BibitemShut {NoStop}%
\bibitem [{\citenamefont {Huh}\ \emph {et~al.}(2015)\citenamefont {Huh}, \citenamefont {Guerreschi}, \citenamefont {Peropadre}, \citenamefont {McClean},\ and\ \citenamefont {Aspuru-Guzik}}]{huh2015boson}%
  \BibitemOpen
  \bibfield  {author} {\bibinfo {author} {\bibfnamefont {J.}~\bibnamefont {Huh}}, \bibinfo {author} {\bibfnamefont {G.~G.}\ \bibnamefont {Guerreschi}}, \bibinfo {author} {\bibfnamefont {B.}~\bibnamefont {Peropadre}}, \bibinfo {author} {\bibfnamefont {J.~R.}\ \bibnamefont {McClean}},\ and\ \bibinfo {author} {\bibfnamefont {A.}~\bibnamefont {Aspuru-Guzik}},\ }\bibfield  {title} {\bibinfo {title} {Boson sampling for molecular vibronic spectra},\ }\href@noop {} {\bibfield  {journal} {\bibinfo  {journal} {Nature Photonics}\ }\textbf {\bibinfo {volume} {9}},\ \bibinfo {pages} {615} (\bibinfo {year} {2015})}\BibitemShut {NoStop}%
\bibitem [{\citenamefont {Bromley}\ \emph {et~al.}(2020)\citenamefont {Bromley}, \citenamefont {Arrazola}, \citenamefont {Jahangiri}, \citenamefont {Izaac}, \citenamefont {Quesada}, \citenamefont {Gran}, \citenamefont {Schuld}, \citenamefont {Swinarton}, \citenamefont {Zabaneh},\ and\ \citenamefont {Killoran}}]{bromley2020applications}%
  \BibitemOpen
  \bibfield  {author} {\bibinfo {author} {\bibfnamefont {T.~R.}\ \bibnamefont {Bromley}}, \bibinfo {author} {\bibfnamefont {J.~M.}\ \bibnamefont {Arrazola}}, \bibinfo {author} {\bibfnamefont {S.}~\bibnamefont {Jahangiri}}, \bibinfo {author} {\bibfnamefont {J.}~\bibnamefont {Izaac}}, \bibinfo {author} {\bibfnamefont {N.}~\bibnamefont {Quesada}}, \bibinfo {author} {\bibfnamefont {A.~D.}\ \bibnamefont {Gran}}, \bibinfo {author} {\bibfnamefont {M.}~\bibnamefont {Schuld}}, \bibinfo {author} {\bibfnamefont {J.}~\bibnamefont {Swinarton}}, \bibinfo {author} {\bibfnamefont {Z.}~\bibnamefont {Zabaneh}},\ and\ \bibinfo {author} {\bibfnamefont {N.}~\bibnamefont {Killoran}},\ }\bibfield  {title} {\bibinfo {title} {Applications of near-term photonic quantum computers: software and algorithms},\ }\href@noop {} {\bibfield  {journal} {\bibinfo  {journal} {Quantum Science and Technology}\ }\textbf {\bibinfo {volume} {5}},\ \bibinfo {pages} {034010} (\bibinfo {year} {2020})}\BibitemShut {NoStop}%
\bibitem [{\citenamefont {Chabaud}\ \emph {et~al.}(2021)\citenamefont {Chabaud}, \citenamefont {Markham},\ and\ \citenamefont {Sohbi}}]{chabaud2021quantum}%
  \BibitemOpen
  \bibfield  {author} {\bibinfo {author} {\bibfnamefont {U.}~\bibnamefont {Chabaud}}, \bibinfo {author} {\bibfnamefont {D.}~\bibnamefont {Markham}},\ and\ \bibinfo {author} {\bibfnamefont {A.}~\bibnamefont {Sohbi}},\ }\bibfield  {title} {\bibinfo {title} {Quantum machine learning with adaptive linear optics},\ }\href@noop {} {\bibfield  {journal} {\bibinfo  {journal} {Quantum}\ }\textbf {\bibinfo {volume} {5}},\ \bibinfo {pages} {496} (\bibinfo {year} {2021})}\BibitemShut {NoStop}%
\bibitem [{\citenamefont {Yin}\ \emph {et~al.}(2024)\citenamefont {Yin}, \citenamefont {Agresti}, \citenamefont {de~Felice}, \citenamefont {Brown}, \citenamefont {Toumi}, \citenamefont {Pentangelo}, \citenamefont {Piacentini}, \citenamefont {Crespi}, \citenamefont {Ceccarelli}, \citenamefont {Osellame} \emph {et~al.}}]{yin2024experimental}%
  \BibitemOpen
  \bibfield  {author} {\bibinfo {author} {\bibfnamefont {Z.}~\bibnamefont {Yin}}, \bibinfo {author} {\bibfnamefont {I.}~\bibnamefont {Agresti}}, \bibinfo {author} {\bibfnamefont {G.}~\bibnamefont {de~Felice}}, \bibinfo {author} {\bibfnamefont {D.}~\bibnamefont {Brown}}, \bibinfo {author} {\bibfnamefont {A.}~\bibnamefont {Toumi}}, \bibinfo {author} {\bibfnamefont {C.}~\bibnamefont {Pentangelo}}, \bibinfo {author} {\bibfnamefont {S.}~\bibnamefont {Piacentini}}, \bibinfo {author} {\bibfnamefont {A.}~\bibnamefont {Crespi}}, \bibinfo {author} {\bibfnamefont {F.}~\bibnamefont {Ceccarelli}}, \bibinfo {author} {\bibfnamefont {R.}~\bibnamefont {Osellame}}, \emph {et~al.},\ }\bibfield  {title} {\bibinfo {title} {Experimental quantum-enhanced kernels on a photonic processor},\ }\href@noop {} {\bibfield  {journal} {\bibinfo  {journal} {arXiv preprint arXiv:2407.20364}\ } (\bibinfo {year} {2024})}\BibitemShut {NoStop}%
\bibitem [{\citenamefont {Hoch}\ \emph {et~al.}(2025)\citenamefont {Hoch}, \citenamefont {Caruccio}, \citenamefont {Rodari}, \citenamefont {Francalanci}, \citenamefont {Suprano}, \citenamefont {Giordani}, \citenamefont {Carvacho}, \citenamefont {Spagnolo}, \citenamefont {Koudia}, \citenamefont {Proietti} \emph {et~al.}}]{hoch2025quantum}%
  \BibitemOpen
  \bibfield  {author} {\bibinfo {author} {\bibfnamefont {F.}~\bibnamefont {Hoch}}, \bibinfo {author} {\bibfnamefont {E.}~\bibnamefont {Caruccio}}, \bibinfo {author} {\bibfnamefont {G.}~\bibnamefont {Rodari}}, \bibinfo {author} {\bibfnamefont {T.}~\bibnamefont {Francalanci}}, \bibinfo {author} {\bibfnamefont {A.}~\bibnamefont {Suprano}}, \bibinfo {author} {\bibfnamefont {T.}~\bibnamefont {Giordani}}, \bibinfo {author} {\bibfnamefont {G.}~\bibnamefont {Carvacho}}, \bibinfo {author} {\bibfnamefont {N.}~\bibnamefont {Spagnolo}}, \bibinfo {author} {\bibfnamefont {S.}~\bibnamefont {Koudia}}, \bibinfo {author} {\bibfnamefont {M.}~\bibnamefont {Proietti}}, \emph {et~al.},\ }\bibfield  {title} {\bibinfo {title} {Quantum machine learning with adaptive boson sampling via post-selection},\ }\href@noop {} {\bibfield  {journal} {\bibinfo  {journal} {Nature Communications}\ }\textbf {\bibinfo {volume} {16}},\ \bibinfo {pages} {902} (\bibinfo {year} {2025})}\BibitemShut {NoStop}%
\bibitem [{\citenamefont {Shen}\ \emph {et~al.}(2018)\citenamefont {Shen}, \citenamefont {Lu}, \citenamefont {Zhang}, \citenamefont {Zhang}, \citenamefont {Zhang}, \citenamefont {Huh},\ and\ \citenamefont {Kim}}]{shen2018quantum}%
  \BibitemOpen
  \bibfield  {author} {\bibinfo {author} {\bibfnamefont {Y.}~\bibnamefont {Shen}}, \bibinfo {author} {\bibfnamefont {Y.}~\bibnamefont {Lu}}, \bibinfo {author} {\bibfnamefont {K.}~\bibnamefont {Zhang}}, \bibinfo {author} {\bibfnamefont {J.}~\bibnamefont {Zhang}}, \bibinfo {author} {\bibfnamefont {S.}~\bibnamefont {Zhang}}, \bibinfo {author} {\bibfnamefont {J.}~\bibnamefont {Huh}},\ and\ \bibinfo {author} {\bibfnamefont {K.}~\bibnamefont {Kim}},\ }\bibfield  {title} {\bibinfo {title} {Quantum optical emulation of molecular vibronic spectroscopy using a trapped-ion device},\ }\href@noop {} {\bibfield  {journal} {\bibinfo  {journal} {Chemical science}\ }\textbf {\bibinfo {volume} {9}},\ \bibinfo {pages} {836} (\bibinfo {year} {2018})}\BibitemShut {NoStop}%
\bibitem [{\citenamefont {Paesani}\ \emph {et~al.}(2019)\citenamefont {Paesani}, \citenamefont {Ding}, \citenamefont {Santagati}, \citenamefont {Chakhmakhchyan}, \citenamefont {Vigliar}, \citenamefont {Rottwitt}, \citenamefont {Oxenl{\o}we}, \citenamefont {Wang}, \citenamefont {Thompson},\ and\ \citenamefont {Laing}}]{paesani2019generation}%
  \BibitemOpen
  \bibfield  {author} {\bibinfo {author} {\bibfnamefont {S.}~\bibnamefont {Paesani}}, \bibinfo {author} {\bibfnamefont {Y.}~\bibnamefont {Ding}}, \bibinfo {author} {\bibfnamefont {R.}~\bibnamefont {Santagati}}, \bibinfo {author} {\bibfnamefont {L.}~\bibnamefont {Chakhmakhchyan}}, \bibinfo {author} {\bibfnamefont {C.}~\bibnamefont {Vigliar}}, \bibinfo {author} {\bibfnamefont {K.}~\bibnamefont {Rottwitt}}, \bibinfo {author} {\bibfnamefont {L.~K.}\ \bibnamefont {Oxenl{\o}we}}, \bibinfo {author} {\bibfnamefont {J.}~\bibnamefont {Wang}}, \bibinfo {author} {\bibfnamefont {M.~G.}\ \bibnamefont {Thompson}},\ and\ \bibinfo {author} {\bibfnamefont {A.}~\bibnamefont {Laing}},\ }\bibfield  {title} {\bibinfo {title} {Generation and sampling of quantum states of light in a silicon chip},\ }\href@noop {} {\bibfield  {journal} {\bibinfo  {journal} {Nature Physics}\ }\textbf {\bibinfo {volume} {15}},\ \bibinfo {pages} {925} (\bibinfo {year} {2019})}\BibitemShut {NoStop}%
\bibitem [{\citenamefont {Wang}\ \emph {et~al.}(2020)\citenamefont {Wang}, \citenamefont {Curtis}, \citenamefont {Lester}, \citenamefont {Zhang}, \citenamefont {Gao}, \citenamefont {Freeze}, \citenamefont {Batista}, \citenamefont {Vaccaro}, \citenamefont {Chuang}, \citenamefont {Frunzio} \emph {et~al.}}]{wang2020efficient}%
  \BibitemOpen
  \bibfield  {author} {\bibinfo {author} {\bibfnamefont {C.~S.}\ \bibnamefont {Wang}}, \bibinfo {author} {\bibfnamefont {J.~C.}\ \bibnamefont {Curtis}}, \bibinfo {author} {\bibfnamefont {B.~J.}\ \bibnamefont {Lester}}, \bibinfo {author} {\bibfnamefont {Y.}~\bibnamefont {Zhang}}, \bibinfo {author} {\bibfnamefont {Y.~Y.}\ \bibnamefont {Gao}}, \bibinfo {author} {\bibfnamefont {J.}~\bibnamefont {Freeze}}, \bibinfo {author} {\bibfnamefont {V.~S.}\ \bibnamefont {Batista}}, \bibinfo {author} {\bibfnamefont {P.~H.}\ \bibnamefont {Vaccaro}}, \bibinfo {author} {\bibfnamefont {I.~L.}\ \bibnamefont {Chuang}}, \bibinfo {author} {\bibfnamefont {L.}~\bibnamefont {Frunzio}}, \emph {et~al.},\ }\bibfield  {title} {\bibinfo {title} {Efficient multiphoton sampling of molecular vibronic spectra on a superconducting bosonic processor},\ }\href@noop {} {\bibfield  {journal} {\bibinfo  {journal} {Physical Review X}\ }\textbf {\bibinfo {volume} {10}},\ \bibinfo {pages} {021060} (\bibinfo {year} {2020})}\BibitemShut {NoStop}%
\bibitem [{\citenamefont {Oh}\ \emph {et~al.}(2024{\natexlab{b}})\citenamefont {Oh}, \citenamefont {Lim}, \citenamefont {Wong}, \citenamefont {Fefferman},\ and\ \citenamefont {Jiang}}]{oh2024mvs}%
  \BibitemOpen
  \bibfield  {author} {\bibinfo {author} {\bibfnamefont {C.}~\bibnamefont {Oh}}, \bibinfo {author} {\bibfnamefont {Y.}~\bibnamefont {Lim}}, \bibinfo {author} {\bibfnamefont {Y.}~\bibnamefont {Wong}}, \bibinfo {author} {\bibfnamefont {B.}~\bibnamefont {Fefferman}},\ and\ \bibinfo {author} {\bibfnamefont {L.}~\bibnamefont {Jiang}},\ }\bibfield  {title} {\bibinfo {title} {Quantum-inspired classical algorithms for molecular vibronic spectra},\ }\href@noop {} {\bibfield  {journal} {\bibinfo  {journal} {Nature Physics}\ }\textbf {\bibinfo {volume} {20}},\ \bibinfo {pages} {225} (\bibinfo {year} {2024}{\natexlab{b}})}\BibitemShut {NoStop}%
\bibitem [{SM()}]{SM}%
  \BibitemOpen
  \href@noop {} {\bibinfo {title} {Supplemental material}}\BibitemShut {NoStop}%
\bibitem [{\citenamefont {Begu{\v{s}}i{\'c}}\ \emph {et~al.}(2023)\citenamefont {Begu{\v{s}}i{\'c}}, \citenamefont {Hejazi},\ and\ \citenamefont {Chan}}]{beguvsic2023simulating}%
  \BibitemOpen
  \bibfield  {author} {\bibinfo {author} {\bibfnamefont {T.}~\bibnamefont {Begu{\v{s}}i{\'c}}}, \bibinfo {author} {\bibfnamefont {K.}~\bibnamefont {Hejazi}},\ and\ \bibinfo {author} {\bibfnamefont {G.~K.}\ \bibnamefont {Chan}},\ }\bibfield  {title} {\bibinfo {title} {Simulating quantum circuit expectation values by clifford perturbation theory},\ }\href@noop {} {\bibfield  {journal} {\bibinfo  {journal} {arXiv preprint arXiv:2306.04797}\ } (\bibinfo {year} {2023})}\BibitemShut {NoStop}%
\bibitem [{\citenamefont {Tindall}\ \emph {et~al.}(2024)\citenamefont {Tindall}, \citenamefont {Fishman}, \citenamefont {Stoudenmire},\ and\ \citenamefont {Sels}}]{tindall2024efficient}%
  \BibitemOpen
  \bibfield  {author} {\bibinfo {author} {\bibfnamefont {J.}~\bibnamefont {Tindall}}, \bibinfo {author} {\bibfnamefont {M.}~\bibnamefont {Fishman}}, \bibinfo {author} {\bibfnamefont {E.~M.}\ \bibnamefont {Stoudenmire}},\ and\ \bibinfo {author} {\bibfnamefont {D.}~\bibnamefont {Sels}},\ }\bibfield  {title} {\bibinfo {title} {Efficient tensor network simulation of ibm’s eagle kicked ising experiment},\ }\href@noop {} {\bibfield  {journal} {\bibinfo  {journal} {Prx quantum}\ }\textbf {\bibinfo {volume} {5}},\ \bibinfo {pages} {010308} (\bibinfo {year} {2024})}\BibitemShut {NoStop}%
\bibitem [{\citenamefont {Bravyi}\ \emph {et~al.}(2024)\citenamefont {Bravyi}, \citenamefont {Gosset},\ and\ \citenamefont {Liu}}]{bravyi2024classical}%
  \BibitemOpen
  \bibfield  {author} {\bibinfo {author} {\bibfnamefont {S.}~\bibnamefont {Bravyi}}, \bibinfo {author} {\bibfnamefont {D.}~\bibnamefont {Gosset}},\ and\ \bibinfo {author} {\bibfnamefont {Y.}~\bibnamefont {Liu}},\ }\bibfield  {title} {\bibinfo {title} {Classical simulation of peaked shallow quantum circuits},\ }in\ \href@noop {} {\emph {\bibinfo {booktitle} {Proceedings of the 56th Annual ACM Symposium on Theory of Computing}}}\ (\bibinfo {year} {2024})\ pp.\ \bibinfo {pages} {561--572}\BibitemShut {NoStop}%
\bibitem [{\citenamefont {Angrisani}\ \emph {et~al.}(2024)\citenamefont {Angrisani}, \citenamefont {Schmidhuber}, \citenamefont {Rudolph}, \citenamefont {Cerezo}, \citenamefont {Holmes},\ and\ \citenamefont {Huang}}]{angrisani2024classically}%
  \BibitemOpen
  \bibfield  {author} {\bibinfo {author} {\bibfnamefont {A.}~\bibnamefont {Angrisani}}, \bibinfo {author} {\bibfnamefont {A.}~\bibnamefont {Schmidhuber}}, \bibinfo {author} {\bibfnamefont {M.~S.}\ \bibnamefont {Rudolph}}, \bibinfo {author} {\bibfnamefont {M.}~\bibnamefont {Cerezo}}, \bibinfo {author} {\bibfnamefont {Z.}~\bibnamefont {Holmes}},\ and\ \bibinfo {author} {\bibfnamefont {H.-Y.}\ \bibnamefont {Huang}},\ }\bibfield  {title} {\bibinfo {title} {Classically estimating observables of noiseless quantum circuits},\ }\href@noop {} {\bibfield  {journal} {\bibinfo  {journal} {arXiv preprint arXiv:2409.01706}\ } (\bibinfo {year} {2024})}\BibitemShut {NoStop}%
\bibitem [{\citenamefont {Peruzzo}\ \emph {et~al.}(2014)\citenamefont {Peruzzo}, \citenamefont {McClean}, \citenamefont {Shadbolt}, \citenamefont {Yung}, \citenamefont {Zhou}, \citenamefont {Love}, \citenamefont {Aspuru-Guzik},\ and\ \citenamefont {O’brien}}]{peruzzo2014variational}%
  \BibitemOpen
  \bibfield  {author} {\bibinfo {author} {\bibfnamefont {A.}~\bibnamefont {Peruzzo}}, \bibinfo {author} {\bibfnamefont {J.}~\bibnamefont {McClean}}, \bibinfo {author} {\bibfnamefont {P.}~\bibnamefont {Shadbolt}}, \bibinfo {author} {\bibfnamefont {M.-H.}\ \bibnamefont {Yung}}, \bibinfo {author} {\bibfnamefont {X.-Q.}\ \bibnamefont {Zhou}}, \bibinfo {author} {\bibfnamefont {P.~J.}\ \bibnamefont {Love}}, \bibinfo {author} {\bibfnamefont {A.}~\bibnamefont {Aspuru-Guzik}},\ and\ \bibinfo {author} {\bibfnamefont {J.~L.}\ \bibnamefont {O’brien}},\ }\bibfield  {title} {\bibinfo {title} {A variational eigenvalue solver on a photonic quantum processor},\ }\href@noop {} {\bibfield  {journal} {\bibinfo  {journal} {Nature communications}\ }\textbf {\bibinfo {volume} {5}},\ \bibinfo {pages} {4213} (\bibinfo {year} {2014})}\BibitemShut {NoStop}%
\bibitem [{\citenamefont {Pappalardo}\ \emph {et~al.}(2024)\citenamefont {Pappalardo}, \citenamefont {Emeriau}, \citenamefont {de~Felice}, \citenamefont {Ventura}, \citenamefont {Jaunin}, \citenamefont {Yeung}, \citenamefont {Coecke},\ and\ \citenamefont {Mansfield}}]{pappalardo2024photonic}%
  \BibitemOpen
  \bibfield  {author} {\bibinfo {author} {\bibfnamefont {A.}~\bibnamefont {Pappalardo}}, \bibinfo {author} {\bibfnamefont {P.-E.}\ \bibnamefont {Emeriau}}, \bibinfo {author} {\bibfnamefont {G.}~\bibnamefont {de~Felice}}, \bibinfo {author} {\bibfnamefont {B.}~\bibnamefont {Ventura}}, \bibinfo {author} {\bibfnamefont {H.}~\bibnamefont {Jaunin}}, \bibinfo {author} {\bibfnamefont {R.}~\bibnamefont {Yeung}}, \bibinfo {author} {\bibfnamefont {B.}~\bibnamefont {Coecke}},\ and\ \bibinfo {author} {\bibfnamefont {S.}~\bibnamefont {Mansfield}},\ }\bibfield  {title} {\bibinfo {title} {A photonic parameter-shift rule: enabling gradient computation for photonic quantum computers},\ }\href@noop {} {\bibfield  {journal} {\bibinfo  {journal} {arXiv preprint arXiv:2410.02726}\ } (\bibinfo {year} {2024})}\BibitemShut {NoStop}%
\bibitem [{\citenamefont {Hoch}\ \emph {et~al.}(2024)\citenamefont {Hoch}, \citenamefont {Rodari}, \citenamefont {Giordani}, \citenamefont {Perret}, \citenamefont {Spagnolo}, \citenamefont {Carvacho}, \citenamefont {Pentangelo}, \citenamefont {Piacentini}, \citenamefont {Crespi}, \citenamefont {Ceccarelli} \emph {et~al.}}]{hoch2024variational}%
  \BibitemOpen
  \bibfield  {author} {\bibinfo {author} {\bibfnamefont {F.}~\bibnamefont {Hoch}}, \bibinfo {author} {\bibfnamefont {G.}~\bibnamefont {Rodari}}, \bibinfo {author} {\bibfnamefont {T.}~\bibnamefont {Giordani}}, \bibinfo {author} {\bibfnamefont {P.}~\bibnamefont {Perret}}, \bibinfo {author} {\bibfnamefont {N.}~\bibnamefont {Spagnolo}}, \bibinfo {author} {\bibfnamefont {G.}~\bibnamefont {Carvacho}}, \bibinfo {author} {\bibfnamefont {C.}~\bibnamefont {Pentangelo}}, \bibinfo {author} {\bibfnamefont {S.}~\bibnamefont {Piacentini}}, \bibinfo {author} {\bibfnamefont {A.}~\bibnamefont {Crespi}}, \bibinfo {author} {\bibfnamefont {F.}~\bibnamefont {Ceccarelli}}, \emph {et~al.},\ }\bibfield  {title} {\bibinfo {title} {Variational approach to photonic quantum circuits via the parameter shift rule},\ }\href@noop {} {\bibfield  {journal} {\bibinfo  {journal} {arXiv preprint arXiv:2410.06966}\ } (\bibinfo {year} {2024})}\BibitemShut {NoStop}%
\bibitem [{\citenamefont {Facelli}\ \emph {et~al.}(2024)\citenamefont {Facelli}, \citenamefont {Roberts}, \citenamefont {Wallner}, \citenamefont {Makarovskiy}, \citenamefont {Holmes},\ and\ \citenamefont {Clements}}]{facelli2024exact}%
  \BibitemOpen
  \bibfield  {author} {\bibinfo {author} {\bibfnamefont {G.}~\bibnamefont {Facelli}}, \bibinfo {author} {\bibfnamefont {D.~D.}\ \bibnamefont {Roberts}}, \bibinfo {author} {\bibfnamefont {H.}~\bibnamefont {Wallner}}, \bibinfo {author} {\bibfnamefont {A.}~\bibnamefont {Makarovskiy}}, \bibinfo {author} {\bibfnamefont {Z.}~\bibnamefont {Holmes}},\ and\ \bibinfo {author} {\bibfnamefont {W.~R.}\ \bibnamefont {Clements}},\ }\bibfield  {title} {\bibinfo {title} {Exact gradients for linear optics with single photons},\ }\href@noop {} {\bibfield  {journal} {\bibinfo  {journal} {arXiv preprint arXiv:2409.16369}\ } (\bibinfo {year} {2024})}\BibitemShut {NoStop}%
\bibitem [{\citenamefont {Baldazzi}\ \emph {et~al.}(2025)\citenamefont {Baldazzi}, \citenamefont {Pappalardo}, \citenamefont {Facelli},\ and\ \citenamefont {Vallone}}]{baldazzi2025four}%
  \BibitemOpen
  \bibfield  {author} {\bibinfo {author} {\bibfnamefont {G.}~\bibnamefont {Baldazzi}}, \bibinfo {author} {\bibfnamefont {F.}~\bibnamefont {Pappalardo}}, \bibinfo {author} {\bibfnamefont {M.}~\bibnamefont {Facelli}},\ and\ \bibinfo {author} {\bibfnamefont {G.}~\bibnamefont {Vallone}},\ }\bibfield  {title} {\bibinfo {title} {Four-qubit variational quantum algorithms in silicon photonics with integrated entangled photon sources},\ }\href@noop {} {\bibfield  {journal} {\bibinfo  {journal} {npj Quantum Information}\ }\textbf {\bibinfo {volume} {11}},\ \bibinfo {pages} {107} (\bibinfo {year} {2025})}\BibitemShut {NoStop}%
\bibitem [{\citenamefont {Maring}\ \emph {et~al.}(2024)\citenamefont {Maring}, \citenamefont {Fyrillas}, \citenamefont {Pont}, \citenamefont {Ivanov}, \citenamefont {Stepanov}, \citenamefont {Margaria}, \citenamefont {Hease}, \citenamefont {Pishchagin}, \citenamefont {Lema{\^\i}tre}, \citenamefont {Sagnes} \emph {et~al.}}]{maring2024versatile}%
  \BibitemOpen
  \bibfield  {author} {\bibinfo {author} {\bibfnamefont {N.}~\bibnamefont {Maring}}, \bibinfo {author} {\bibfnamefont {A.}~\bibnamefont {Fyrillas}}, \bibinfo {author} {\bibfnamefont {M.}~\bibnamefont {Pont}}, \bibinfo {author} {\bibfnamefont {E.}~\bibnamefont {Ivanov}}, \bibinfo {author} {\bibfnamefont {P.}~\bibnamefont {Stepanov}}, \bibinfo {author} {\bibfnamefont {N.}~\bibnamefont {Margaria}}, \bibinfo {author} {\bibfnamefont {W.}~\bibnamefont {Hease}}, \bibinfo {author} {\bibfnamefont {A.}~\bibnamefont {Pishchagin}}, \bibinfo {author} {\bibfnamefont {A.}~\bibnamefont {Lema{\^\i}tre}}, \bibinfo {author} {\bibfnamefont {I.}~\bibnamefont {Sagnes}}, \emph {et~al.},\ }\bibfield  {title} {\bibinfo {title} {A versatile single-photon-based quantum computing platform},\ }\href@noop {} {\bibfield  {journal} {\bibinfo  {journal} {Nature Photonics}\ }\textbf {\bibinfo {volume} {18}},\ \bibinfo {pages} {603} (\bibinfo {year} {2024})}\BibitemShut {NoStop}%
\bibitem [{\citenamefont {Schwarz}\ and\ \citenamefont {Nest}(2013)}]{schwarz2013simulating}%
  \BibitemOpen
  \bibfield  {author} {\bibinfo {author} {\bibfnamefont {M.}~\bibnamefont {Schwarz}}\ and\ \bibinfo {author} {\bibfnamefont {M.~V.~d.}\ \bibnamefont {Nest}},\ }\bibfield  {title} {\bibinfo {title} {Simulating quantum circuits with sparse output distributions},\ }\href@noop {} {\bibfield  {journal} {\bibinfo  {journal} {arXiv preprint arXiv:1310.6749}\ } (\bibinfo {year} {2013})}\BibitemShut {NoStop}%
\bibitem [{\citenamefont {Roga}\ and\ \citenamefont {Takeoka}(2020)}]{roga2020classical}%
  \BibitemOpen
  \bibfield  {author} {\bibinfo {author} {\bibfnamefont {W.}~\bibnamefont {Roga}}\ and\ \bibinfo {author} {\bibfnamefont {M.}~\bibnamefont {Takeoka}},\ }\bibfield  {title} {\bibinfo {title} {Classical simulation of boson sampling with sparse output},\ }\href@noop {} {\bibfield  {journal} {\bibinfo  {journal} {Scientific reports}\ }\textbf {\bibinfo {volume} {10}},\ \bibinfo {pages} {14739} (\bibinfo {year} {2020})}\BibitemShut {NoStop}%
\bibitem [{\citenamefont {Kolarovszki}\ \emph {et~al.}(2023)\citenamefont {Kolarovszki}, \citenamefont {Kaposi}, \citenamefont {Kozsik},\ and\ \citenamefont {Zimbor{\'a}s}}]{kolarovszki2023simulating}%
  \BibitemOpen
  \bibfield  {author} {\bibinfo {author} {\bibfnamefont {Z.}~\bibnamefont {Kolarovszki}}, \bibinfo {author} {\bibfnamefont {{\'A}.}~\bibnamefont {Kaposi}}, \bibinfo {author} {\bibfnamefont {T.}~\bibnamefont {Kozsik}},\ and\ \bibinfo {author} {\bibfnamefont {Z.}~\bibnamefont {Zimbor{\'a}s}},\ }\bibfield  {title} {\bibinfo {title} {Simulating sparse and shallow gaussian boson sampling},\ }in\ \href@noop {} {\emph {\bibinfo {booktitle} {International Conference on Computational Science}}}\ (\bibinfo {organization} {Springer},\ \bibinfo {year} {2023})\ pp.\ \bibinfo {pages} {209--223}\BibitemShut {NoStop}%
\bibitem [{\citenamefont {Lim}\ and\ \citenamefont {Oh}(2023)}]{lim2023approximating}%
  \BibitemOpen
  \bibfield  {author} {\bibinfo {author} {\bibfnamefont {Y.}~\bibnamefont {Lim}}\ and\ \bibinfo {author} {\bibfnamefont {C.}~\bibnamefont {Oh}},\ }\bibfield  {title} {\bibinfo {title} {Approximating outcome probabilities of linear optical circuits},\ }\href@noop {} {\bibfield  {journal} {\bibinfo  {journal} {npj Quantum Information}\ }\textbf {\bibinfo {volume} {9}},\ \bibinfo {pages} {124} (\bibinfo {year} {2023})}\BibitemShut {NoStop}%
\bibitem [{\citenamefont {Aaronson}(2022)}]{aaronson2022much}%
  \BibitemOpen
  \bibfield  {author} {\bibinfo {author} {\bibfnamefont {S.}~\bibnamefont {Aaronson}},\ }\bibfield  {title} {\bibinfo {title} {How much structure is needed for huge quantum speedups?},\ }\href@noop {} {\bibfield  {journal} {\bibinfo  {journal} {arXiv preprint arXiv:2209.06930}\ } (\bibinfo {year} {2022})}\BibitemShut {NoStop}%
\bibitem [{\citenamefont {Aaronson}\ and\ \citenamefont {Zhang}(2024)}]{aaronson2024verifiable}%
  \BibitemOpen
  \bibfield  {author} {\bibinfo {author} {\bibfnamefont {S.}~\bibnamefont {Aaronson}}\ and\ \bibinfo {author} {\bibfnamefont {Y.}~\bibnamefont {Zhang}},\ }\bibfield  {title} {\bibinfo {title} {On verifiable quantum advantage with peaked circuit sampling},\ }\href@noop {} {\bibfield  {journal} {\bibinfo  {journal} {arXiv preprint arXiv:2404.14493}\ } (\bibinfo {year} {2024})}\BibitemShut {NoStop}%
\bibitem [{\citenamefont {Kushilevitz}\ and\ \citenamefont {Mansour}(1991)}]{kushilevitz1991learning}%
  \BibitemOpen
  \bibfield  {author} {\bibinfo {author} {\bibfnamefont {E.}~\bibnamefont {Kushilevitz}}\ and\ \bibinfo {author} {\bibfnamefont {Y.}~\bibnamefont {Mansour}},\ }\bibfield  {title} {\bibinfo {title} {Learning decision trees using the fourier spectrum},\ }in\ \href@noop {} {\emph {\bibinfo {booktitle} {Proceedings of the twenty-third annual ACM symposium on Theory of computing}}}\ (\bibinfo {year} {1991})\ pp.\ \bibinfo {pages} {455--464}\BibitemShut {NoStop}%
\bibitem [{\citenamefont {Gurvits}(2005)}]{gurvits2005complexity}%
  \BibitemOpen
  \bibfield  {author} {\bibinfo {author} {\bibfnamefont {L.}~\bibnamefont {Gurvits}},\ }\bibfield  {title} {\bibinfo {title} {On the complexity of mixed discriminants and related problems},\ }in\ \href@noop {} {\emph {\bibinfo {booktitle} {Mathematical Foundations of Computer Science 2005: 30th International Symposium, MFCS 2005, Gdansk, Poland, August 29--September 2, 2005. Proceedings 30}}}\ (\bibinfo {organization} {Springer},\ \bibinfo {year} {2005})\ pp.\ \bibinfo {pages} {447--458}\BibitemShut {NoStop}%
\bibitem [{\citenamefont {Barvinok}(2016)}]{barvinok2016combinatorics}%
  \BibitemOpen
  \bibfield  {author} {\bibinfo {author} {\bibfnamefont {A.}~\bibnamefont {Barvinok}},\ }\href@noop {} {\emph {\bibinfo {title} {Combinatorics and complexity of partition functions}}},\ Vol.~\bibinfo {volume} {30}\ (\bibinfo  {publisher} {Springer},\ \bibinfo {year} {2016})\BibitemShut {NoStop}%
\bibitem [{\citenamefont {Quesada}(2019)}]{quesada2019franck}%
  \BibitemOpen
  \bibfield  {author} {\bibinfo {author} {\bibfnamefont {N.}~\bibnamefont {Quesada}},\ }\bibfield  {title} {\bibinfo {title} {Franck-condon factors by counting perfect matchings of graphs with loops},\ }\href@noop {} {\bibfield  {journal} {\bibinfo  {journal} {The Journal of chemical physics}\ }\textbf {\bibinfo {volume} {150}} (\bibinfo {year} {2019})}\BibitemShut {NoStop}%
\bibitem [{\citenamefont {Drummond}\ \emph {et~al.}(2022)\citenamefont {Drummond}, \citenamefont {Opanchuk}, \citenamefont {Dellios},\ and\ \citenamefont {Reid}}]{drummond2022simulating}%
  \BibitemOpen
  \bibfield  {author} {\bibinfo {author} {\bibfnamefont {P.~D.}\ \bibnamefont {Drummond}}, \bibinfo {author} {\bibfnamefont {B.}~\bibnamefont {Opanchuk}}, \bibinfo {author} {\bibfnamefont {A.}~\bibnamefont {Dellios}},\ and\ \bibinfo {author} {\bibfnamefont {M.~D.}\ \bibnamefont {Reid}},\ }\bibfield  {title} {\bibinfo {title} {Simulating complex networks in phase space: Gaussian boson sampling},\ }\href@noop {} {\bibfield  {journal} {\bibinfo  {journal} {Physical Review A}\ }\textbf {\bibinfo {volume} {105}},\ \bibinfo {pages} {012427} (\bibinfo {year} {2022})}\BibitemShut {NoStop}%
\bibitem [{\citenamefont {Singh}\ \emph {et~al.}(2025)\citenamefont {Singh}, \citenamefont {Muraleedharan}, \citenamefont {Fu}, \citenamefont {Cheng}, \citenamefont {Newton}, \citenamefont {Rohde},\ and\ \citenamefont {Brennen}}]{singh2023proof}%
  \BibitemOpen
  \bibfield  {author} {\bibinfo {author} {\bibfnamefont {D.}~\bibnamefont {Singh}}, \bibinfo {author} {\bibfnamefont {G.}~\bibnamefont {Muraleedharan}}, \bibinfo {author} {\bibfnamefont {B.}~\bibnamefont {Fu}}, \bibinfo {author} {\bibfnamefont {C.-M.}\ \bibnamefont {Cheng}}, \bibinfo {author} {\bibfnamefont {N.~R.}\ \bibnamefont {Newton}}, \bibinfo {author} {\bibfnamefont {P.~P.}\ \bibnamefont {Rohde}},\ and\ \bibinfo {author} {\bibfnamefont {G.~K.}\ \bibnamefont {Brennen}},\ }\bibfield  {title} {\bibinfo {title} {Proof-of-work consensus by quantum sampling},\ }\href@noop {} {\bibfield  {journal} {\bibinfo  {journal} {Quantum Science and Technology}\ }\textbf {\bibinfo {volume} {10}},\ \bibinfo {pages} {025020} (\bibinfo {year} {2025})}\BibitemShut {NoStop}%
\bibitem [{\citenamefont {Seron}\ \emph {et~al.}(2024)\citenamefont {Seron}, \citenamefont {Novo}, \citenamefont {Arkhipov},\ and\ \citenamefont {Cerf}}]{seron2024efficient}%
  \BibitemOpen
  \bibfield  {author} {\bibinfo {author} {\bibfnamefont {B.}~\bibnamefont {Seron}}, \bibinfo {author} {\bibfnamefont {L.}~\bibnamefont {Novo}}, \bibinfo {author} {\bibfnamefont {A.}~\bibnamefont {Arkhipov}},\ and\ \bibinfo {author} {\bibfnamefont {N.~J.}\ \bibnamefont {Cerf}},\ }\bibfield  {title} {\bibinfo {title} {Efficient validation of boson sampling from binned photon-number distributions},\ }\href@noop {} {\bibfield  {journal} {\bibinfo  {journal} {Quantum}\ }\textbf {\bibinfo {volume} {8}},\ \bibinfo {pages} {1479} (\bibinfo {year} {2024})}\BibitemShut {NoStop}%
\bibitem [{\citenamefont {Anguita}\ \emph {et~al.}(2025)\citenamefont {Anguita}, \citenamefont {Camillini}, \citenamefont {Marzban}, \citenamefont {Robbio}, \citenamefont {Seron}, \citenamefont {Novo},\ and\ \citenamefont {Renema}}]{anguita2025experimental}%
  \BibitemOpen
  \bibfield  {author} {\bibinfo {author} {\bibfnamefont {M.~C.}\ \bibnamefont {Anguita}}, \bibinfo {author} {\bibfnamefont {A.}~\bibnamefont {Camillini}}, \bibinfo {author} {\bibfnamefont {S.}~\bibnamefont {Marzban}}, \bibinfo {author} {\bibfnamefont {M.}~\bibnamefont {Robbio}}, \bibinfo {author} {\bibfnamefont {B.}~\bibnamefont {Seron}}, \bibinfo {author} {\bibfnamefont {L.}~\bibnamefont {Novo}},\ and\ \bibinfo {author} {\bibfnamefont {J.~J.}\ \bibnamefont {Renema}},\ }\bibfield  {title} {\bibinfo {title} {Experimental validation of boson sampling using detector binning},\ }\href@noop {} {\bibfield  {journal} {\bibinfo  {journal} {arXiv preprint arXiv:2502.05093}\ } (\bibinfo {year} {2025})}\BibitemShut {NoStop}%
\bibitem [{\citenamefont {Bravyi}\ and\ \citenamefont {Gosset}(2016)}]{bravyi2016improved}%
  \BibitemOpen
  \bibfield  {author} {\bibinfo {author} {\bibfnamefont {S.}~\bibnamefont {Bravyi}}\ and\ \bibinfo {author} {\bibfnamefont {D.}~\bibnamefont {Gosset}},\ }\bibfield  {title} {\bibinfo {title} {Improved classical simulation of quantum circuits dominated by clifford gates},\ }\href@noop {} {\bibfield  {journal} {\bibinfo  {journal} {Physical review letters}\ }\textbf {\bibinfo {volume} {116}},\ \bibinfo {pages} {250501} (\bibinfo {year} {2016})}\BibitemShut {NoStop}%
\bibitem [{\citenamefont {Bravyi}\ \emph {et~al.}(2019)\citenamefont {Bravyi}, \citenamefont {Browne}, \citenamefont {Calpin}, \citenamefont {Campbell}, \citenamefont {Gosset},\ and\ \citenamefont {Howard}}]{bravyi2019simulation}%
  \BibitemOpen
  \bibfield  {author} {\bibinfo {author} {\bibfnamefont {S.}~\bibnamefont {Bravyi}}, \bibinfo {author} {\bibfnamefont {D.}~\bibnamefont {Browne}}, \bibinfo {author} {\bibfnamefont {P.}~\bibnamefont {Calpin}}, \bibinfo {author} {\bibfnamefont {E.}~\bibnamefont {Campbell}}, \bibinfo {author} {\bibfnamefont {D.}~\bibnamefont {Gosset}},\ and\ \bibinfo {author} {\bibfnamefont {M.}~\bibnamefont {Howard}},\ }\bibfield  {title} {\bibinfo {title} {Simulation of quantum circuits by low-rank stabilizer decompositions},\ }\href@noop {} {\bibfield  {journal} {\bibinfo  {journal} {Quantum}\ }\textbf {\bibinfo {volume} {3}},\ \bibinfo {pages} {181} (\bibinfo {year} {2019})}\BibitemShut {NoStop}%
\bibitem [{\citenamefont {Cahill}\ and\ \citenamefont {Glauber}(1969)}]{cahill1969density}%
  \BibitemOpen
  \bibfield  {author} {\bibinfo {author} {\bibfnamefont {K.~E.}\ \bibnamefont {Cahill}}\ and\ \bibinfo {author} {\bibfnamefont {R.~J.}\ \bibnamefont {Glauber}},\ }\bibfield  {title} {\bibinfo {title} {Density operators and quasiprobability distributions},\ }\href@noop {} {\bibfield  {journal} {\bibinfo  {journal} {Physical Review}\ }\textbf {\bibinfo {volume} {177}},\ \bibinfo {pages} {1882} (\bibinfo {year} {1969})}\BibitemShut {NoStop}%
\bibitem [{\citenamefont {Ferraro}\ \emph {et~al.}(2005)\citenamefont {Ferraro}, \citenamefont {Olivares},\ and\ \citenamefont {Paris}}]{ferraro2005gaussian}%
  \BibitemOpen
  \bibfield  {author} {\bibinfo {author} {\bibfnamefont {A.}~\bibnamefont {Ferraro}}, \bibinfo {author} {\bibfnamefont {S.}~\bibnamefont {Olivares}},\ and\ \bibinfo {author} {\bibfnamefont {M.~G.}\ \bibnamefont {Paris}},\ }\bibfield  {title} {\bibinfo {title} {Gaussian states in continuous variable quantum information},\ }\href@noop {} {\bibfield  {journal} {\bibinfo  {journal} {arXiv preprint quant-ph/0503237}\ } (\bibinfo {year} {2005})}\BibitemShut {NoStop}%
\bibitem [{\citenamefont {Serafini}(2017)}]{serafini2017quantum}%
  \BibitemOpen
  \bibfield  {author} {\bibinfo {author} {\bibfnamefont {A.}~\bibnamefont {Serafini}},\ }\href@noop {} {\emph {\bibinfo {title} {Quantum continuous variables: a primer of theoretical methods}}}\ (\bibinfo  {publisher} {CRC press},\ \bibinfo {year} {2017})\BibitemShut {NoStop}%
\bibitem [{\citenamefont {Mari}\ and\ \citenamefont {Eisert}(2012)}]{mari2012positive}%
  \BibitemOpen
  \bibfield  {author} {\bibinfo {author} {\bibfnamefont {A.}~\bibnamefont {Mari}}\ and\ \bibinfo {author} {\bibfnamefont {J.}~\bibnamefont {Eisert}},\ }\bibfield  {title} {\bibinfo {title} {Positive wigner functions render classical simulation of quantum computation efficient},\ }\href@noop {} {\bibfield  {journal} {\bibinfo  {journal} {Physical review letters}\ }\textbf {\bibinfo {volume} {109}},\ \bibinfo {pages} {230503} (\bibinfo {year} {2012})}\BibitemShut {NoStop}%
\bibitem [{\citenamefont {Rahimi-Keshari}\ \emph {et~al.}(2016)\citenamefont {Rahimi-Keshari}, \citenamefont {Ralph},\ and\ \citenamefont {Caves}}]{rahimi2016sufficient}%
  \BibitemOpen
  \bibfield  {author} {\bibinfo {author} {\bibfnamefont {S.}~\bibnamefont {Rahimi-Keshari}}, \bibinfo {author} {\bibfnamefont {T.~C.}\ \bibnamefont {Ralph}},\ and\ \bibinfo {author} {\bibfnamefont {C.~M.}\ \bibnamefont {Caves}},\ }\bibfield  {title} {\bibinfo {title} {Sufficient conditions for efficient classical simulation of quantum optics},\ }\href@noop {} {\bibfield  {journal} {\bibinfo  {journal} {Physical Review X}\ }\textbf {\bibinfo {volume} {6}},\ \bibinfo {pages} {021039} (\bibinfo {year} {2016})}\BibitemShut {NoStop}%
\bibitem [{foo()}]{footnote1}%
  \BibitemOpen
  \href@noop {} {\ }\bibinfo {note} {Since the central interest lies in the scaling behavior of computational cost with respect to the number of modes or qubits, sampling or computing in a single-mode system is typically assumed to be efficient, i.e., achievable in $O(1)$ time~\cite{mari2012positive,rahimi2016sufficient}.}\BibitemShut {Stop}%
\bibitem [{\citenamefont {Jerrum}\ \emph {et~al.}(1986)\citenamefont {Jerrum}, \citenamefont {Valiant},\ and\ \citenamefont {Vazirani}}]{jerrum1986random}%
  \BibitemOpen
  \bibfield  {author} {\bibinfo {author} {\bibfnamefont {M.~R.}\ \bibnamefont {Jerrum}}, \bibinfo {author} {\bibfnamefont {L.~G.}\ \bibnamefont {Valiant}},\ and\ \bibinfo {author} {\bibfnamefont {V.~V.}\ \bibnamefont {Vazirani}},\ }\bibfield  {title} {\bibinfo {title} {Random generation of combinatorial structures from a uniform distribution},\ }\href@noop {} {\bibfield  {journal} {\bibinfo  {journal} {Theoretical computer science}\ }\textbf {\bibinfo {volume} {43}},\ \bibinfo {pages} {169} (\bibinfo {year} {1986})}\BibitemShut {NoStop}%
\bibitem [{\citenamefont {Shang}\ \emph {et~al.}(2024)\citenamefont {Shang}, \citenamefont {Zhong}, \citenamefont {Zhang}, \citenamefont {Yu}, \citenamefont {Yuan}, \citenamefont {Lu}, \citenamefont {Pan},\ and\ \citenamefont {Chen}}]{shang2024boson}%
  \BibitemOpen
  \bibfield  {author} {\bibinfo {author} {\bibfnamefont {Z.-X.}\ \bibnamefont {Shang}}, \bibinfo {author} {\bibfnamefont {H.-S.}\ \bibnamefont {Zhong}}, \bibinfo {author} {\bibfnamefont {Y.-K.}\ \bibnamefont {Zhang}}, \bibinfo {author} {\bibfnamefont {C.-C.}\ \bibnamefont {Yu}}, \bibinfo {author} {\bibfnamefont {X.}~\bibnamefont {Yuan}}, \bibinfo {author} {\bibfnamefont {C.-Y.}\ \bibnamefont {Lu}}, \bibinfo {author} {\bibfnamefont {J.-W.}\ \bibnamefont {Pan}},\ and\ \bibinfo {author} {\bibfnamefont {M.-C.}\ \bibnamefont {Chen}},\ }\bibfield  {title} {\bibinfo {title} {Boson sampling enhanced quantum chemistry},\ }\href@noop {} {\bibfield  {journal} {\bibinfo  {journal} {arXiv preprint arXiv:2403.16698}\ } (\bibinfo {year} {2024})}\BibitemShut {NoStop}%
\bibitem [{\citenamefont {Ebadi}\ \emph {et~al.}(2021)\citenamefont {Ebadi}, \citenamefont {Wang}, \citenamefont {Levine},\ and\ \citenamefont {et~al.}}]{ebadi2021quantum}%
  \BibitemOpen
  \bibfield  {author} {\bibinfo {author} {\bibfnamefont {S.}~\bibnamefont {Ebadi}}, \bibinfo {author} {\bibfnamefont {T.~T.}\ \bibnamefont {Wang}}, \bibinfo {author} {\bibfnamefont {H.}~\bibnamefont {Levine}},\ and\ \bibinfo {author} {\bibnamefont {et~al.}},\ }\bibfield  {title} {\bibinfo {title} {Quantum phases of matter on a 256-atom programmable quantum simulator},\ }\href@noop {} {\bibfield  {journal} {\bibinfo  {journal} {Nature}\ }\textbf {\bibinfo {volume} {595}},\ \bibinfo {pages} {227} (\bibinfo {year} {2021})}\BibitemShut {NoStop}%
\bibitem [{\citenamefont {Bernien}\ \emph {et~al.}(2025)\citenamefont {Bernien}, \citenamefont {Ebadi}, \citenamefont {Wang},\ and\ \citenamefont {et~al.}}]{bernien2025coarsening}%
  \BibitemOpen
  \bibfield  {author} {\bibinfo {author} {\bibfnamefont {H.}~\bibnamefont {Bernien}}, \bibinfo {author} {\bibfnamefont {S.}~\bibnamefont {Ebadi}}, \bibinfo {author} {\bibfnamefont {T.~T.}\ \bibnamefont {Wang}},\ and\ \bibinfo {author} {\bibnamefont {et~al.}},\ }\bibfield  {title} {\bibinfo {title} {Quantum coarsening and collective dynamics on a programmable quantum simulator},\ }\href@noop {} {\bibfield  {journal} {\bibinfo  {journal} {Nature}\ }\textbf {\bibinfo {volume} {638}},\ \bibinfo {pages} {86} (\bibinfo {year} {2025})}\BibitemShut {NoStop}%
\bibitem [{\citenamefont {Zhang}\ \emph {et~al.}(2022)\citenamefont {Zhang}, \citenamefont {Kokail}, \citenamefont {Blatt}, \citenamefont {Roos},\ and\ \citenamefont {et~al.}}]{zhang2022fermi}%
  \BibitemOpen
  \bibfield  {author} {\bibinfo {author} {\bibfnamefont {Y.}~\bibnamefont {Zhang}}, \bibinfo {author} {\bibfnamefont {C.}~\bibnamefont {Kokail}}, \bibinfo {author} {\bibfnamefont {R.}~\bibnamefont {Blatt}}, \bibinfo {author} {\bibfnamefont {C.~F.}\ \bibnamefont {Roos}},\ and\ \bibinfo {author} {\bibnamefont {et~al.}},\ }\bibfield  {title} {\bibinfo {title} {Observing ground-state properties of the fermi--hubbard model using a scalable algorithm on a quantum computer},\ }\href@noop {} {\bibfield  {journal} {\bibinfo  {journal} {Nature Communications}\ }\textbf {\bibinfo {volume} {13}},\ \bibinfo {pages} {5743} (\bibinfo {year} {2022})}\BibitemShut {NoStop}%
\bibitem [{\citenamefont {Deb}\ \emph {et~al.}(2022)\citenamefont {Deb}, \citenamefont {Rodriguez}, \citenamefont {Pagano},\ and\ \citenamefont {Monroe}}]{deb2022rabi}%
  \BibitemOpen
  \bibfield  {author} {\bibinfo {author} {\bibfnamefont {N.}~\bibnamefont {Deb}}, \bibinfo {author} {\bibfnamefont {A.}~\bibnamefont {Rodriguez}}, \bibinfo {author} {\bibfnamefont {G.}~\bibnamefont {Pagano}},\ and\ \bibinfo {author} {\bibfnamefont {C.}~\bibnamefont {Monroe}},\ }\bibfield  {title} {\bibinfo {title} {Experimental realization of the rabi--hubbard model with trapped ions},\ }\href@noop {} {\bibfield  {journal} {\bibinfo  {journal} {Physical Review Letters}\ }\textbf {\bibinfo {volume} {128}},\ \bibinfo {pages} {160504} (\bibinfo {year} {2022})}\BibitemShut {NoStop}%
\bibitem [{\citenamefont {Rahman}\ \emph {et~al.}(2022)\citenamefont {Rahman}, \citenamefont {Zhang}, \citenamefont {Gu},\ and\ \citenamefont {et~al.}}]{rahman2022digital}%
  \BibitemOpen
  \bibfield  {author} {\bibinfo {author} {\bibfnamefont {M.~I.}\ \bibnamefont {Rahman}}, \bibinfo {author} {\bibfnamefont {Y.}~\bibnamefont {Zhang}}, \bibinfo {author} {\bibfnamefont {X.}~\bibnamefont {Gu}},\ and\ \bibinfo {author} {\bibnamefont {et~al.}},\ }\bibfield  {title} {\bibinfo {title} {Digital quantum simulation of floquet symmetry-protected topological phases},\ }\href@noop {} {\bibfield  {journal} {\bibinfo  {journal} {Nature}\ }\textbf {\bibinfo {volume} {607}},\ \bibinfo {pages} {468} (\bibinfo {year} {2022})}\BibitemShut {NoStop}%
\bibitem [{\citenamefont {Li}\ \emph {et~al.}(2024)\citenamefont {Li}, \citenamefont {Xu}, \citenamefont {Song},\ and\ \citenamefont {et~al.}}]{li2024ising}%
  \BibitemOpen
  \bibfield  {author} {\bibinfo {author} {\bibfnamefont {H.}~\bibnamefont {Li}}, \bibinfo {author} {\bibfnamefont {K.}~\bibnamefont {Xu}}, \bibinfo {author} {\bibfnamefont {C.}~\bibnamefont {Song}},\ and\ \bibinfo {author} {\bibnamefont {et~al.}},\ }\bibfield  {title} {\bibinfo {title} {Ising meson spectroscopy on a noisy digital quantum simulator},\ }\href@noop {} {\bibfield  {journal} {\bibinfo  {journal} {Nature Communications}\ }\textbf {\bibinfo {volume} {15}},\ \bibinfo {pages} {5901} (\bibinfo {year} {2024})}\BibitemShut {NoStop}%
\bibitem [{\citenamefont {Smith}\ \emph {et~al.}(2019)\citenamefont {Smith}, \citenamefont {Kim}, \citenamefont {Pollmann},\ and\ \citenamefont {Knolle}}]{smith2019simulating}%
  \BibitemOpen
  \bibfield  {author} {\bibinfo {author} {\bibfnamefont {A.}~\bibnamefont {Smith}}, \bibinfo {author} {\bibfnamefont {I.}~\bibnamefont {Kim}}, \bibinfo {author} {\bibfnamefont {F.}~\bibnamefont {Pollmann}},\ and\ \bibinfo {author} {\bibfnamefont {J.}~\bibnamefont {Knolle}},\ }\bibfield  {title} {\bibinfo {title} {Simulating quantum many-body dynamics on a current digital quantum computer},\ }\href@noop {} {\bibfield  {journal} {\bibinfo  {journal} {npj Quantum Information}\ }\textbf {\bibinfo {volume} {5}},\ \bibinfo {pages} {106} (\bibinfo {year} {2019})}\BibitemShut {NoStop}%
\bibitem [{\citenamefont {Pashayan}\ \emph {et~al.}(2020)\citenamefont {Pashayan}, \citenamefont {Bartlett},\ and\ \citenamefont {Gross}}]{pashayan2020estimation}%
  \BibitemOpen
  \bibfield  {author} {\bibinfo {author} {\bibfnamefont {H.}~\bibnamefont {Pashayan}}, \bibinfo {author} {\bibfnamefont {S.~D.}\ \bibnamefont {Bartlett}},\ and\ \bibinfo {author} {\bibfnamefont {D.}~\bibnamefont {Gross}},\ }\bibfield  {title} {\bibinfo {title} {From estimation of quantum probabilities to simulation of quantum circuits},\ }\href@noop {} {\bibfield  {journal} {\bibinfo  {journal} {Quantum}\ }\textbf {\bibinfo {volume} {4}},\ \bibinfo {pages} {223} (\bibinfo {year} {2020})}\BibitemShut {NoStop}%
\bibitem [{\citenamefont {Pashayan}\ \emph {et~al.}(2015)\citenamefont {Pashayan}, \citenamefont {Wallman},\ and\ \citenamefont {Bartlett}}]{pashayan2015estimating}%
  \BibitemOpen
  \bibfield  {author} {\bibinfo {author} {\bibfnamefont {H.}~\bibnamefont {Pashayan}}, \bibinfo {author} {\bibfnamefont {J.~J.}\ \bibnamefont {Wallman}},\ and\ \bibinfo {author} {\bibfnamefont {S.~D.}\ \bibnamefont {Bartlett}},\ }\bibfield  {title} {\bibinfo {title} {Estimating outcome probabilities of quantum circuits using quasiprobabilities},\ }\href@noop {} {\bibfield  {journal} {\bibinfo  {journal} {Physical review letters}\ }\textbf {\bibinfo {volume} {115}},\ \bibinfo {pages} {070501} (\bibinfo {year} {2015})}\BibitemShut {NoStop}%
\bibitem [{\citenamefont {Chabaud}\ and\ \citenamefont {Walschaers}(2023)}]{chabaud2023resources}%
  \BibitemOpen
  \bibfield  {author} {\bibinfo {author} {\bibfnamefont {U.}~\bibnamefont {Chabaud}}\ and\ \bibinfo {author} {\bibfnamefont {M.}~\bibnamefont {Walschaers}},\ }\bibfield  {title} {\bibinfo {title} {Resources for bosonic quantum computational advantage},\ }\href@noop {} {\bibfield  {journal} {\bibinfo  {journal} {Physical Review Letters}\ }\textbf {\bibinfo {volume} {130}},\ \bibinfo {pages} {090602} (\bibinfo {year} {2023})}\BibitemShut {NoStop}%
\bibitem [{\citenamefont {Husimi}(1940)}]{husimi1940some}%
  \BibitemOpen
  \bibfield  {author} {\bibinfo {author} {\bibfnamefont {K.}~\bibnamefont {Husimi}},\ }\bibfield  {title} {\bibinfo {title} {Some formal properties of the density matrix},\ }\href@noop {} {\bibfield  {journal} {\bibinfo  {journal} {Proceedings of the Physico-Mathematical Society of Japan. 3rd Series}\ }\textbf {\bibinfo {volume} {22}},\ \bibinfo {pages} {264} (\bibinfo {year} {1940})}\BibitemShut {NoStop}%
\bibitem [{\citenamefont {Aaronson}\ and\ \citenamefont {Hance}(2012)}]{aaronson2012generalizing}%
  \BibitemOpen
  \bibfield  {author} {\bibinfo {author} {\bibfnamefont {S.}~\bibnamefont {Aaronson}}\ and\ \bibinfo {author} {\bibfnamefont {T.}~\bibnamefont {Hance}},\ }\bibfield  {title} {\bibinfo {title} {Generalizing and derandomizing gurvits's approximation algorithm for the permanent},\ }\href@noop {} {\bibfield  {journal} {\bibinfo  {journal} {arXiv preprint arXiv:1212.0025}\ } (\bibinfo {year} {2012})}\BibitemShut {NoStop}%
\bibitem [{\citenamefont {Ivanov}\ and\ \citenamefont {Gurvits}(2020)}]{ivanov2020complexity}%
  \BibitemOpen
  \bibfield  {author} {\bibinfo {author} {\bibfnamefont {D.~A.}\ \bibnamefont {Ivanov}}\ and\ \bibinfo {author} {\bibfnamefont {L.}~\bibnamefont {Gurvits}},\ }\bibfield  {title} {\bibinfo {title} {Complexity of full counting statistics of free quantum particles in product states},\ }\href@noop {} {\bibfield  {journal} {\bibinfo  {journal} {Physical Review A}\ }\textbf {\bibinfo {volume} {101}},\ \bibinfo {pages} {012303} (\bibinfo {year} {2020})}\BibitemShut {NoStop}%
\bibitem [{\citenamefont {Jozsa}\ and\ \citenamefont {Nest}(2013)}]{jozsa2013classical}%
  \BibitemOpen
  \bibfield  {author} {\bibinfo {author} {\bibfnamefont {R.}~\bibnamefont {Jozsa}}\ and\ \bibinfo {author} {\bibfnamefont {M.~V.~d.}\ \bibnamefont {Nest}},\ }\bibfield  {title} {\bibinfo {title} {Classical simulation complexity of extended clifford circuits},\ }\href@noop {} {\bibfield  {journal} {\bibinfo  {journal} {arXiv preprint arXiv:1305.6190}\ } (\bibinfo {year} {2013})}\BibitemShut {NoStop}%
\bibitem [{\citenamefont {Aaronson}\ and\ \citenamefont {Gottesman}(2004)}]{aaronson2004improved}%
  \BibitemOpen
  \bibfield  {author} {\bibinfo {author} {\bibfnamefont {S.}~\bibnamefont {Aaronson}}\ and\ \bibinfo {author} {\bibfnamefont {D.}~\bibnamefont {Gottesman}},\ }\bibfield  {title} {\bibinfo {title} {Improved simulation of stabilizer circuits},\ }\href@noop {} {\bibfield  {journal} {\bibinfo  {journal} {Physical Review A—Atomic, Molecular, and Optical Physics}\ }\textbf {\bibinfo {volume} {70}},\ \bibinfo {pages} {052328} (\bibinfo {year} {2004})}\BibitemShut {NoStop}%
\bibitem [{\citenamefont {Lloyd}\ and\ \citenamefont {Braunstein}(1999)}]{lloyd1999quantum}%
  \BibitemOpen
  \bibfield  {author} {\bibinfo {author} {\bibfnamefont {S.}~\bibnamefont {Lloyd}}\ and\ \bibinfo {author} {\bibfnamefont {S.~L.}\ \bibnamefont {Braunstein}},\ }\bibfield  {title} {\bibinfo {title} {Quantum computation over continuous variables},\ }\href@noop {} {\bibfield  {journal} {\bibinfo  {journal} {Physical Review Letters}\ }\textbf {\bibinfo {volume} {82}},\ \bibinfo {pages} {1784} (\bibinfo {year} {1999})}\BibitemShut {NoStop}%
\end{thebibliography}%

\end{document}